\documentclass[11pt]{article}

\usepackage[
    a4paper,
    left=2.5cm,
    right=2.5cm,
    top=2.8cm,
    bottom=2.8cm
]{geometry}

\usepackage[T1]{fontenc}
\usepackage[utf8]{inputenc}
\usepackage{lmodern}
\usepackage{microtype}
\usepackage{xcolor}
\usepackage{amsmath}
\numberwithin{equation}{section}
\usepackage{amsthm}
\usepackage{amssymb}
\usepackage{empheq}
\usepackage{physics}
\usepackage{mathtools}
\usepackage{enumitem}
\usepackage{xspace}
\usepackage{dsfont}
\usepackage[colorlinks=true, linkcolor=black, urlcolor=black, citecolor=blue]{hyperref}
\usepackage[backend=bibtex, style=alphabetic, giveninits=true, isbn=false, url=false, eprint=true]{biblatex}
\usepackage{booktabs}

\newcommand{\C}{\mathbb{C}}

\newcommand{\N}{\mathbb{N}}

\newcommand{\T}{\mathbb{T}}
\newcommand{\Z}{\mathbb{Z}}
\newcommand{\R}{\mathbb{R}}

\newcommand{\1}{\mathds{1}}

\newcommand{\cB}{\mathcal{B}}
\newcommand{\cD}{\mathcal{D}}

\newcommand{\cG}{\mathcal{G}}

\newcommand{\cL}{\mathcal{L}}

\newcommand{\cC}{\mathcal{C}}

\newcommand{\cS}{\mathcal{S}}
\newcommand{\cT}{\mathcal{T}}

\newcommand{\mi}{\mathrm{i}}
\newcommand{\me}{\mathrm{e}}
\newcommand{\md}{\mathrm{d}}

\newtheorem{theorem}{Theorem}[section]

\newtheorem{proposition}[theorem]{Proposition}
\newtheorem{lemma}[theorem]{Lemma}

\theoremstyle{definition}

\DeclareMathOperator{\dom}{Dom}

\DeclareMathOperator{\ran}{Ran}
\DeclareMathOperator{\supp}{supp}

\title{Trace-class spectra of irreducible\\
Gaussian quantum Markov semigroups}

\author{
Franco Fagnola$^{1}$ \qquad Zheng Li$^{2,*}$
\\[1.2ex]
\small $^{1}$Dipartimento di Matematica, Politecnico di Milano\\
\small Piazza Leonardo da Vinci 32, 20133 Milano, Italy
\\[0.6ex]
\small $^{2}$School of Mathematics and Statistics, Central South University\\
\small Changsha 410083, China
\\[0.8ex]
\small \href{mailto:franco.fagnola@polimi.it}{franco.fagnola@polimi.it}
\qquad
\href{mailto:zheng.li@csu.edu.cn}{zheng.li@csu.edu.cn}
\\[0.4ex]
\small $^{*}$Corresponding author
}

\date{}

\allowdisplaybreaks

\begin{document}

    \maketitle

    \begin{abstract}
        We determine the spectrum of the predual generator of an irreducible Gaussian quantum Markov semigroup on the trace-class operators over a finite-mode bosonic Fock space in the stable, strictly unstable, and periodic critical drift regimes. For stable drift, irreducibility is equivalent to the existence of a unique faithful normal invariant state. The spectrum and approximate point spectrum are the closed left half-plane, whereas the point spectrum is the open left half-plane together with zero. Thus the polynomial eigenvalues generated by the drift matrix do not exhaust the trace-class point spectrum. If the drift has an eigenvalue with strictly positive real part, the spectrum is again the closed left half-plane, but the point spectrum is empty and the open left half-plane together with zero belongs to the residual spectrum. For periodic critical drift, we obtain an explicit spectral formula that includes the displacement parameter and yields horizontal half-lines or parabolic regions. The proofs use characteristic functions, Gaussian diffusion semigroups, and bounded eigenoperators of the dual semigroup. The nonperiodic critical drift case remains open.
    \end{abstract}

    \medskip
    {\small
    \noindent\textbf{Keywords.} Gaussian quantum Markov semigroups; quadratic Liouvillians; $ L^1 $-spectrum; irreducibility.\par
    \smallskip
    \noindent\textbf{2020 Mathematics Subject Classification.} Primary 47A10, 81S22; Secondary 47D07.\par
    }

    \section{Introduction} \label{section-introduction}

    Gaussian quantum Markov semigroups describe the evolution of open bosonic systems with Hamiltonians that are at most quadratic in creation and annihilation operators and with linear Lindblad operators. Their Schr\"odinger-picture generators are commonly called \emph{quadratic bosonic Liouvillians}; the corresponding dynamics is also described as \emph{quasi-free} \cite{ProsenSeligman2010BosonOperatorSpaces,BarthelZhang2022QuasiFree}. They are quantum counterparts of classical Ornstein--Uhlenbeck semigroups, and their action on Weyl operators is determined by finite-dimensional drift, diffusion, and displacement parameters; see \cite{agredo2022decoherence,fagnola2025spectral,girotti2025gaussian}. This finite-dimensional description makes Gaussian models accessible to explicit analysis. However, it does not reduce every spectral question for their infinite-dimensional generators to a matrix calculation.

    Let $ \mathsf{h} = \Gamma(\C^d) $, with $ d \ge 1 $, be the $ d $-mode bosonic Fock space, and let $ (\cT_t)_{t \ge 0} $ be a Gaussian quantum Markov semigroup on the bounded operators $ \cB(\mathsf{h}) $. Its predual $ (\cT_{*t})_{t \ge 0} $ is a strongly continuous contraction semigroup on the Banach space $ \cB_1(\mathsf{h}) $ of trace-class operators, equipped with the trace norm $ \norm{Y}_1 = \Tr\sqrt{Y^*Y} $. We study its generator $ \cL_* $. Throughout the paper, the term \emph{trace-class spectrum}, or $ L^1 $-spectrum, refers to this closed realization. In particular, it refers to the evolution of all trace-class operators, including density operators representing normal states. The generator, its domain, and the spectral notation are specified in Section \ref{section-spectral-notation}.

    The dependence of the spectrum on the ambient space is already apparent in the classical theory. For a finite-dimensional Ornstein--Uhlenbeck semigroup with stable drift and a nondegenerate invariant Gaussian measure, the spectrum on $ L^p $ of that measure, with $ 1 < p < \infty $, consists of nonnegative integer combinations of the drift eigenvalues \cite{MetafunePallaraPriola2002}. The situation on unweighted $ L^p $ spaces is substantially different \cite{Metafune2001LpSpectrumOU}. Fornaro, Metafune, Pallara, and Schnaubelt obtained a complete description for degenerate hypoelliptic Ornstein--Uhlenbeck operators on unweighted $ L^p $ spaces and on $ C_0 $ \cite{FMPS21}. These results motivate the present problem, but their proofs cannot simply be transferred through a phase-space representation: a function satisfying the transformed eigenvalue equation need not be the characteristic function of a trace-class operator.

    In the quantum setting, discrete $ L^2 $-spectral descriptions are available for suitable Hilbert-space realizations. Cipriani, Fagnola, and Lindsay established spectral results for symmetric quantum Ornstein--Uhlenbeck semigroups \cite{CiprianiFagnolaLindsay2000}. The spectral gaps of Gaussian semigroups in the Gelfand--Naimark--Segal (GNS) and Kubo--Martin--Schwinger (KMS) embeddings were computed in \cite{fagnola2025spectral}; conditions for the existence of the KMS gap were studied in \cite{Li2025KMS}. The one-mode spectral analysis for the embeddings
    \begin{equation*}
        X \mapsto \rho^{s/2} X \rho^{(1-s)/2}, \quad 0 \le s \le 1,
    \end{equation*}
    where $ \rho $ is a faithful normal invariant density, was developed in \cite{fagnola_li_2025spectral}. These embeddings induce semigroups on the Hilbert space $ \cB_2(\mathsf{h}) $ of Hilbert--Schmidt operators; the spectra of their generators are the corresponding $ L^2 $-spectra. Such $ L^2 $-spectral results concern the generators induced by the embeddings and do not determine the $ L^1 $-spectrum of $ \cL_* $ on $ \cB_1(\mathsf{h}) $.

    \subsection{Polynomial eigenvalues and the choice of operator space}

    Algebraic methods, including damping bases and third quantization, provide useful families of eigenoperators for quadratic bosonic generators \cite{ProsenSeligman2010BosonOperatorSpaces,HondaNakazatoYoshida2010SpectralResolution,BarthelZhang2022QuasiFree}. Let $ \mathbf{Z} \in M_{2d}(\R) $ be the real drift matrix in the Weyl evolution formula \eqref{eq-gaussian-weyl-action}, and let $ \lambda_1,\ldots,\lambda_{2d} $ be its eigenvalues, counted with algebraic multiplicity. With $ \N := \{0,1,2,\ldots\} $, the associated family of \emph{polynomial eigenvalues} is
    \begin{equation} \label{eq-polynomial-eigenvalues}
        \left\{ \sum_{j=1}^{2d} n_j \lambda_j : n_1,\ldots,n_{2d} \in \N \right\}.
    \end{equation}
    Here ``polynomial'' refers to the finite-degree constructions in creation and annihilation operators, or to polynomial factors multiplying a Gaussian characteristic function. For stable drift, this family is countable and locally finite. Its construction does not by itself determine the spectrum of a closed operator on the trace-class space.

    The operator spaces used in these constructions require particular attention. Prosen and Seligman construct dual operator spaces, introduce an $ \ell^2 $ topology on their coefficient sequences, and describe decay modes by occupation numbers \cite[Sections 2--3]{ProsenSeligman2010BosonOperatorSpaces}. Barthel and Zhang obtain a triangular representation for quasi-free bosonic Liouvillians, assuming the existence of a physical Gaussian steady state, with diagonal entries given by occupation-number combinations of drift eigenvalues \cite[Section IV.5, Proposition 11]{BarthelZhang2022QuasiFree}. On an infinite-dimensional Banach space, identifying such diagonal entries with the spectrum of a closed operator requires control of its domain and resolvent. Even density of the span of eigenvectors does not, by itself, exclude further spectral values or further eigenvalues. Theorem \ref{theorem-stable-spectrum} shows that \eqref{eq-polynomial-eigenvalues} cannot exhaust the point spectrum of the stable irreducible generator on $ \cB_1(\mathsf{h}) $.

    The topology also matters for biorthogonal expansions. Honda, Nakazato, and Yoshida introduce a Fr\'echet test space of operators that remain trace class after left multiplication by arbitrary powers of the annihilation operator and right multiplication by arbitrary powers of the creation operator \cite[Section IV, Eq.~(23)]{HondaNakazatoYoshida2010SpectralResolution}. Their unbounded left eigenoperators act as generalized eigenvectors in the dual of this test space. They need not define continuous functionals on $ \cB_1(\mathsf{h}) $, whose Banach dual is $ \cB(\mathsf{h}) $. An expansion in this rigged Hilbert-space setting therefore does not, by itself, give a spectral resolution on the whole trace-class space. Likewise, density in the Hilbert--Schmidt norm does not imply density in the trace norm: $ \norm{Y}_2 \le \norm{Y}_1 $, whereas the reverse bound fails in infinite dimension.

    A related issue concerns transformations that remove diffusion. McDonald and Clerk derive such a transformation in \cite[Section II.C]{McDonaldClerk2023ThirdQuantization}; they explicitly set aside some operator-space questions and also note that the inverse transformation is unbounded \cite[Notes 18 and 22]{McDonaldClerk2023ThirdQuantization}. For closed operators $ A $ and $ B $ on a Banach space, the usual spectral invariance under similarity requires a bounded invertible operator $ U $ satisfying
    \begin{equation*}
        U(\dom B) = \dom A, \qquad AUx = UBx, \quad x \in \dom B.
    \end{equation*}
    Then $ \lambda-A = U(\lambda-B)U^{-1} $, so their resolvents, and hence their spectra, correspond. A conjugation involving an unbounded inverse does not meet these hypotheses and requires a separate analysis of the operator domains and norms.

    Quintela Rodr\'iguez discusses Gaussian conjugation on test functions and then derives additive eigenvalues from monomials \cite[Appendix A.1--A.2]{QuintelaRodriguez2026Exceptional}. Pointwise invertibility of a Gaussian multiplier does not establish that it acts as an automorphism of a fixed function space: multiplication by an increasing Gaussian need not preserve the Schwartz class. Nor does the monomial calculation exclude additional trace-class eigenoperators. Our analysis makes the relevant distinction explicit: the closed generator is fixed on $ \cB_1(\mathsf{h}) $, and the eigenoperator constructions are justified in trace norm. The resulting $ L^1 $-spectrum is compatible with finite-degree calculations and with spectral results established in other operator spaces, but is not determined by them.

    \subsection{Main results and scope}

    Our standing assumption is \emph{irreducibility}: the only orthogonal projections $ p $ satisfying $ \cT_t(p) \ge p $ for every $ t \ge 0 $ are $ 0 $ and $ \1 $. Fagnola and Girotti characterize this property by a condition on complex invariant subspaces of the drift and quantum diffusion matrices \cite[Theorem 12]{FagnolaGirotti2026Irreducibility}; we recall it in Proposition \ref{proposition-irreducibility}. It is stronger than the real diffusion nondegeneracy condition used in classical hypoellipticity. When the drift is stable, irreducibility is equivalent to the existence of a unique faithful normal invariant state \cite[Proposition 15]{girotti2025gaussian}; see Proposition \ref{proposition-stable-invariant-state}.

    We classify the drift by its spectral abscissa
    \begin{equation*}
        s(\mathbf{Z}) := \max\{ \Re\lambda : \lambda \in \sigma(\mathbf{Z}) \}.
    \end{equation*}
    We call the drift \emph{stable} if $ s(\mathbf{Z}) < 0 $, \emph{strictly unstable} if $ s(\mathbf{Z}) > 0 $, and \emph{critical} if $ s(\mathbf{Z}) = 0 $. Thus strict instability requires only one eigenvalue with positive real part and allows other eigenvalues in the closed left half-plane.

    For both stable and strictly unstable drift, the spectrum of $ \cL_* $ is the closed left half-plane. The spectral types, however, differ as shown in Table \ref{tab-stable-unstable-spectrum}. In the stable case, every point in the open left half-plane is an eigenvalue, and zero is the only eigenvalue on the imaginary axis. In the strictly unstable case, there are no trace-class eigenvectors; the open left half-plane and zero belong to the residual spectrum. The whole imaginary axis belongs to the approximate point spectrum in both regimes. These assertions are proved in Theorems \ref{theorem-stable-spectrum}, \ref{theorem-whole-spectrum-strictly-unstable-drift}, and \ref{theorem-unstable-spectral-types}.

    \begin{table}[htbp]
        \centering
        \small
        \renewcommand{\arraystretch}{1.35}
        \begin{tabular}{@{}p{3.1cm}p{5.0cm}p{5.1cm}@{}}
            \toprule
            Spectral region & Stable drift & Strictly unstable drift \\
            \midrule
            $ \{\eta : \Re\eta < 0\} $ & Point spectrum & Residual spectrum \\
            $ \{0\} $ & Point spectrum & Residual and approximate point spectrum \\
            $ \mi\R\setminus\{0\} $ & Approximate point spectrum; no eigenvalues & Approximate point spectrum; no eigenvalues \\
            \bottomrule
        \end{tabular}
        \caption{Spectral properties of $ \cL_* $ under irreducibility. In both regimes, $ \sigma(\cL_*) = \{\eta \in \C : \Re\eta \le 0\} $. Approximate point spectrum and residual spectrum need not be disjoint; see Section \ref{section-spectral-notation}.}
        \label{tab-stable-unstable-spectrum}
    \end{table}

    For critical drift, this paper treats the \emph{periodic} case, meaning that $ \me^{r\mathbf{Z}} = \1 $ for some $ r > 0 $. This includes $ \mathbf{Z} = 0 $, for which every positive time is a period. If $ \mathbf{Z} \neq 0 $, there is a least positive period $ \tau $. Theorem \ref{theorem-spectrum-predual-generator-periodic-critical-drift} expresses the spectrum in terms of the accumulated diffusion and displacement over one period. Depending on the displacement, the spectrum is a union of horizontal half-lines or of parabolic regions. For zero drift, Theorem \ref{theorem-zero-drift-case-spectrum} and Proposition \ref{proposition-zero-drift-shape} give a single half-line or a single parabolic region. Table \ref{tab-periodic-spectrum} summarizes these formulas after the necessary notation has been introduced.

    The remaining critical case, in which $ \me^{t\mathbf{Z}} \neq \1 $ for every $ t > 0 $, is left open. It includes critical matrices with stable directions, nontrivial Jordan blocks on the imaginary axis, and diagonalizable purely imaginary spectra with incommensurable frequencies. We make no claim of a complete classification in this regime; Section \ref{section-critical-drift-aperiodic} states the unresolved problem precisely.

    The stable result also clarifies the meaning of relaxation rates. If $ \eta < 0 $ and $ \cL_*(Y_\eta) = \eta Y_\eta \neq 0 $, trace preservation gives $ \Tr Y_\eta = 0 $, and
    \begin{equation*}
        \norm{\cT_{*t}(Y_\eta)}_1 = \me^{\eta t}\norm{Y_\eta}_1.
    \end{equation*}
    Since $ \eta $ can approach zero, there are no constants $ M < \infty $ and $ g > 0 $ such that
    \begin{equation*}
        \norm{\cT_{*t}(Y) - (\Tr Y)\rho}_1 \le M\me^{-gt}\norm{Y}_1,
        \quad Y \in \cB_1(\mathsf{h}), \quad t \ge 0.
    \end{equation*}
    This concerns a uniform estimate on the whole trace-class space. Rates for weighted Hilbert-space realizations, finite-degree observables, or classes of initial states with additional bounds remain separate questions.

    The proof methods reflect the spectral distinction. For stable drift, division by the invariant Gaussian characteristic function reduces the evolution to deterministic transport. Fractional powers of a suitable quadratic form then yield eigenfunctions, and an auxiliary Gaussian diffusion semigroup realizes them as characteristic functions of trace-class operators. For strictly unstable drift, integration of Weyl operators over an expanding invariant subspace produces bounded dual eigenoperators. Diffusion along expanding trajectories excludes trace-class eigenvectors. In the periodic case, the semigroup at a return time becomes a Gaussian average of Weyl conjugations; its spectrum and an averaging argument over one period determine the generator spectrum.

    Section \ref{section-gaussian-framework} sets out the Gaussian framework. Sections \ref{eq-stable-drift}--\ref{section-critical-drift} state the spectral results and explain their proofs. Section \ref{section-critical-drift-aperiodic} discusses the open case. Appendix \ref{section-real-linear-operators} records the real-linear conventions, and Appendices \ref{appendix-stable-proofs}--\ref{appendix-periodic-proofs} contain the detailed arguments.

    \section{Gaussian framework and irreducibility} \label{section-gaussian-framework}

    We use the Fock-space and Weyl-operator conventions of \cite{fagnola2025spectral,fagnola_li_2025spectral,Li2025KMS}. The semigroup will be specified by its action on Weyl operators, which avoids interpreting an unbounded GKSL expression outside its natural domain.

    \subsection{Fock space, Weyl operators, and Gaussian states}

    Let $ \mathsf{h} = \Gamma(\C^d) \simeq \Gamma(\C)^{\otimes d} $, with canonical orthonormal basis
    \begin{equation*}
        \{e(n_1,\ldots,n_d) : n_1,\ldots,n_d \in \N\}.
    \end{equation*}
    The annihilation and creation operators act on the finite linear span of this basis by
    \begin{align*}
        a_j e(n_1,\ldots,n_d) &= \sqrt{n_j}\,e(n_1,\ldots,n_j-1,\ldots,n_d), \\
        a_j^\dagger e(n_1,\ldots,n_d) &= \sqrt{n_j+1}\,e(n_1,\ldots,n_j+1,\ldots,n_d).
    \end{align*}
    The first expression is zero when $ n_j = 0 $. On this common invariant domain, $ [a_j,a_k^\dagger] = \delta_{jk}\1 $. Scalar products are linear in the second variable. For $ z \in \C^d $, write
    \begin{equation*}
        a(z) := \sum_{j=1}^d \overline{z_j}a_j, \qquad
        a^\dagger(z) := \sum_{j=1}^d z_j a_j^\dagger.
    \end{equation*}
    The field operator
    \begin{equation*}
        p(z) := \frac{\mi}{\sqrt{2}}\bigl(a^\dagger(z)-a(z)\bigr)
    \end{equation*}
    is essentially self-adjoint on the finite-particle domain. We use the same notation for its closure and define $ W(z) := \me^{-\mi\sqrt{2}p(z)} $. The Weyl relations are
    \begin{equation*}
        W(z)W(w) = \me^{-\mi\Im\langle z,w\rangle}W(z+w), \qquad W(z)^* = W(-z).
    \end{equation*}
    The map $ z \mapsto W(z) $ is strongly continuous, and the linear span of the Weyl operators is $ \sigma $-weakly dense in $ \cB(\mathsf{h}) $.

    We identify $ \C^d $, regarded as a real Hilbert space, with $ \R^{2d} $ by
    \begin{equation} \label{eq-definition-iota}
        \iota(z) = \mathbf{z} := \begin{bmatrix}\Re z\\ \Im z\end{bmatrix}.
    \end{equation}
    The real matrix representing a real-linear operator $ S $ is denoted by $ \mathbf{S} $, and $ S^\sharp $ is its adjoint with respect to $ \Re\langle\cdot,\cdot\rangle $. Thus $ S^\sharp $ is represented by $ \mathbf{S}^T $; explicit formulas are given in Appendix \ref{section-real-linear-operators}. We write
    \begin{equation*}
        \mathbf{J} := \begin{bmatrix}0&\1_d\\-\1_d&0\end{bmatrix},
        \qquad \Im\langle z,w\rangle = \mathbf{z}^T\mathbf{J}\mathbf{w}.
    \end{equation*}
    The associated real-linear operator is
    \begin{equation*}
        Jz = -\mi z, \qquad J^\sharp = J^{-1} = -J.
    \end{equation*}
    The symbols $ \abs{z} $ and $ \abs{\mathbf{z}} $ denote the corresponding Euclidean norms. A function on $ \C^d $ will also be viewed as a function on $ \R^{2d} $ through $ \iota $, without a change of notation.

    The operator norm on $ \cB(\mathsf{h}) $ is denoted by $ \norm{\cdot} $. For $ Y \in \cB_1(\mathsf{h}) $, its \emph{characteristic function} is
    \begin{equation*}
        \chi_Y(z) := \Tr(YW(z)), \qquad z \in \C^d.
    \end{equation*}
    It is continuous and satisfies $ \abs{\chi_Y(z)} \le \norm{Y}_1 $. Continuity follows from finite-rank approximation and strong continuity of the Weyl operators. Their $ \sigma $-weak density also implies injectivity: $ \chi_Y = 0 $ entails $ Y = 0 $.

    A normal state is identified with its positive trace-class density of trace one; it is \emph{faithful} if that density has trivial kernel. A normal state $ \rho $ is \emph{Gaussian} if its characteristic function has the form
    \begin{equation} \label{eq-characteristic-function-rho}
        \chi_\rho(z) = \exp{2\mi\Im\langle\omega,z\rangle - \frac{1}{2}\Re\langle z,Sz\rangle},
        \qquad z \in \C^d,
    \end{equation}
    where $ \omega \in \C^d $ is its mean vector and $ S $ is its covariance operator. We write $ \boldsymbol{\omega}:=\iota(\omega) $. The real covariance matrix $ \mathbf{S} $ is symmetric and satisfies $ \mathbf{S}+\mi\mathbf{J} \ge 0 $. Conversely, this condition characterizes Gaussian covariances; the state is faithful precisely when $ \mathbf{S}+\mi\mathbf{J} > 0 $ \cite{parthasarathy2010gaussian,holevo2011probabilistic}. Since $ \mathbf{S} $ and $ \mathbf{J} $ are real, either condition is equivalent to the corresponding condition with $ -\mi\mathbf{J} $. Every Gaussian characteristic function is nowhere zero.

    We shall also use exponential vectors
    \begin{equation*}
        e(\alpha) := \bigoplus_{n=0}^{\infty}\frac{\alpha^{\otimes n}}{\sqrt{n!}},
        \qquad \norm{e(\alpha)}^2 = \me^{\abs{\alpha}^2}, \quad \alpha \in \C^d.
    \end{equation*}
    The vacuum is $ \Omega = e(0) $. The normalized coherent state $ \me^{-\abs{\alpha}^2}\ketbra{e(\alpha)}{e(\alpha)} $ has characteristic function
    \begin{equation} \label{eq-coherent-characteristic-function}
        \me^{-\abs{\alpha}^2}\langle e(\alpha),W(z)e(\alpha)\rangle
        = \exp{-\frac{1}{2}\abs{z}^2+2\mi\Im\langle\alpha,z\rangle}.
    \end{equation}

    \subsection{Gaussian semigroups}

    A \emph{quantum Markov semigroup} is a semigroup of normal, completely positive, identity-preserving maps on $ \cB(\mathsf{h}) $ that is continuous in the $ \sigma $-weak topology. Its predual is characterized by
    \begin{equation} \label{eq-predual-pairing}
        \Tr(\cT_{*t}(Y)X) = \Tr(Y\cT_t(X)),
        \quad Y \in \cB_1(\mathsf{h}), \quad X \in \cB(\mathsf{h}).
    \end{equation}
    It is a strongly continuous contraction semigroup and preserves the trace. A QMS is Gaussian if its predual preserves Gaussian states. This is equivalent to the Weyl-operator description below \cite{Poletti2022Characterization}; the action formula is proved in \cite[Theorem 2.4]{agredo2022decoherence}.

    \begin{theorem} \label{theo-explicit-action-on-weyl-operators}
        There are real-linear operators $ Z,C:\C^d\to\C^d $ and a vector $ \zeta \in \C^d $ such that
        \begin{equation} \label{eq-gaussian-weyl-action}
            \cT_t(W(z)) = \phi_t(z)W(\me^{tZ}z), \quad t \ge 0, \quad z \in \C^d,
        \end{equation}
        where
        \begin{equation} \label{eq-defintion-phi-t}
            \phi_t(z) := \exp{-\frac{1}{2}\int_0^t \Re\langle\me^{sZ}z,C\me^{sZ}z\rangle\,\md s
            +\mi\int_0^t\Re\langle\zeta,\me^{sZ}z\rangle\,\md s}.
        \end{equation}
        The diffusion operator $ C $ is self-adjoint and positive semidefinite. The real matrices $ \mathbf{Z} $ and $ \mathbf{C} $ satisfy
        \begin{equation} \label{eq-quantum-diffusion-matrix}
            \mathbf{C}_{\mathbf{Z}} := \mathbf{C}+\mi(\mathbf{Z}^T\mathbf{J}+\mathbf{J}\mathbf{Z}) \ge 0.
        \end{equation}
    \end{theorem}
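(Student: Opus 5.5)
The plan is to derive the Weyl action formula from the defining property that the predual maps Gaussian states to Gaussian states, combined with the semigroup law and complete positivity.

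\emph{Step 1: covariance on a single Weyl operator.} First I fix $t$ and use that $\cT_{*t}$ maps every Gaussian state to a Gaussian state. Coherent states and their mixtures are Gaussian, with characteristic functions \eqref{eq-coherent-characteristic-function}. Applying the pairing \eqref{eq-predual-pairing} to these states gives
\begin{equation*}
\me^{-\abs{\alpha}^2}\langle e(\alpha),\cT_t(W(z))e(\alpha)\rangle = \chi_{\cT_{*t}(\rho_\alpha)}(z).
\end{equation*}
The right-hand side has the form \eqref{eq-characteristic-function-rho}, with mean and covariance depending on $\alpha$. The mean must depend affinely on $\alpha$, since convex mixtures of Gaussians must map to mixtures. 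The covariance must be independent of $\alpha$, because $\alpha\mapsto W(\alpha)\rho_0W(\alpha)^*$ translates the mean.

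Second, the matrix elements $\langle e(\alpha),Xe(\beta)\rangle$ are analytic in $\bar\alpha,\beta$, and they determine $X$. Comparing these elements identifies $\cT_t(W(z))=\phi_t(z)W(A_tz)$ for some real-linear $A_t$ and scalar $\phi_t(z)$ with $\abs{\phi_t}\le1$.

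\emph{Step 2: cocycle and generator.} Next I use the semigroup property. It gives $A_{t+s}=A_tA_s$. Continuity in $t$ then gives $A_t=\me^{tZ}$. The scalars satisfy the cocycle identity $\phi_{t+s}(z)=\phi_s(z)\phi_t(\me^{sZ}z)$. Each $\phi_t$ is a Gaussian characteristic-type function, namely the exponential of a quadratic form plus a linear term. Differentiating the cocycle at $s=0$ therefore yields a quadratic form $-\tfrac12\Re\langle z,Cz\rangle$ and a linear form $\mi\Re\langle\zeta,z\rangle$. Integrating along the flow $s\mapsto\me^{sZ}z$ produces \eqref{eq-defintion-phi-t}. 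Since $\abs{\phi_t}\le1$ for small $t$, we get $C\ge0$.

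\emph{Step 3: the positivity condition \eqref{eq-quantum-diffusion-matrix}.} This is the main obstacle. It must come from complete positivity, not mere positivity. I would use the positive-definiteness criterion for completely positive maps on Weyl operators. For every finite family $z_j$ and $c_j\in\C$, the matrix $[\Tr(\rho\,\cT_t(W(z_j)^*W(z_k)))]$ must dominate, in the appropriate Weyl-twisted sense, the corresponding matrix built from $\cT_t(W(z_j))^*\cT_t(W(z_k))$. Expanding both sides with the Weyl relations gives the kernel
\begin{equation*}
(z,w)\mapsto \frac{\phi_t(w-z)}{\overline{\phi_t(z)}\phi_t(w)}\,\me^{\mi\Im\langle z,w\rangle-\mi\Im\langle \me^{tZ}z,\me^{tZ}w\rangle},
\end{equation*}
which must be positive definite. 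I then take $t\to0$ and expand to first order. The kernel becomes $\exp$ of $t$ times a bilinear form, and by Schoenberg-type arguments positive definiteness forces that form to be positive semidefinite. Collecting the terms, the form equals $\mathbf{C}+\mi(\mathbf{Z}^T\mathbf{J}+\mathbf{J}\mathbf{Z})$, because $\tfrac{\md}{\md t}\bigl(\me^{tZ}\bigr)^T\mathbf{J}\me^{tZ}|_{t=0}=\mathbf{Z}^T\mathbf{J}+\mathbf{J}\mathbf{Z}$. This gives $\mathbf{C}_{\mathbf{Z}}\ge0$.

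Alternatively, I would cite \cite{Poletti2022Characterization,agredo2022decoherence} for the derivation, as the paper does.
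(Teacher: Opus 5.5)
The paper does not prove this statement itself. It quotes \eqref{eq-gaussian-weyl-action} from \cite[Theorem 2.4]{agredo2022decoherence}, where $Z$, $C$, $\zeta$ are computed from a GKSL generator with quadratic Hamiltonian and linear Lindblad operators. It uses \cite{Poletti2022Characterization} for the equivalence with preservation of Gaussian states, and cites \cite[Proposition 6]{fagnola2025spectral} for \eqref{eq-quantum-diffusion-matrix} as the complete-positivity condition. Your closing fallback is therefore exactly the paper's route. Your direct derivation is a genuinely different route, and Steps 2 and 3 work once Step 1 is granted. In Step 3 no approximation or Schoenberg argument is needed. The kernel equals $\exp(\mathbf z^TM_t\mathbf w)$ exactly, with $M_t=\mathbf S_t+\mi(\mathbf J-\me^{t\mathbf Z^T}\mathbf J\me^{t\mathbf Z})=\int_0^t\me^{s\mathbf Z^T}\overline{\mathbf C_{\mathbf Z}}\,\me^{s\mathbf Z}\,\md s$. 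The multivariate Kadison--Schwarz inequality makes $[\exp(\mathbf z_j^TM_t\mathbf z_k)-1]_{j,k}$ positive semidefinite, and the scaling $z\mapsto\epsilon z$ gives $M_t\ge0$. Finally $M_t/t\to\overline{\mathbf C_{\mathbf Z}}$, whose positivity is equivalent to that of $\mathbf C_{\mathbf Z}$.

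The genuine gap is Step 1, on which everything else rests. It is true a posteriori that $\cT_{*t}(\rho_\alpha)$ has an $\alpha$-independent covariance and a mean affine in $\alpha$, but your reasons do not prove it.
\begin{itemize}
\item Mixtures of coherent states are not Gaussian in general; only Gaussian-weighted averages $\int\rho_\alpha\,\mu(\md\alpha)$ are.
\item Linearity gives $\cT_{*t}\bigl(\int\rho_\alpha\,\mu(\md\alpha)\bigr)=\int\cT_{*t}(\rho_\alpha)\,\mu(\md\alpha)$. Deducing affinity and constancy from the Gaussianity of all such averages is precisely the nontrivial point, and you give no argument for it.
\item The translation argument is circular. $\rho_\alpha=W(\alpha)\rho_0W(\alpha)^*$ makes the inputs translates of one another. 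Concluding that the outputs share a covariance requires $\cT_{*t}$ to intertwine Weyl conjugations, up to a change of the translation vector. That covariance property is a consequence of \eqref{eq-gaussian-weyl-action}, not a hypothesis you have.
\end{itemize}
Without these two facts the Husimi comparison does not give $\cT_t(W(z))=\phi_t(z)W(A_tz)$, and Steps 2 and 3 have nothing to act on.

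One way to close the gap: let $\mu_{\beta,K}$ be the Gaussian measure with mean $\beta$ and covariance $K>0$. The function $v(\beta,K;z):=\int\chi_{\cT_{*t}(\rho_\alpha)}(z)\,\mu_{\beta,K}(\md\alpha)$ is the characteristic function of the image of a Gaussian state. Hence $v=\exp\bigl(\mi m^T\mathbf z-\tfrac12\mathbf z^T\Sigma\,\mathbf z\bigr)$ with $m,\Sigma$ depending on $(\beta,K)$. Moreover $v$ is smooth and satisfies the heat equation in $(\beta,K)$. In the resulting equation for $\log v$, the terms of degree four in $\mathbf z$ force $\nabla_\beta\Sigma=0$. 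The terms of degree two then force $\nabla_\beta m$ to be constant. Letting $K\to0$ transfers both properties to coherent inputs.
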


    We call $ Z $, $ C $, and $ \zeta $ the drift, diffusion, and displacement parameters, respectively, and $ \mathbf{C}_{\mathbf{Z}} $ the \emph{quantum diffusion matrix}. The positivity condition \eqref{eq-quantum-diffusion-matrix} is the Gaussian complete-positivity condition; see \cite[Proposition 6]{fagnola2025spectral}. Formula \eqref{eq-gaussian-weyl-action} determines the semigroup uniquely. We use this bounded-operator description throughout; the equivalent generalized GKSL description and its domain interpretation are given in \cite{agredo2022decoherence,fagnola2025spectral}.

    For later use, set $ \boldsymbol{\zeta} := \iota(\zeta) $ and define
    \begin{align}
        \mathbf{S}_t &:= \int_0^t \me^{s\mathbf{Z}^T}\mathbf{C}\me^{s\mathbf{Z}}\,\md s, \label{eq-definition-boldface-S-t} \\
        \mathbf{b}_t &:= \int_0^t \me^{s\mathbf{Z}^T}\boldsymbol{\zeta}\,\md s. \label{eq-definition-b-t}
    \end{align}
    Thus $ \phi_t(z) = \exp{-\mathbf{z}^T\mathbf{S}_t\mathbf{z}/2+\mi\mathbf{b}_t^T\mathbf{z}} $. The matrix $ \mathbf{S}_t $ is the accumulated diffusion; it is distinct from the covariance of an evolved state, which also includes the transported initial covariance. For every trace-class operator $ Y $, duality gives
    \begin{equation} \label{eq-evolution-predual-semigroup-characteristic-function}
        \chi_{\cT_{*t}(Y)}(z) = \phi_t(z)\chi_Y(\me^{tZ}z).
    \end{equation}
    This identity, together with injectivity of characteristic functions, is the main link between the semigroup and the phase-space calculations below.

    \subsection{Irreducibility and invariant states}

    An orthogonal projection $ p \in \cB(\mathsf{h}) $ is \emph{subharmonic} if $ \cT_t(p) \ge p $ for every $ t \ge 0 $. The QMS is \emph{irreducible} when its only subharmonic projections are $ 0 $ and $ \1 $. The following characterization combines \cite[Theorem 12]{FagnolaGirotti2026Irreducibility} with the standard finite-dimensional Gramian criterion; compare \cite[Appendix A, Lemma A.1]{FarkasLorenzi2009}.

    \begin{proposition} \label{proposition-irreducibility}
        The following conditions are equivalent:
        \begin{enumerate}
            \item The Gaussian QMS $ (\cT_t)_{t\ge0} $ is irreducible.
            \item The only complex $ \mathbf{Z} $-invariant subspace of $ \C^{2d} $ contained in $ \ker\mathbf{C}_{\mathbf{Z}} $ is $ \{0\} $.
            \item For some, equivalently every, $ t > 0 $,
            \begin{equation*}
                \int_0^t \me^{s\mathbf{Z}^T}\mathbf{C}_{\mathbf{Z}}\me^{s\mathbf{Z}}\,\md s > 0.
            \end{equation*}
            \item $ \displaystyle\bigcap_{j=0}^{2d-1}\ker\bigl(\mathbf{C}_{\mathbf{Z}}^{1/2}\mathbf{Z}^j\bigr) = \{0\} $.
        \end{enumerate}
        In the last three conditions, the real matrix $ \mathbf{Z} $ acts by complexification on $ \C^{2d} $.
    \end{proposition}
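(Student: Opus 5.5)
The plan is to take (1)$\Leftrightarrow$(2) from \cite[Theorem 12]{FagnolaGirotti2026Irreducibility}, which states irreducibility exactly as the absence of nonzero complex $\mathbf{Z}$-invariant subspaces inside $\ker\mathbf{C}_{\mathbf{Z}}$. The only point to check is that our conventions for $\mathbf{Z}$, $\mathbf{C}$, $\mathbf{J}$ and $\iota$ agree with theirs. A different sign of the symplectic term would replace $\mathbf{C}_{\mathbf{Z}}$ by its complex conjugate $\overline{\mathbf{C}_{\mathbf{Z}}}$. This is harmless: since $\mathbf{Z}$ is real, complex conjugation maps $\mathbf{Z}$-invariant subspaces of $\ker\mathbf{C}_{\mathbf{Z}}$ bijectively onto $\mathbf{Z}$-invariant subspaces of $\ker\overline{\mathbf{C}_{\mathbf{Z}}}$. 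The remaining equivalences (2)$\Leftrightarrow$(3)$\Leftrightarrow$(4) are pure finite-dimensional linear algebra for the Hermitian positive semidefinite matrix $\mathbf{C}_{\mathbf{Z}}$, which is positive by \eqref{eq-quantum-diffusion-matrix}, together with the real matrix $\mathbf{Z}$ acting on $\C^{2d}$.

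For (3)$\Leftrightarrow$(4), fix $t>0$ and $v\in\C^{2d}$. Then
\begin{equation*}
    v^*\Bigl(\int_0^t \me^{s\mathbf{Z}^T}\mathbf{C}_{\mathbf{Z}}\me^{s\mathbf{Z}}\,\md s\Bigr)v
    = \int_0^t \bigl|\mathbf{C}_{\mathbf{Z}}^{1/2}\me^{s\mathbf{Z}}v\bigr|^2\,\md s ,
\end{equation*}
using $(\me^{s\mathbf{Z}})^*=\me^{s\mathbf{Z}^T}$ because $\mathbf{Z}$ is real. The integrand is continuous and nonnegative, so the integral vanishes iff $\mathbf{C}_{\mathbf{Z}}^{1/2}\me^{s\mathbf{Z}}v=0$ for all $s\in[0,t]$. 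By real analyticity in $s$, this holds iff all derivatives at $s=0$ vanish, that is, iff $\mathbf{C}_{\mathbf{Z}}^{1/2}\mathbf{Z}^jv=0$ for all $j\ge0$. By Cayley--Hamilton it suffices to take $j\le 2d-1$. The resulting condition on $v$ does not depend on $t$, which gives the ``some, equivalently every'' clause. Moreover, the Gramian is positive definite exactly when the intersection of kernels in (4) is trivial.

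For (2)$\Leftrightarrow$(4), set $K:=\bigcap_{j=0}^{2d-1}\ker(\mathbf{C}_{\mathbf{Z}}^{1/2}\mathbf{Z}^j)$.
\begin{itemize}
    \item By Cayley--Hamilton again, $K=\bigcap_{j\ge0}\ker(\mathbf{C}_{\mathbf{Z}}^{1/2}\mathbf{Z}^j)$. This space is $\mathbf{Z}$-invariant, and it lies in $\ker\mathbf{C}_{\mathbf{Z}}^{1/2}=\ker\mathbf{C}_{\mathbf{Z}}$ (take $j=0$). Hence (2) forces $K=\{0\}$.
    \item Conversely, let $V\subseteq\ker\mathbf{C}_{\mathbf{Z}}$ be $\mathbf{Z}$-invariant. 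Then $\mathbf{Z}^jV\subseteq V\subseteq\ker\mathbf{C}_{\mathbf{Z}}^{1/2}$ for every $j$, so $V\subseteq K$. Hence (4) forces $V=\{0\}$.
\end{itemize}

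I expect no real obstacle beyond the convention matching in the first step. That step is needed because the cited theorem might be phrased with $\mathbf{Z}^T$, or with the opposite sign of $\mathbf{J}$. In either case one reduces to our formulation either by the complex-conjugation argument above or by the observation that these kernel conditions are stable under the relevant transpositions.
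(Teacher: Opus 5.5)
Your proposal is correct and follows essentially the paper's route. The paper also takes (1)$\Leftrightarrow$(2) directly from \cite[Theorem 12]{FagnolaGirotti2026Irreducibility} and treats the remaining equivalences as the standard finite-dimensional Gramian (observability) criterion, which your analyticity, Cayley--Hamilton and largest-invariant-subspace arguments carry out correctly.

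One caution about your closing hedge: replacing $\mathbf{Z}$ by $\mathbf{Z}^T$ alone does \emph{not} preserve condition (2). For example, take $d=1$, $\mathbf{Z}=\begin{bmatrix}0&1\\0&0\end{bmatrix}$ and $\mathbf{C}=\mathrm{diag}(1,0)$. Then $\mathbf{C}_{\mathbf{Z}}=\mathbf{C}$, and the line $\C e_2=\ker\mathbf{C}_{\mathbf{Z}}$ is $\mathbf{Z}^T$-invariant but not $\mathbf{Z}$-invariant. So convention matching is harmless only for complex conjugation or for a consistent change of coordinates $\mathbf{Z}\mapsto P\mathbf{Z}P^{-1}$, $\mathbf{C}_{\mathbf{Z}}\mapsto P^{-T}\mathbf{C}_{\mathbf{Z}}P^{-1}$.
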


    \emph{Throughout the remainder of the paper, the given semigroup $ (\cT_t)_{t\ge0} $ is assumed to be irreducible.} This implies
    \begin{equation} \label{eq-positive-diffusion-gramian}
        \mathbf{S}_t > 0, \qquad t > 0.
    \end{equation}
    Indeed, on a real vector $ \mathbf{z} $, the quadratic form of the integral in item 3 of Proposition \ref{proposition-irreducibility} equals $ \mathbf{z}^T\mathbf{S}_t\mathbf{z} $. Replacing $ \mathbf{C}_{\mathbf{Z}} $ by $ \mathbf{C} $ in the finite-dimensional criteria gives the usual diffusion nondegeneracy condition, which is weaker than irreducibility. We therefore keep these notions distinct. The difference between irreducibility and regularisation for Gaussian QMSs is discussed in \cite{FagnolaGirotti2026Irreducibility}.

    For stable drift, the invariant state supplies the normalization used in Section \ref{eq-stable-drift}. The next proposition records its parameters and the role of faithfulness; see \cite[Theorem 7]{fagnola2025spectral} and \cite[Theorem 13, Corollary 14, and Proposition 15]{girotti2025gaussian}.

    \begin{proposition} \label{proposition-stable-invariant-state}
        Suppose that $ s(\mathbf{Z}) < 0 $. Then the Gaussian QMS has a unique normal invariant state $ \rho $, which is Gaussian, with covariance and mean
        \begin{equation*}
            \mathbf{S} = \int_0^\infty \me^{s\mathbf{Z}^T}\mathbf{C}\me^{s\mathbf{Z}}\,\md s,
            \qquad \omega = -\frac{1}{2}J(Z^\sharp)^{-1}\zeta.
        \end{equation*}
        The following are equivalent: the semigroup is irreducible; $ \rho $ is faithful; and $ \mathbf{S}+\mi\mathbf{J} > 0 $. In particular, for stable drift, irreducibility is equivalent to the existence of a unique faithful normal invariant state.
    \end{proposition}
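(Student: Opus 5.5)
The plan is to work throughout with characteristic functions and formula \eqref{eq-evolution-predual-semigroup-characteristic-function}.

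\textbf{Existence and uniqueness.} For existence, I take the Gaussian candidate $\rho$ with covariance $\mathbf{S}=\int_0^\infty \me^{s\mathbf{Z}^T}\mathbf{C}\me^{s\mathbf{Z}}\,\md s$. This integral converges because $s(\mathbf{Z})<0$. Then I check invariance: $\phi_t(z)\chi_\rho(\me^{tZ}z)=\chi_\rho(z)$ for all $t$.
\begin{itemize}
\item \emph{Quadratic part.} It reduces to $\mathbf{S}_t+\me^{t\mathbf{Z}^T}\mathbf{S}\me^{t\mathbf{Z}}=\mathbf{S}$, which follows by splitting the integral at $t$.
\item \emph{Linear part.} It requires $\mathbf{b}_t+2\,\me^{t\mathbf{Z}^T}(\text{linear coefficient of }\omega)=2(\text{same})$. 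Differentiating at $t=0$ gives a linear equation involving $\mathbf{Z}^T$. This equation is solvable because $\mathbf{Z}$ is invertible, and translating back through the real-linear conventions of Appendix \ref{section-real-linear-operators} yields $\omega=-\tfrac12 J(Z^\sharp)^{-1}\zeta$.
\item \emph{Legitimacy.} One must verify that $\mathbf{S}+\mi\mathbf{J}\ge0$, so that $\rho$ is a genuine state. To see this, write $\mathbf{S}+\mi\mathbf{J}=\int_0^\infty \me^{s\mathbf{Z}^T}\mathbf{C}_{\mathbf{Z}}\me^{s\mathbf{Z}}\,\md s$. This identity holds because $\frac{\md}{\md s}\bigl(\me^{s\mathbf{Z}^T}\mathbf{J}\me^{s\mathbf{Z}}\bigr)=\me^{s\mathbf{Z}^T}(\mathbf{Z}^T\mathbf{J}+\mathbf{J}\mathbf{Z})\me^{s\mathbf{Z}}$ and the boundary term at infinity vanishes. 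Positivity then follows from \eqref{eq-quantum-diffusion-matrix}.
\end{itemize}
For uniqueness, let $\sigma$ be any normal invariant state. Then $\chi_\sigma(z)=\phi_t(z)\chi_\sigma(\me^{tZ}z)$. Letting $t\to\infty$, we have $\me^{tZ}z\to0$, and continuity of $\chi_\sigma$ with $\chi_\sigma(0)=1$ gives $\chi_\sigma=\lim\phi_t=\chi_\rho$. Injectivity of characteristic functions then gives $\sigma=\rho$. The same limit also shows that every normal state converges weakly to $\rho$.

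\textbf{Equivalences.} From the integral identity above,
\begin{equation*}
\mathbf{S}+\mi\mathbf{J}=\int_0^\infty \me^{s\mathbf{Z}^T}\mathbf{C}_{\mathbf{Z}}\me^{s\mathbf{Z}}\,\md s .
\end{equation*}
\begin{itemize}
\item If the semigroup is irreducible, item 3 of Proposition \ref{proposition-irreducibility} makes the integral over $[0,t]$ strictly positive. Since the tail is $\ge0$, this gives $\mathbf{S}+\mi\mathbf{J}>0$.
\item Conversely, suppose $\mathbf{S}+\mi\mathbf{J}>0$. If irreducibility failed, item 2 would give a nonzero complex $\mathbf{Z}$-invariant subspace $V\subset\ker\mathbf{C}_{\mathbf{Z}}$. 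For $v\in V$ we have $\me^{s\mathbf{Z}}v\in V$, so $v^*(\mathbf{S}+\mi\mathbf{J})v=0$, a contradiction.
\item Faithfulness of $\rho$ is equivalent to $\mathbf{S}+\mi\mathbf{J}>0$ by the cited Gaussian state criterion.
\end{itemize}
The final sentence of the proposition combines these equivalences with the uniqueness already proved.

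\textbf{Main obstacle.} The delicate step is the bookkeeping for the mean. Matching the real-matrix equation for $\mathbf{b}_t$ with the complex formula for $\omega$ requires care with the factor $2\Im\langle\omega,z\rangle=2(\mathbf{J}^T\boldsymbol\omega)^T\mathbf{z}$ up to sign, and with $\iota$ applied to $J(Z^\sharp)^{-1}$. I would handle it by proving the stationarity identity $\mathbf{Z}^T(2\mathbf{J}^T\boldsymbol\omega)=-\boldsymbol\zeta$ directly in real coordinates, and only then converting to the stated complex form.
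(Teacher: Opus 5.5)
Your plan is sound and follows the route the paper indicates. The paper does not prove this proposition itself; it cites \cite{fagnola2025spectral,girotti2025gaussian}. The remark following the statement records exactly the identity relating $\mathbf{C}_{\mathbf{Z}}$ to $\mathbf{S}$ and $\mathbf{J}$ on which your equivalence argument rests.

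\textbf{Sign error.} You have that identity with the wrong sign. Since $\me^{s\mathbf{Z}^T}\mathbf{J}\me^{s\mathbf{Z}}\to0$ as $s\to\infty$, your derivative formula integrates to $\int_0^\infty \me^{s\mathbf{Z}^T}(\mathbf{Z}^T\mathbf{J}+\mathbf{J}\mathbf{Z})\me^{s\mathbf{Z}}\,\md s=-\mathbf{J}$. The only contribution comes from the lower endpoint, so
\begin{equation*}
\int_0^\infty \me^{s\mathbf{Z}^T}\mathbf{C}_{\mathbf{Z}}\me^{s\mathbf{Z}}\,\md s=\mathbf{S}-\mi\mathbf{J},
\end{equation*}
not $\mathbf{S}+\mi\mathbf{J}$. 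This is exactly what the paper states.

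The repair costs one line. Since $\mathbf{S}$ and $\mathbf{J}$ are real, $\overline{v}^*(\mathbf{S}+\mi\mathbf{J})\overline{v}=\overline{v^*(\mathbf{S}-\mi\mathbf{J})v}$ for every $v\in\C^{2d}$. Hence positive semidefiniteness, positive definiteness, and failure of definiteness all transfer between $\mathbf{S}-\mi\mathbf{J}$ and $\mathbf{S}+\mi\mathbf{J}$. With this, the following go through unchanged:
\begin{itemize}
\item your legitimacy check for $\rho$;
\item both directions of the equivalence with irreducibility (in the converse, $v^*(\mathbf{S}-\mi\mathbf{J})v=0$ for $v\in V$, hence $\overline{v}^*(\mathbf{S}+\mi\mathbf{J})\overline{v}=0$);
\item the link to faithfulness.
\end{itemize}

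\textbf{Mean condition.} Differentiating the mean condition at $t=0$ gives only a necessary condition. For sufficiency, write $\mathbf{m}:=2\mathbf{J}^T\boldsymbol{\omega}$ and note that $\mathbf{b}_t=(\mathbf{Z}^T)^{-1}(\me^{t\mathbf{Z}^T}-\1)\boldsymbol{\zeta}$. Then the choice $\mathbf{m}=-(\mathbf{Z}^T)^{-1}\boldsymbol{\zeta}$ satisfies $\mathbf{b}_t+\me^{t\mathbf{Z}^T}\mathbf{m}=\mathbf{m}$ for all $t\ge0$, not just infinitesimally. The same formula gives $\mathbf{b}_t\to\mathbf{m}$, which is what your uniqueness step needs to conclude $\lim_{t\to\infty}\phi_t=\chi_\rho$.

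\textbf{Minor aside.} Your remark that every normal state converges weakly to $\rho$ is not needed for the proposition. The pointwise limit of characteristic functions does not by itself establish it without a quantum continuity theorem for characteristic functions.
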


    The distinction between $ \mathbf{C} $ and $ \mathbf{C}_{\mathbf{Z}} $ is visible in the identity
    \begin{equation*}
        \int_0^\infty \me^{s\mathbf{Z}^T}\mathbf{C}_{\mathbf{Z}}\me^{s\mathbf{Z}}\,\md s
        = \mathbf{S}-\mi\mathbf{J},
    \end{equation*}
    obtained from \eqref{eq-quantum-diffusion-matrix} and stability. Strict positivity on complex phase space gives precisely the faithfulness condition. No invariant state is assumed in the strictly unstable or critical regimes.

    \subsection{Generators and spectral notation} \label{section-spectral-notation}

    The predual generator is defined by
    \begin{gather*}
        \dom\cL_* := \left\{Y\in\cB_1(\mathsf{h}) :
        \lim_{t\downarrow0}\frac{\cT_{*t}(Y)-Y}{t}\text{ exists in trace norm}\right\},\\
        \cL_*(Y) := \lim_{t\downarrow0}\frac{\cT_{*t}(Y)-Y}{t},\qquad Y\in\dom\cL_*.
    \end{gather*}
    It is closed and densely defined. Its Banach adjoint, under the bilinear trace pairing \eqref{eq-predual-pairing}, is denoted by $ \cL $; it is the $ \sigma $-weak generator of $ (\cT_t)_{t\ge0} $. In particular,
    \begin{equation*}
        \Tr(\cL_*(Y)X) = \Tr(Y\cL(X)), \quad Y\in\dom\cL_*, \quad X\in\dom\cL.
    \end{equation*}
    This is Banach-space duality, so no complex conjugation of the spectral parameter occurs.

    For a closed densely defined operator $ A $, we write $ \sigma(A) $ for its spectrum and $ \sigma_{\mathrm{p}}(A) $ for its point spectrum. The \emph{residual spectrum} $ \sigma_{\mathrm{r}}(A) $ consists of those $ \lambda $ for which $ \lambda-A $ is injective and has nondense range. The \emph{approximate point spectrum} $ \sigma_{\mathrm{ap}}(A) $ consists of those $ \lambda $ for which there are $ x_n\in\dom A $ with $ \norm{x_n}=1 $ and $ \norm{(\lambda-A)x_n}\to0 $. These definitions imply $ \sigma_{\mathrm{p}}(A)\subset\sigma_{\mathrm{ap}}(A) $; the approximate point and residual spectra may overlap. We use the standard facts that $ \partial\sigma(A)\subset\sigma_{\mathrm{ap}}(A) $ and that a contraction semigroup generator has spectrum in the closed left half-plane \cite{EngelNagel2000}.

    In the Fourier arguments in the appendices, we use the convention
    \begin{equation*}
        \widehat{f}(\xi) := \int_{\R^k}f(x)\me^{-\mi\xi^Tx}\,\md x,
        \qquad f\in L^1(\R^k).
    \end{equation*}
    We write $ \cS(\R^k) $ for the Schwartz space and $ \cC_c(\R^k) $ for the continuous compactly supported functions.

    \section{Spectrum for stable drift} \label{eq-stable-drift}

    Assume that $ s(\mathbf{Z}) < 0 $. By Proposition \ref{proposition-stable-invariant-state}, there is a unique faithful normal invariant state $ \rho $. We denote its mean by $ \omega $ and its covariance operator by $ S $.

    \subsection{Relative characteristic functions}

    Invariance of $ \rho $ and \eqref{eq-evolution-predual-semigroup-characteristic-function} give
    \begin{equation*}
        \phi_t (z) \chi_\rho (\me^{t Z} z) = \chi_\rho (z), \quad \forall t \ge 0.
    \end{equation*}
    The Gaussian factor can therefore be written as
    \begin{equation} \label{eq-phi-t-z-as-quotient}
        \phi_t (z) = \frac{\chi_\rho (z)}{\chi_\rho (\me^{t Z} z)},
    \end{equation} 
    which is well-defined because $ \chi_\rho $ never vanishes.

    For $ Y \in \cB_1(\mathsf{h}) $, define its \emph{relative characteristic function} by
    \begin{equation*}
        g_Y (z) := \frac{\chi_Y (z)}{\chi_\rho (z)}.
    \end{equation*}

    Equation \eqref{eq-evolution-predual-semigroup-characteristic-function} now becomes
    \begin{equation*}
        g_{\cT_{*t} (Y)} (z) = g_{Y} (\me^{t Z} z).
    \end{equation*}
    Thus division by $ \chi_\rho $ removes the Gaussian multiplier from the evolution. The following criterion makes this reduction precise. Its proof is given in Appendix \ref{appendix-stable-proofs}.

    \begin{lemma} \label{lemma-eigenvalue-criterion}
        Let $ Y \in \cB_1(\mathsf{h}) $ and $ \lambda \in \C $. Then $ Y \in \dom\cL_* $ and
        \begin{equation} \label{eq-cL-star-Y-lambda-Y}
            \cL_* (Y) = \lambda Y
        \end{equation}
        if and only if 
        \begin{equation} \label{eq-gY-etZ-gY-z}
            g_Y (\me^{t Z} z) = \me^{\lambda t} g_Y (z), \quad t \ge 0, \quad z \in \C^d.
        \end{equation}
    \end{lemma}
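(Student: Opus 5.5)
The plan is to reduce both implications to the standard semigroup fact that, for a strongly continuous semigroup on a Banach space, $Y\in\dom\cL_*$ with $\cL_*(Y)=\lambda Y$ holds if and only if $\cT_{*t}(Y)=\me^{\lambda t}Y$ for every $t\ge0$. Once that is in place, the transport identity $g_{\cT_{*t}(Y)}(z)=g_Y(\me^{tZ}z)$ and injectivity of characteristic functions convert the operator equation into \eqref{eq-gY-etZ-gY-z}.

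For the forward direction, I assume $Y\in\dom\cL_*$ with $\cL_*(Y)=\lambda Y$. For $t\ge0$ set $u(t):=\me^{-\lambda t}\cT_{*t}(Y)$. Because $Y$ lies in the domain, $t\mapsto\cT_{*t}(Y)$ is differentiable in trace norm with derivative $\cT_{*t}(\cL_*Y)=\lambda\cT_{*t}(Y)$ (see \cite{EngelNagel2000}), so $u'(t)=0$. Hence $u$ is constant, that is, $\cT_{*t}(Y)=\me^{\lambda t}Y$. Taking characteristic functions and using \eqref{eq-evolution-predual-semigroup-characteristic-function} gives $\phi_t(z)\chi_Y(\me^{tZ}z)=\me^{\lambda t}\chi_Y(z)$. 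I then divide by $\chi_\rho(z)$, which never vanishes, and apply \eqref{eq-phi-t-z-as-quotient}, namely $\phi_t(z)/\chi_\rho(z)=1/\chi_\rho(\me^{tZ}z)$. This yields exactly \eqref{eq-gY-etZ-gY-z}.

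For the converse, I run the same computation backwards. Multiplying \eqref{eq-gY-etZ-gY-z} by $\chi_\rho(z)$ and using \eqref{eq-phi-t-z-as-quotient} gives $\chi_{\cT_{*t}(Y)}(z)=\phi_t(z)\chi_Y(\me^{tZ}z)=\me^{\lambda t}\chi_Y(z)=\chi_{\me^{\lambda t}Y}(z)$ for all $z$. Injectivity of $Y\mapsto\chi_Y$, which follows from $\sigma$-weak density of the Weyl operators, then gives $\cT_{*t}(Y)=\me^{\lambda t}Y$ in $\cB_1(\mathsf{h})$. It follows that
\[
\frac{\cT_{*t}(Y)-Y}{t}=\frac{\me^{\lambda t}-1}{t}\,Y\longrightarrow\lambda Y
\]
in trace norm as $t\downarrow0$. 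Therefore $Y\in\dom\cL_*$ and $\cL_*(Y)=\lambda Y$.

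I do not expect a serious obstacle here. The one point needing care is that the eigenvalue equation in \eqref{eq-cL-star-Y-lambda-Y} includes the domain condition, and that condition is obtained for free from the exact exponential identity $\cT_{*t}(Y)=\me^{\lambda t}Y$. The remaining check is that every division is legitimate, which holds because Gaussian characteristic functions are nowhere zero; \eqref{eq-phi-t-z-as-quotient} itself depends on the invariance of $\rho$ from Proposition~\ref{proposition-stable-invariant-state}. The case $Y=0$ is trivial in both directions.
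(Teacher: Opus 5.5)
Your proposal is correct and follows the paper's proof essentially step for step. You pass from the eigenvalue equation to $\cT_{*t}(Y)=\me^{\lambda t}Y$, take characteristic functions using \eqref{eq-evolution-predual-semigroup-characteristic-function} and \eqref{eq-phi-t-z-as-quotient}, and reverse the computation with injectivity of $Y\mapsto\chi_Y$ for the converse; you also spell out the two semigroup facts the paper leaves implicit (constancy of $\me^{-\lambda t}\cT_{*t}(Y)$ and the difference-quotient limit that gives the domain condition).
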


    \subsection{The trace-class spectrum}

    The transport identity suggests eigenfunctions with a prescribed scaling along the drift. The essential additional step is to realize them as characteristic functions of trace-class operators. The construction in Appendix \ref{appendix-stable-proofs} uses a fractional power of a nonnegative quadratic form, together with an auxiliary Gaussian diffusion semigroup. The trace-norm estimates ensure membership in the generator domain through Lemma \ref{lemma-eigenvalue-criterion}.

    \begin{theorem} \label{theorem-strict-negative-implies-point-spectrum}
        If $ s(\mathbf{Z}) < 0 $, then every $ \eta \in \C $ with $ \Re\eta < 0 $ belongs to $ \sigma_{\mathrm{p}}(\cL_*) $.
    \end{theorem}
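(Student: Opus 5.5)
By Lemma \ref{lemma-eigenvalue-criterion}, it suffices to produce a nonzero $Y\in\cB_1(\mathsf{h})$ whose relative characteristic function satisfies $g_Y(\me^{tZ}z)=\me^{\eta t}g_Y(z)$ for all $t\ge0$ and all $z$. The plan has three steps: build such a function, multiply it by $\chi_\rho$ and smooth it so that it becomes a genuine trace-class characteristic function, and check that the smoothing preserves the transport identity.

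\textbf{Step 1: scalar eigenfunctions.} Since $s(\mathbf{Z})<0$, there is a positive definite $\mathbf{Q}$ solving the Lyapunov equation $\mathbf{Z}^T\mathbf{Q}+\mathbf{Q}\mathbf{Z}=-\mathbf{P}$ for some $\mathbf{P}>0$. Put $q(\mathbf{z})=\mathbf{z}^T\mathbf{Q}\mathbf{z}$. Then $t\mapsto q(\me^{t\mathbf{Z}}\mathbf{z})$ is strictly decreasing to $0$ for $\mathbf{z}\ne0$. Every nonzero orbit therefore crosses the ellipsoid $\Sigma=\{q=1\}$ exactly once, at a time $\tau(\mathbf{z})\in\R$ with $q(\me^{\tau(\mathbf{z})\mathbf{Z}}\mathbf{z})=1$. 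This gives a smooth function on $\R^{2d}\setminus\{0\}$ with
\[
\tau(\me^{t\mathbf{Z}}\mathbf{z})=\tau(\mathbf{z})-t .
\]
For any smooth $h$ on $\Sigma$ I will define
\[
f(\mathbf{z})=\me^{-\eta\tau(\mathbf{z})}\,h\bigl(\me^{\tau(\mathbf{z})\mathbf{Z}}\mathbf{z}\bigr),
\]
which satisfies $f(\me^{t\mathbf{Z}}\mathbf{z})=\me^{\eta t}f(\mathbf{z})$. Near $0$ we have $\tau(\mathbf{z})\sim c\log(1/\abs{\mathbf{z}})$, so $\abs{f(\mathbf{z})}\lesssim \abs{\mathbf{z}}^{\kappa}$ with $\kappa>0$ depending on $\Re\eta<0$ and on the spectrum of $\mathbf{Z}$. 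Thus $f$ extends continuously by $f(0)=0$, is Hölder at the origin, and grows at most polynomially at infinity. This is the ``fractional power of a quadratic form'' alluded to in the paper; in the normal case it is literally $q^{-\eta/\mu}$. Taking $h\equiv1$ makes $f\not\equiv0$.

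\textbf{Step 2: realizing $f\chi_\rho$ as a trace-class characteristic function.} The candidate is $\chi_Y=f\chi_\rho$. The difficulty is that $f$ is not smooth at $0$, so $f\chi_\rho$ need not be a quantum characteristic function. To get around this, I would use an auxiliary Gaussian diffusion semigroup: choose a Gaussian QMS $\cG_s$ with zero drift and diffusion $\mathbf{D}>0$, so that $\chi_{\cG_{*s}(X)}=\me^{-s\mathbf{z}^T\mathbf{D}\mathbf{z}/2}\chi_X$. I would then construct $Y$ as a trace-norm convergent integral of such smoothed objects, for instance
\[
Y=\int_0^\infty \me^{-\eta t}\bigl(\cT_{*t}(X)-(\Tr X)\rho\bigr)\,\md t .
\]
Here the integrand is estimated in trace norm by writing $\cT_{*t}(X)-\rho$, for a suitable Gaussian $X$ with a strictly larger covariance, as a Gaussian-convolution (diffusion) image. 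The trace norm of the integrand should decay like $\me^{ct}$ times a quantity small in $\abs{\me^{t\mathbf{Z}}}$, which is summable against $\me^{-\Re\eta\, t}$ only for $\Re\eta$ in some strip $(-\delta,0)$.

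This Laplace-transform approach gives only a strip. For a general $\eta$ I would instead take $Y$ with $g_Y=f$ directly. I would verify trace class by decomposing $f=\sum_k \me^{-\eta k}f_k$ dyadically along the flow, $f_k=f\cdot\psi(\tau-k)$, and representing each $f_k\chi_\rho$ as the characteristic function of $\cT_{*k}$ applied to a fixed smooth piece. The smoothing of $\cT_{*k}$ then controls the trace norms uniformly.

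\textbf{Step 3: conclusion and the main obstacle.} Once $Y\in\cB_1(\mathsf{h})$ with $g_Y=f$ is established, the transport identity of Step 1 together with Lemma \ref{lemma-eigenvalue-criterion} gives $Y\in\dom\cL_*$ and $\cL_*Y=\eta Y$. Moreover $Y\ne0$ by injectivity of characteristic functions.

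The main obstacle is the trace-norm control in Step 2. I expect it to require the splitting $\mathbf{S}=\mathbf{S}_0+\mathbf{D}$ with $\mathbf{S}_0+\mi\mathbf{J}\ge0$ and $\mathbf{D}>0$, which faithfulness ($\mathbf{S}+\mi\mathbf{J}>0$, from Proposition \ref{proposition-stable-invariant-state}) permits. Under this splitting, multiplication by $\me^{-\mathbf{z}^T\mathbf{D}\mathbf{z}/2}$ acts as a classical Gaussian convolution on a bona fide state. The remaining task is to show that $f$ times a Gaussian has integrable Fourier-type transform, i.e.\ that it defines a finite signed mixture of Weyl-displaced states. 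For this I would use the Hölder regularity at $0$ and the Schwartz-like smoothness away from $0$.
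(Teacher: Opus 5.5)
Your Step 1 is sound. The hitting time $\tau$ is smooth on $\R^{2d}\setminus\{0\}$ by the implicit function theorem, and $f$ satisfies the transport identity. With your convention $\tau(\mathbf z)\to-\infty$ as $\mathbf z\to0$, not $+\infty$, but your bound $\abs{f(\mathbf z)}\lesssim\abs{\mathbf z}^\kappa$ is correct.

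\textbf{The gap is Step 2}, realizing $f\chi_\rho$ as the characteristic function of a trace-class operator. This is the actual content of the theorem, you leave it as ``the remaining task'', and none of the mechanisms you propose works as stated.
\begin{itemize}
\item The Laplace transform $Y=\int_0^\infty\me^{-\eta t}(\cT_{*t}(X)-(\Tr X)\rho)\,\md t$ is not an eigenvector for any $\eta$. Since $\cT_{*s}(\rho)=\rho$, one computes $(\eta-\cL_*)Y=X-(\Tr X)\rho$. It solves a resolvent equation, so the problem is not that it works only in a strip.
\item In your dyadic scheme, put $F_0:=f\,\psi(\tau)$; the transport identity gives $f\,\psi(\tau-k)=\me^{-\eta k}F_0(\me^{k\mathbf Z}\cdot)$. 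The pieces near the origin, where $f$ is singular, have $k<0$. They would need $\cT_{*k}$ at negative times, which does not exist on $\cB_1(\mathsf h)$.
\item The pieces with $k\ge0$ carry coefficients of modulus $\me^{k\abs{\Re\eta}}$. The uniform bound $\norm{\cT_{*k}(X_0)}_1\le\norm{X_0}_1$ therefore cannot make $\sum_k\me^{-\eta k}\cT_{*k}(X_0)$ converge.
\end{itemize}

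Your closing reduction is correct. Take $\mathbf D>0$ small and let $\rho_0$ be the Gaussian state with mean $\omega$ and covariance $\mathbf S-\mathbf D$, which faithfulness allows. If $G:=f\,\me^{-\mathbf z^T\mathbf D\mathbf z/2}$ is the Fourier transform of an $L^1$ function, then a suitable $k\in L^1(\R^{2d})$ gives $Y=\int k(u)W(\iota^{-1}u)\rho_0W(\iota^{-1}u)^*\,\md u$ with $\norm{Y}_1\le\norm{k}_{L^1}$ and $\chi_Y=f\chi_\rho$.

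However, H\"older continuity at $0$ plus smoothness elsewhere does not by itself give this integrability. Near $0$, functions like $\abs{\mathbf z}^{a}\sin(\abs{\mathbf z}^{-N})$ can be H\"older continuous and smooth away from $0$, yet have nonintegrable Fourier transforms. What you need is derivative control adapted to the flow. Your decomposition supplies it if you use it classically rather than through $\cT_{*k}$:
\begin{itemize}
\item With $\Phi:=\sum_{k\le-1}\psi(\tau-k)$, which equals $1$ near $0$, one has $f\Phi=\sum_{j\ge1}\me^{\eta j}F_0(\me^{-j\mathbf Z}\cdot)$, where $F_0\in\cC_c^\infty(\R^{2d}\setminus\{0\})$.
\item The $L^1$ norm of a Fourier transform is invariant under linear changes of variables, and $\abs{\me^{\eta j}}=\me^{-j\abs{\Re\eta}}$. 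So this series converges in the Fourier algebra.
\item $\Phi G$ stays in the Fourier algebra because that algebra is closed under products.
\item $(1-\Phi)G$ is a Schwartz function, since the pieces with $k\ge0$ give polynomial bounds on the derivatives of $f$.
\end{itemize}
With this repair your route works, and it yields many eigenvectors, one for each choice of $h$.

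The paper proceeds differently. It takes the exactly scaling form $q_\lambda(\mathbf z)=\mathbf z^TQ_\lambda\mathbf z$, with $Q_\lambda=\Re(vv^*)$ and $\mathbf Z^Tv=\lambda v$, so that $g_\eta=q_\lambda^{\eta/(2\Re\lambda)}$. It then realizes $\chi_\rho q_\lambda^\alpha$ by a Balakrishnan-type integral of $\partial_s^m\cG_{*s}(\sigma)$ for the diffusion $\cG_t(W(z))=\me^{-tq_\lambda(z)}W(z)$. Here $\sigma$ has covariance $\mathbf S-2\epsilon Q_\lambda$, and the trace-norm bounds $c\,t^{-m}$ come from the heat kernel.
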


    The remaining spectral assertions follow from contractivity, closedness of the spectrum, and stability of the drift. In particular, stability excludes every nonzero imaginary eigenvalue.

    \begin{theorem} \label{theorem-stable-spectrum}
        If $ s(\mathbf{Z}) < 0 $, then
        \begin{equation} \label{eq-full-spectrum-cL-star}
            \sigma (\cL_*) = \sigma_{\mathrm{ap}} (\cL_*)  = \{ \eta: \Re \eta \le 0 \}.
        \end{equation}
        In particular,
        \begin{equation} \label{eq-point-spectrum-cL-star}
            \sigma_{\mathrm{p}} (\cL_*) = \{ \eta : \Re \eta < 0 \} \cup \{ 0 \}.
        \end{equation}

    \end{theorem}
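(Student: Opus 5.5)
The plan is to combine Theorem \ref{theorem-strict-negative-implies-point-spectrum} with general facts about contraction semigroups, and then to use Lemma \ref{lemma-eigenvalue-criterion} to exclude the nonzero imaginary eigenvalues.

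\emph{Step 1: the spectrum.} Since $ (\cT_{*t})_{t\ge0} $ is a contraction semigroup, $ \sigma(\cL_*)\subset\{\Re\eta\le0\} $. By Theorem \ref{theorem-strict-negative-implies-point-spectrum}, the open left half-plane lies in $ \sigma_{\mathrm{p}}(\cL_*) $. The spectrum is closed, so it contains the closure of this half-plane. This gives $ \sigma(\cL_*)=\{\Re\eta\le0\} $.

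\emph{Step 2: the approximate point spectrum.} We have $ \sigma_{\mathrm{p}}\subset\sigma_{\mathrm{ap}} $, so the open half-plane lies in $ \sigma_{\mathrm{ap}}(\cL_*) $. The imaginary axis is the topological boundary of $ \sigma(\cL_*) $. Since $ \partial\sigma(A)\subset\sigma_{\mathrm{ap}}(A) $, the imaginary axis also lies in $ \sigma_{\mathrm{ap}}(\cL_*) $. This proves \eqref{eq-full-spectrum-cL-star}.

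\emph{Step 3: the point spectrum.} The value $ 0 $ is an eigenvalue, with eigenvector $ \rho $, because $ \rho $ is invariant. It remains to show that no $ \eta=\mi\beta $ with $ \beta\in\R\setminus\{0\} $ is an eigenvalue.

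Suppose $ \cL_*(Y)=\mi\beta Y $. Lemma \ref{lemma-eigenvalue-criterion} then gives
\begin{equation*}
    g_Y(\me^{tZ}z)=\me^{\mi\beta t}g_Y(z)
\end{equation*}
for all $ t\ge0 $ and all $ z $. Fix $ z $ and let $ t\to\infty $. Stability gives $ \me^{tZ}z\to0 $. The function $ g_Y=\chi_Y/\chi_\rho $ is continuous, because $ \chi_\rho $ is continuous and nowhere zero. Hence $ g_Y(\me^{tZ}z)\to g_Y(0)=\Tr Y $.

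Consequently $ \me^{\mi\beta t}g_Y(z) $ converges as $ t\to\infty $. If $ g_Y(z)\neq0 $, this is impossible, since $ \me^{\mi\beta t} $ does not converge for $ \beta\neq0 $. Therefore $ g_Y(z)=0 $ for every $ z $, so $ \chi_Y\equiv0 $. Injectivity of characteristic functions then gives $ Y=0 $. Thus no nonzero imaginary number is an eigenvalue.

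The same argument with $ \beta=0 $ shows that the $ 0 $-eigenspace is spanned by $ \rho $. Indeed, in that case $ g_Y $ is constant and equal to $ \Tr Y $, so $ Y=(\Tr Y)\rho $. This is not needed for the statement.

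Combining Steps 1--3 yields \eqref{eq-point-spectrum-cL-star}. There is no serious obstacle here: all the analytic work is contained in Theorem \ref{theorem-strict-negative-implies-point-spectrum} and Lemma \ref{lemma-eigenvalue-criterion}. The only points to check are the continuity of $ g_Y $ at $ 0 $ and the identification of the imaginary axis as the boundary of the spectrum.
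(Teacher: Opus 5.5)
Your proposal is correct and follows essentially the same route as the paper: contractivity, Theorem~\ref{theorem-strict-negative-implies-point-spectrum} and closedness of the spectrum give $\sigma(\cL_*)$; the inclusion $\partial\sigma(\cL_*)\subset\sigma_{\mathrm{ap}}(\cL_*)$ covers the imaginary axis; and Lemma~\ref{lemma-eigenvalue-criterion}, together with stability and continuity of $g_Y$, excludes the nonzero imaginary eigenvalues. The only cosmetic difference is in that last step: the paper first takes $z=0$ to get $\Tr Y=0$ and then contrasts $g_Y(\me^{tZ}z_0)\to0$ with the constant modulus $\abs{g_Y(z_0)}>0$, whereas you use directly that $\me^{\mi\beta t}$ does not converge.
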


    The proof is given in Appendix \ref{appendix-stable-proofs}. No diagonalizability assumption on $ \mathbf{Z} $ is needed. The point spectrum is therefore much larger than the polynomial family \eqref{eq-polynomial-eigenvalues}, even when that family is available from an algebraic construction.

    \section{Spectrum for strictly unstable drift} \label{section-strcitly-unstable-drift}

    Assume $ s(\mathbf{Z})>0 $. The drift may also have eigenvalues with zero or negative real part. An eigenvalue with positive real part determines a real invariant subspace of dimension one or two. Integrating Weyl operators over this subspace yields the following bounded eigenoperators of the dual semigroup; the construction and its integrability estimates are given in Appendix \ref{appendix-unstable-proofs}.

    \begin{proposition} \label{proposition-eigenvector-Gaussian-generator}
        If $ s(\mathbf{Z})>0 $, then, for every $ \eta\in\C $ with $ \Re\eta<0 $, there is a nonzero operator $ X_\eta\in\cB(\mathsf{h}) $ such that
        \begin{equation} \label{eq-eigenvector-Gaussian-generator}
            \cT_t (X_\eta) = \me^{\eta t} X_\eta, \quad \forall t \ge 0.
        \end{equation}
    \end{proposition}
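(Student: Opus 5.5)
The plan is to build $X_\eta$ as a weighted integral of Weyl operators over a real $\mathbf{Z}$-invariant subspace on which $\me^{-tZ}$ contracts. The weight is chosen so that conjugating by the drift, followed by a change of variables, produces exactly the factor $\me^{\eta t}$. The hypothesis $\Re\eta<0$ is what makes the weight integrable near the origin, and irreducibility gives Gaussian decay at infinity.

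\emph{Setup.} Fix an eigenvalue $\mu$ of $\mathbf{Z}$ with $\Re\mu>0$. Let $V\subset\R^{2d}$ be the associated real invariant subspace: $V$ is one-dimensional if $\mu$ is real, and two-dimensional, spanned by the real and imaginary parts of an eigenvector, otherwise. In a suitable basis of $V$,
\begin{equation*}
    \me^{s\mathbf{Z}}|_V = \me^{s\Re\mu}R(s\Im\mu),
\end{equation*}
where $R$ is a rotation (trivial if $\dim V=1$). Put $\delta:=\operatorname{tr}(\mathbf{Z}|_V)$, so $\delta=\dim V\cdot\Re\mu$. Let $N(u):=|Pu|$ be the Euclidean norm in this basis. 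Then $N(\me^{s\mathbf{Z}}u)=\me^{s\Re\mu}N(u)$.

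Define
\begin{equation*}
    \Phi_\infty(u):=\exp\Bigl(-\tfrac12\int_0^\infty \Re\langle \me^{-rZ}u,C\me^{-rZ}u\rangle\,\md r+\mi\int_0^\infty\Re\langle\zeta,\me^{-rZ}u\rangle\,\md r\Bigr),\qquad u\in V.
\end{equation*}
Both integrals converge because $\me^{-rZ}$ contracts exponentially on $V$. The substitution $r=t-s$ in \eqref{eq-defintion-phi-t} gives the cocycle identity
\begin{equation*}
    \Phi_\infty(\me^{-tZ}u)\,\phi_t(\me^{-tZ}u)=\Phi_\infty(u).
\end{equation*}

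\emph{The candidate.} Set
\begin{equation*}
    f(u):=\Phi_\infty(u)\,N(u)^{-\kappa},\qquad \kappa:=\frac{\eta+\delta}{\Re\mu},
\end{equation*}
and define $X_\eta:=\int_V f(u)W(u)\,\md u$ as a $\sigma$-weak (or Bochner) integral. The homogeneous factor satisfies $N(\me^{-tZ}u)^{-\kappa}=\me^{(\eta+\delta)t}N(u)^{-\kappa}$.

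\emph{Integrability.} Near $0$ we have $|f(u)|\le N(u)^{-(\Re\eta+\delta)/\Re\mu}$. The exponent equals $-\dim V-\Re\eta/\Re\mu>-\dim V$, so $f$ is integrable near the origin; this is exactly where $\Re\eta<0$ enters. For decay at infinity, only the piece $r\in[0,1]$ of the first integral in $\Phi_\infty$ is needed. With $w=\me^{-Z}u$ that piece equals $\tfrac12\mathbf{w}^T\mathbf{S}_1\mathbf{w}$, which is at least $c|u|^2$ by \eqref{eq-positive-diffusion-gramian}. Hence $|\Phi_\infty(u)|\le \me^{-c|u|^2}$ and $f\in L^1(V)$. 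It follows that $X_\eta\in\cB(\mathsf{h})$ with $\norm{X_\eta}\le\norm{f}_1$.

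\emph{Eigen-relation.} Normality of $\cT_t$ allows it to pass through the integral, and \eqref{eq-gaussian-weyl-action} gives
\begin{equation*}
    \cT_t(X_\eta)=\int_V f(v)\phi_t(v)W(\me^{tZ}v)\,\md v.
\end{equation*}
Substitute $u=\me^{tZ}v$, which has Jacobian $\me^{-\delta t}$, and then apply the cocycle identity together with the homogeneity of $N^{-\kappa}$. The integrand becomes
\begin{equation*}
    \me^{-\delta t}\,\Phi_\infty(u)\,\me^{(\eta+\delta)t}N(u)^{-\kappa}=\me^{\eta t}f(u),
\end{equation*}
which yields \eqref{eq-eigenvector-Gaussian-generator}.

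\emph{Nonvanishing.} Pair $X_\eta$ with coherent states using \eqref{eq-coherent-characteristic-function}:
\begin{equation*}
    \me^{-|\alpha|^2}\langle e(\alpha),X_\eta e(\alpha)\rangle=\int_V f(u)\me^{-|u|^2/2}\me^{2\mi\Im\langle\alpha,u\rangle}\,\md u.
\end{equation*}
As $\alpha$ ranges over $\C^d$, this is the Fourier transform on $V$ of the nonzero $L^1$ function $f\me^{-|\cdot|^2/2}$; the map $\alpha\mapsto 2\Im\langle\alpha,\cdot\rangle$ restricted to $V$ reaches every linear functional on $V$. A nonzero $L^1$ function has a nonzero Fourier transform, so $X_\eta\neq0$.

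The main obstacle is the quantitative integrability: checking that the degree of homogeneity is compatible with the Jacobian exactly when $\Re\eta<0$, and extracting a uniform Gaussian bound on $\Phi_\infty$ from \eqref{eq-positive-diffusion-gramian}. The operator-theoretic steps are routine by comparison.
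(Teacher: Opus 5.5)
Your proposal is correct and is essentially the paper's proof. In the coordinates $u=V_\lambda x$, your weight $\Phi_\infty N^{-\kappa}$ is exactly the paper's $f_\eta(x)=\exp\bigl(\mi\zeta_\lambda^Tx-\tfrac12x^TS_\lambda x\bigr)\abs{x}^{-\eta/\Re\lambda-k}$: the two infinite-time integrals in $\Phi_\infty$ evaluate to $x^TS_\lambda x$ and $\zeta_\lambda^Tx$ with $\zeta_\lambda=(Z_\lambda^T)^{-1}V_\lambda^T\boldsymbol{\zeta}$. Your integrability bound via $\mathbf{S}_1>0$ and the coherent-state Fourier argument for nonvanishing match the paper's lemmas; the only cosmetic difference is that the paper packages your change of variables as an intertwining $\cT_t(X_f)=X_{G_tf}$ with an explicit $L^1$ semigroup $G_t$, whereas you verify the eigenrelation directly from the cocycle identity for $\Phi_\infty$.
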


    The dual eigenoperators determine spectral values of $ \cL_* $ through the trace pairing. More precisely, $ Y\mapsto\Tr(YX_\eta) $ is a nonzero continuous functional that annihilates $ \ran(\eta-\cL_*) $. Thus every $ \eta $ with negative real part lies in the spectrum, and contractivity determines the opposite inclusion.

    \begin{theorem} \label{theorem-whole-spectrum-strictly-unstable-drift}
        If $ s(\mathbf{Z})>0 $, then
        \begin{equation} \label{eq-spectrum-predual-generator-expanding-case}
            \sigma (\cL_*) = \{ \eta \in \C : \Re \eta \le 0 \}.
        \end{equation}
    \end{theorem}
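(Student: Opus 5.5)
We prove the two inclusions separately. The inclusion $\sigma(\cL_*) \subset \{\eta : \Re\eta \le 0\}$ is standard. The predual $(\cT_{*t})_{t\ge0}$ is a strongly continuous contraction semigroup on $\cB_1(\mathsf{h})$, so by the Hille--Yosida theorem every $\eta$ with $\Re\eta > 0$ lies in the resolvent set of $\cL_*$, with $\norm{R(\eta,\cL_*)} \le 1/\Re\eta$.

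For the reverse inclusion, we first treat the open left half-plane. Fix $\eta$ with $\Re\eta < 0$, and let $X_\eta \in \cB(\mathsf{h})$ be the nonzero operator from Proposition \ref{proposition-eigenvector-Gaussian-generator}, so that $\cT_t(X_\eta) = \me^{\eta t}X_\eta$ for all $t \ge 0$. From this identity we deduce that $X_\eta \in \dom\cL$ with $\cL(X_\eta) = \eta X_\eta$. Indeed, $t^{-1}(\cT_t(X_\eta) - X_\eta) = t^{-1}(\me^{\eta t} - 1)X_\eta$ converges in norm, hence $\sigma$-weakly, to $\eta X_\eta$. To avoid any reliance on the domain of the $\sigma$-weak generator, we can argue directly. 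For $Y \in \dom\cL_*$, apply the pairing \eqref{eq-predual-pairing} to get
\begin{equation*}
\Tr\bigl(\cT_{*t}(Y)X_\eta\bigr) = \me^{\eta t}\Tr(YX_\eta).
\end{equation*}
Differentiate at $t = 0$. The left-hand side is differentiable in trace norm and the functional $Y \mapsto \Tr(YX_\eta)$ is continuous, so
\begin{equation*}
\Tr\bigl(\cL_*(Y)X_\eta\bigr) = \eta\,\Tr(YX_\eta),
\end{equation*}
which means $\Tr\bigl((\eta - \cL_*)(Y)\,X_\eta\bigr) = 0$. The functional $Y \mapsto \Tr(YX_\eta)$ is nonzero on $\cB_1(\mathsf{h})$, since $\cB(\mathsf{h})$ is the Banach dual of $\cB_1(\mathsf{h})$ and $X_\eta \neq 0$. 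Thus it is a nonzero continuous functional annihilating $\ran(\eta - \cL_*)$. The range is therefore not dense, $\eta - \cL_*$ is not surjective, and so $\eta \in \sigma(\cL_*)$.

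The imaginary axis then follows from closedness. The spectrum of a closed operator is closed, and the closure of the open left half-plane is the closed left half-plane, so every $\eta \in \mi\R$ also lies in $\sigma(\cL_*)$. Combining this with the first inclusion gives \eqref{eq-spectrum-predual-generator-expanding-case}.

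With Proposition \ref{proposition-eigenvector-Gaussian-generator} available, no step here is a real obstacle. The only delicate point is the duality step: justifying that the eigen-relation for $X_\eta$ annihilates the full range of $\eta - \cL_*$ on $\dom\cL_*$. Differentiating the scalar identity above handles this cleanly and uses only the strong differentiability of $\cT_{*t}(Y)$ for $Y \in \dom\cL_*$. All the substantive analysis lies in the construction of $X_\eta$, which we take as given.
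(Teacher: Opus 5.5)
Your proposal is correct and follows essentially the same route as the paper. It uses Hille--Yosida for $\sigma(\cL_*) \subset \{\Re\eta\le0\}$, shows that the nonzero dual eigenoperator $X_\eta$ from Proposition \ref{proposition-eigenvector-Gaussian-generator} annihilates $\ran(\eta-\cL_*)$ for every $\Re\eta<0$ (your differentiation of the scalar identity is the same limit computation the paper performs), and then uses closedness of the spectrum to recover the imaginary axis.
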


    Although the full spectrum agrees with that of the stable case, the point spectrum changes completely. Along a dense set of phase-space trajectories, the accumulated diffusion grows faster than any linear function of time. The characteristic-function identity then rules out every trace-class eigenvector.

    \begin{theorem} \label{theorem-unstable-spectral-types}
        If $ s(\mathbf{Z})>0 $, then $ \mi\R\subset\sigma_{\mathrm{ap}}(\cL_*) $,
        \begin{equation} \label{eq-expanding-case-point-spectrum}
            \sigma_{\mathrm{p}} (\cL_*) = \varnothing,
        \end{equation}
        and
        \begin{equation} \label{eq-expanding-case-residual-spectrum}
            \{ \eta : \Re \eta < 0 \} \cup \{ 0 \} \subset \sigma_{\mathrm{r}} (\cL_*).
        \end{equation}
    \end{theorem}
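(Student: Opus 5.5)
The plan has three parts: first show that $\cL_*$ has no eigenvalues at all, then deduce residual spectrum from the dual eigenoperators, and finally obtain the imaginary axis as approximate point spectrum from the boundary of the spectrum.

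\emph{No eigenvalues.} Suppose $Y\in\dom\cL_*$, $Y\neq0$, and $\cL_*Y=\eta Y$. Then $\cT_{*t}Y=\me^{\eta t}Y$, and \eqref{eq-evolution-predual-semigroup-characteristic-function} gives
\begin{equation*}
\me^{\eta t}\chi_Y(z)=\phi_t(z)\chi_Y(\me^{tZ}z).
\end{equation*}
Since $|\chi_Y|\le\norm{Y}_1$, taking moduli yields
\begin{equation*}
\me^{t\Re\eta}|\chi_Y(z)|\le \norm{Y}_1\exp\bigl(-\tfrac12\mathbf{z}^T\mathbf{S}_t\mathbf{z}\bigr).
\end{equation*}
I would then show that $\mathbf{z}^T\mathbf{S}_t\mathbf{z}/t\to\infty$ for all $\mathbf{z}$ in a dense set $D$. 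Once that is done, the left side decays at most exponentially in $t$ while the right side decays superexponentially. Hence $\chi_Y=0$ on $D$, by continuity everywhere, and so $Y=0$ by injectivity.

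\emph{Growth of $\mathbf{z}^T\mathbf{S}_t\mathbf{z}$.} This is the main obstacle. Let $\mathbf{z}$ have a nonzero component along the generalized eigenspace of $\mathbf{Z}^T$, acting on row vectors, associated with eigenvalues of positive real part. Such $\mathbf{z}$ form the complement of a proper subspace, hence a dense set. For these $\mathbf{z}$, $|\me^{s\mathbf{Z}}\mathbf{z}|$ grows like $\me^{\mu s}$ with $\mu>0$; here I am being careful to use $\mathbf{Z}$ acting on $\mathbf{z}$, since that is what enters $\mathbf{S}_t$.

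The difficulty is that $\mathbf{C}$ may be degenerate, so growth of $|\me^{s\mathbf{Z}}\mathbf{z}|$ does not immediately give growth of $\mathbf{S}_t$. To handle this I would use the semigroup property
\begin{equation*}
\mathbf{S}_{t+1}=\mathbf{S}_1+\me^{\mathbf{Z}^T}\mathbf{S}_t\me^{\mathbf{Z}},
\end{equation*}
or equivalently
\begin{equation*}
\mathbf{z}^T\mathbf{S}_t\mathbf{z}\ge\sum_{k<\lfloor t\rfloor}(\me^{k\mathbf{Z}}\mathbf{z})^T\mathbf{S}_1(\me^{k\mathbf{Z}}\mathbf{z}),
\end{equation*}
together with $\mathbf{S}_1>0$ from \eqref{eq-positive-diffusion-gramian}. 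This gives
\begin{equation*}
\mathbf{z}^T\mathbf{S}_t\mathbf{z}\ge c\sum_k|\me^{k\mathbf{Z}}\mathbf{z}|^2,
\end{equation*}
which grows exponentially in $t$ for $\mathbf{z}$ in the dense set described above. Irreducibility therefore enters only through $\mathbf{S}_1>0$, and the superexponential-relative-to-linear growth required in the previous step follows. This shows $\sigma_{\mathrm p}(\cL_*)=\varnothing$.

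\emph{Residual spectrum.} For $\Re\eta<0$, Proposition \ref{proposition-eigenvector-Gaussian-generator} supplies $X_\eta\neq0$ with $\cT_t(X_\eta)=\me^{\eta t}X_\eta$. For $Y\in\dom\cL_*$, differentiating
\begin{equation*}
\Tr(\cT_{*t}(Y)X_\eta)=\me^{\eta t}\Tr(YX_\eta)
\end{equation*}
at $t=0$ gives $\Tr((\eta-\cL_*)Y\,X_\eta)=0$. Thus the nonzero functional $Y\mapsto\Tr(YX_\eta)$ annihilates $\ran(\eta-\cL_*)$, so this range is not dense. For $\eta=0$ I would use $X=\1$, since trace preservation gives $\Tr(\cL_*Y)=0$. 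Combined with injectivity from the previous steps, this yields \eqref{eq-expanding-case-residual-spectrum}.

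\emph{Approximate point spectrum on $\mi\R$.} By Theorem \ref{theorem-whole-spectrum-strictly-unstable-drift}, $\sigma(\cL_*)$ is the closed left half-plane, whose boundary is $\mi\R$. The standard fact $\partial\sigma\subset\sigma_{\mathrm{ap}}$ then gives $\mi\R\subset\sigma_{\mathrm{ap}}(\cL_*)$.
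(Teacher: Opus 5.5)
Your proposal is correct and follows the paper's proof. Both use the characteristic-function bound $\abs{\chi_Y(z)}\le\norm{Y}_1\exp\bigl(-t\Re\eta-\tfrac12\mathbf{z}^T\mathbf{S}_t\mathbf{z}\bigr)$, combined with $\mathbf{S}_1>0$ to turn exponential growth of $\abs{\me^{t\mathbf{Z}}\mathbf{z}}$ off a proper subspace into superlinear growth of $\mathbf{z}^T\mathbf{S}_t\mathbf{z}$. Both then use the dual eigenoperators and trace preservation for the residual spectrum, and $\partial\sigma\subset\sigma_{\mathrm{ap}}$ for $\mi\R$.

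The paper's version is slightly cleaner in two places. It works with a single eigenvector $w$ of $\mathbf{Z}^T$, which gives the exact identity $\abs{w^*\me^{t\mathbf{Z}}\mathbf{z}}=\me^{t\Re\lambda}\abs{w^*\mathbf{z}}$ on the dense set $\{w^T\mathbf{z}\neq0\}$. It also keeps only the last unit interval of $\mathbf{S}_t$. This avoids the Jordan-block estimates hidden in your phrase ``grows like $\me^{\mu s}$''. It also avoids the ambiguity in ``component along the generalized eigenspace of $\mathbf{Z}^T$'': the correct dense set is the complement of the $\mathbf{Z}$-invariant center-stable subspace, i.e.\ nonzero pairing with a left unstable eigenvector.
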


    Appendix \ref{appendix-unstable-proofs} contains the proofs. The assertion at zero uses trace preservation: $ \ran\cL_* $ is contained in the closed subspace of trace-zero operators. Table \ref{tab-stable-unstable-spectrum} records exactly these conclusions; it does not assign a further residual or continuous classification to the nonzero imaginary axis.

    \section{Spectrum for periodic critical drift} \label{section-critical-drift}

    Assume $ s(\mathbf{Z})=0 $. We call the drift periodic when $ \me^{r\mathbf{Z}}=\1 $ for some $ r>0 $. This is a condition on the deterministic linear flow, not on the full QMS, which still contains diffusion and displacement. A periodic matrix has no eigenvalues with negative real part and no nontrivial Jordan blocks. The zero matrix requires separate treatment because it has no least positive period.

    At a return time, the Weyl evolution is multiplication by $ \phi_r $. Appendix \ref{appendix-periodic-proofs} proves that $ \sigma(\cT_{*r})=\overline{\phi_r(\C^d)} $ by representing $ \cT_{*r} $ as a Gaussian average of Weyl conjugations. This common step applies both to zero drift and to a nonzero periodic drift.

    \subsection{Zero drift}

    Suppose $ \mathbf{Z}=0 $. Irreducibility implies
    \begin{equation*}
        \mathbf{C}>0,
    \end{equation*}
    since $ \mathbf{S}_t=t\mathbf{C} $. Define
    \begin{equation} \label{eq-definition-psi-z}
        \psi(z):=-\frac{1}{2}\mathbf{z}^T\mathbf{C}\mathbf{z}+\mi\boldsymbol{\zeta}^T\mathbf{z}.
    \end{equation}
    Then
    \begin{equation} \label{eq-evolution-Wz-zero-drift-case}
        \cT_t(W(z))=\me^{t\psi(z)}W(z).
    \end{equation}
    The scalar exponent determines the entire generator spectrum.

    \begin{theorem} \label{theorem-zero-drift-case-spectrum}
        If $ \mathbf{Z}=0 $, then
        \begin{equation*}
            \sigma (\cL_*) = \psi (\C^d).
        \end{equation*}
    \end{theorem}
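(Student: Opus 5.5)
We must prove $\sigma(\cL_*)=\psi(\C^d)$ when $\mathbf{Z}=0$. Since $\mathbf{C}>0$, $\psi(\C^d)$ is closed: $\Re\psi(z)=-\tfrac12\mathbf{z}^T\mathbf{C}\mathbf{z}\to-\infty$ as $\abs{z}\to\infty$, so $\psi$ is proper and its image is closed. The plan has two inclusions, each reduced to the operators $\cT_{*t}$ through the identity $\chi_{\cT_{*t}(Y)}=\me^{t\psi}\chi_Y$ and the fact, proved in Appendix \ref{appendix-periodic-proofs}, that $\sigma(\cT_{*r})=\overline{\phi_r(\C^d)}=\overline{\me^{r\psi(\C^d)}}$ for every $r>0$ (every positive time is a period).

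\emph{Inclusion $\psi(\C^d)\subset\sigma(\cL_*)$.} Fix $z_0$ and set $\eta=\psi(z_0)$. I would build approximate eigenvectors directly. Take $Y_n=W(-z_0)^{*}$-type localizations: let $\rho_n$ be Gaussian states with covariance $n\mathbf{1}$-type spreading. Then $\chi_{\rho_n}$ is a Gaussian concentrated near $0$ with width $\sim n^{-1/2}$. Set $Y_n:=W(z_0)\rho_n$, or more precisely choose $Y_n$ whose characteristic function is $\chi_{\rho_n}(z-z_0)$ times a phase; this is obtained by a Weyl translation of $\rho_n$ and has $\norm{Y_n}_1=1$. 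Then $(\cT_{*t}-\me^{t\eta})Y_n$ has characteristic function $(\me^{t\psi(z)}-\me^{t\psi(z_0)})\chi_{Y_n}(z)$, which is small near $z_0$.

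The trace norm is not controlled by a sup-norm, so here I would use the Gaussian-average representation from the appendix instead. Write $\cT_{*t}(Y)=\int \mu_t(\md x)\,W(x)YW(x)^*$-type averages, with the displacement $\zeta$ producing a further Weyl shift. Hence $\cT_{*t}$ commutes with Weyl conjugations and acts as a convolution. Proving $\me^{t\eta}\in\sigma(\cT_{*t})$ for all $t$ is already given by the appendix. To pass to the generator I would use the spectral inclusion $\me^{t\sigma(\cL_*)}\subset\sigma(\cT_{*t})$ in reverse via the approximate point spectrum. The cleaner route is the standard criterion: if $\me^{t\eta}\in\sigma_{\mathrm{ap}}(\cT_{*t})$ for the whole ray with a common approximate eigenvector sequence, then $\eta\in\sigma_{\mathrm{ap}}(\cL_*)$. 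The coherent/Weyl-translated wide Gaussians $Y_n$ above serve as this common sequence, with the trace-norm estimate inherited from the appendix proof.

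\emph{Inclusion $\sigma(\cL_*)\subset\psi(\C^d)$.} Take $\eta\notin\psi(\C^d)$. Since $\psi(\C^d)$ is closed, $\mathrm{dist}(\eta,\psi(\C^d))=\delta>0$. I would also handle $\Re\eta>0$ separately, where contractivity already gives $\eta\notin\sigma(\cL_*)$. Define the candidate resolvent on characteristic functions as multiplication by $(\eta-\psi(z))^{-1}$. This requires showing that the resulting operator maps $\cB_1$ boundedly into $\cB_1$. Since $\cT_{*t}$ is a Gaussian mixture of Weyl conjugations and shifts, it acts on $\cB_1$ as convolution by a probability measure $\mu_t$ on the Heisenberg-type translation group. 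The resolvent is then $\int_0^\infty \me^{-\eta t}\cT_{*t}\,\md t$ where this converges, and otherwise it is a Fourier multiplier $m(z)=(\eta-\psi(z))^{-1}$. It suffices to show that $m$ is the Fourier transform of a finite complex measure $\nu$ on $\R^{2d}$, because then $Y\mapsto\int\nu(\md x)\,W(x)YW(x)^*$ is bounded on $\cB_1$ with norm at most $\norm{\nu}$ and its characteristic function is $m\chi_Y$. I would obtain $\nu$ by writing $m=(\eta-\psi)^{-1}$ as a function that is smooth and decays like $\abs{z}^{-2}$, with all derivatives decaying. In dimension $2d$ this decay rate is not integrable, so I would instead split $m=m\varphi+m(1-\varphi)$ with a cutoff $\varphi$. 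The first piece is Schwartz. For the second piece, I would use $m(1-\varphi)=(1-\varphi)\int_0^\infty \me^{-\eta t}\me^{t\psi}\,\md t$ when $\Re\psi$ is very negative, which holds for large $\abs z$ so that the integral converges. Here $\me^{t\psi}$ is the Fourier transform of a Gaussian probability measure, and this expresses the second piece as an $L^1$-convergent mixture of Gaussian measures.

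\emph{Main obstacle.} The hardest step is this boundedness in trace norm: both controlling the far-region Laplace integral uniformly, so that its total variation is finite, and handling the cutoff product. This is exactly where trace-norm rather than $L^2$ control is needed.
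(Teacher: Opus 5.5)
There are genuine gaps in both inclusions, and in each case the fact that closes the gap is already available. \emph{The inclusion $\psi(\C^d)\subset\sigma(\cL_*)$.} You never establish the trace-norm estimate for your Weyl-translated wide Gaussians, and it cannot be ``inherited from the appendix proof''. The proof of $\sigma(\cT_{*r})=\overline{\phi_r(\C^d)}$ contains no approximate eigenvectors at all: it obtains $\phi_r(z)\in\sigma(\cT_{*r})$ from the annihilating functional $Y\mapsto\Tr(YW(z))$. Commutation of $\cT_{*t}$ with Weyl conjugations does not help either. Conjugation multiplies $\chi_Y$ by a character, whereas the shift of $\chi_Y$ to $z_0$ comes from left multiplication by a Weyl operator, and that does not commute with $\cT_{*t}$. (The estimate could be proved by hand, e.g.\ via commutator bounds for wide thermal states, and your ``common approximate eigenvector'' criterion would also need its own argument. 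But that estimate is the entire content of the step, and it is missing.) The paper's route is immediate. Since $\cT_t(W(z))=\me^{t\psi(z)}W(z)$, we have $W(z)\in\dom\cL$ with $\cL(W(z))=\psi(z)W(z)$. Hence $\Tr\bigl((\psi(z)-\cL_*)(Y)\,W(z)\bigr)=0$ for all $Y\in\dom\cL_*$, so $\ran(\psi(z)-\cL_*)$ is not dense and $\psi(z)\in\sigma(\cL_*)$.

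\emph{The inclusion $\sigma(\cL_*)\subset\psi(\C^d)$.} Your resolvent construction breaks exactly where you suspect. Take $\Re\eta\le0$, the only case not settled by contractivity. Then $\int_0^\infty\me^{-\eta t}\mu_t\,\md t$ has total variation $\int_0^\infty\me^{-t\Re\eta}\,\md t=\infty$, and its formal Fourier transform diverges wherever $\Re\psi(z)\ge\Re\eta$, in particular at $z=0$. Multiplying by $1-\varphi$ after the $t$-integration does not produce a finite measure, so the far piece is not an $L^1$-convergent mixture. A repair would require putting the cutoff inside the Laplace integral with an $\eta$-dependent radius. You would then still have to check that the resulting operator maps into $\dom\cL_*$ and inverts $\eta-\cL_*$ on both sides. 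The paper needs none of this, because it uses the fact you quote but never exploit: for $\mathbf{Z}=0$, every $t>0$ is a return time. Spectral inclusion therefore gives $\me^{t\lambda}\in\sigma(\cT_{*t})=\{\me^{t\psi(z)}:z\in\C^d\}\cup\{0\}$ for every $t>0$, i.e.\ $\lambda=\psi(z_t)+2\pi\mi n_t/t$ with $n_t\in\Z$. Since $\Re\lambda=-\frac12\mathbf{z}_t^T\mathbf{C}\mathbf{z}_t$ and $\mathbf{C}>0$, the $\mathbf{z}_t$ stay in a fixed bounded set. Hence $\abs{2\pi n_t/t}=\abs{\Im\lambda-\boldsymbol{\zeta}^T\mathbf{z}_t}$ is bounded uniformly in $t$. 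This forces $n_t=0$ for small $t$, so $\lambda=\psi(z_t)\in\psi(\C^d)$.
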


    The geometry of the range of $ \psi $ can be made explicit. The displacement produces a parabolic region; in its absence the range is the negative real half-line.

    \begin{proposition}[Shape of the spectrum] \label{proposition-zero-drift-shape}
        Assume $ \mathbf{Z}=0 $. If $ \zeta=0 $, then
        \begin{equation*}
            \sigma(\cL_*) = \psi(\C^d) = (-\infty, 0].
        \end{equation*}
        When $ \zeta \neq 0 $, we have 
        \begin{equation*}
            \sigma(\cL_*) = \psi(\C^d) = \left\{ \lambda \in \C : \Re \lambda \le - \frac{(\Im \lambda)^2}{2 \boldsymbol{\zeta}^T \mathbf{C}^{-1} \boldsymbol{\zeta} } \right\}.
        \end{equation*}
    \end{proposition}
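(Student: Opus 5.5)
Given Theorem \ref{theorem-zero-drift-case-spectrum}, the statement reduces to computing the range of the function $\psi$ from \eqref{eq-definition-psi-z} on $\R^{2d}$, where $\mathbf{C}>0$. Writing $\psi(\mathbf{z}) = -\tfrac12\mathbf{z}^T\mathbf{C}\mathbf{z} + \mi\boldsymbol{\zeta}^T\mathbf{z}$, we have $\Re\psi = -\tfrac12 q(\mathbf{z})$ with $q(\mathbf{z}) := \mathbf{z}^T\mathbf{C}\mathbf{z}$, and $\Im\psi = \ell(\mathbf{z}) := \boldsymbol{\zeta}^T\mathbf{z}$. So the task is to describe the joint range of the pair $(q,\ell)$.

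When $\zeta=0$, $\psi$ is real. Since $\mathbf{C}>0$, the values $-\tfrac12 q(s\mathbf{e})$ for a fixed unit vector $\mathbf{e}$ and $s\in\R$ sweep out $(-\infty,0]$. Hence $\psi(\C^d)=(-\infty,0]$.

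When $\zeta\neq0$, the key step is a Cauchy--Schwarz inequality in the $\mathbf{C}$-inner product:
\begin{equation*}
    \ell(\mathbf{z})^2 = \bigl((\mathbf{C}^{-1}\boldsymbol{\zeta})^T\mathbf{C}\mathbf{z}\bigr)^2 \le \bigl(\boldsymbol{\zeta}^T\mathbf{C}^{-1}\boldsymbol{\zeta}\bigr)\, q(\mathbf{z}).
\end{equation*}
With $\lambda=\psi(\mathbf{z})$ this gives $\Re\lambda = -q/2 \le -(\Im\lambda)^2/(2\boldsymbol{\zeta}^T\mathbf{C}^{-1}\boldsymbol{\zeta})$, which is the inclusion of $\psi(\C^d)$ in the parabolic region.

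For the reverse inclusion, I would construct a preimage for each $\lambda$ in the region. Set $\kappa := \boldsymbol{\zeta}^T\mathbf{C}^{-1}\boldsymbol{\zeta} > 0$ and take $\mathbf{z} = s\,\mathbf{C}^{-1}\boldsymbol{\zeta}/\kappa + \mathbf{w}$ with $\boldsymbol{\zeta}^T\mathbf{w}=0$. Then $\ell(\mathbf{z}) = s$, and because $\mathbf{w}$ is $\mathbf{C}$-orthogonal to $\mathbf{C}^{-1}\boldsymbol{\zeta}$, we get $q(\mathbf{z}) = s^2/\kappa + q(\mathbf{w})$. We need $s=\Im\lambda$ and $q(\mathbf{w}) = -2\Re\lambda - s^2/\kappa$, which is $\ge 0$ exactly in the region.

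This is the one point needing care. If $d\ge1$ then $2d\ge2$, so the hyperplane $\boldsymbol{\zeta}^\perp$ is nonzero. Scaling a fixed nonzero $\mathbf{w}_0$ in it makes $q(t\mathbf{w}_0) = t^2 q(\mathbf{w}_0)$ take every value in $[0,\infty)$. This yields equality of the two sets.
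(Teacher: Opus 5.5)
Your proposal is correct and follows essentially the same route as the paper. You prove one inclusion by Cauchy--Schwarz; your $\mathbf{C}$-inner-product form is the same inequality as the paper's $\mathbf{C}^{\pm1/2}$ version. For the reverse inclusion you use the explicit preimage $(\Im\lambda)\,\mathbf{C}^{-1}\boldsymbol{\zeta}/\kappa$ plus a suitably scaled nonzero vector in $\boldsymbol{\zeta}^\perp$, which is $\mathbf{C}$-orthogonal to it, exactly as in the paper, which likewise relies on $2d\ge 2$ and $\mathbf{C}>0$.
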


    In this case each spectral value is detected by a Weyl operator in the Banach dual. Continuity of trace-class characteristic functions, on the other hand, excludes eigenvectors of $ \cL_* $.

    \begin{theorem} \label{theorem-zero-drift-spectral-types}
        If $ \mathbf{Z}=0 $, then
        \begin{equation*}
            \sigma_{\mathrm{p}} (\cL_*) = \varnothing, \quad \sigma_{\mathrm{r}} (\cL_*) = \sigma (\cL_*) = \psi(\C^d).
        \end{equation*}
    \end{theorem}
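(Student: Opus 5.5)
We prove the two assertions of Theorem \ref{theorem-zero-drift-spectral-types} separately: absence of eigenvectors via continuity of characteristic functions, and the residual character of the spectrum via Weyl operators in the dual. Throughout we use Theorem \ref{theorem-zero-drift-case-spectrum}, which gives $ \sigma(\cL_*)=\psi(\C^d) $.

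\emph{Point spectrum.} Suppose $ Y\in\dom\cL_* $, $ Y\neq0 $, and $ \cL_*(Y)=\lambda Y $. Then $ \cT_{*t}(Y)=\me^{\lambda t}Y $. With $ \mathbf{Z}=0 $, identity \eqref{eq-evolution-predual-semigroup-characteristic-function} becomes
\begin{equation*}
    \me^{t\psi(z)}\chi_Y(z)=\me^{\lambda t}\chi_Y(z),\qquad t\ge0,\ z\in\C^d .
\end{equation*}
Hence $ \chi_Y(z)=0 $ at every $ z $ with $ \psi(z)\neq\lambda $, since otherwise differentiating in $ t $ at $ t=0 $ gives $ \psi(z)=\lambda $. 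The level set $ \{z:\psi(z)=\lambda\} $ has empty interior. Indeed, $ \Re\psi(z)=-\tfrac12\mathbf{z}^T\mathbf{C}\mathbf{z} $, and irreducibility gives $ \mathbf{C}>0 $, so the real part is a nondegenerate quadratic form whose level sets are ellipsoids or a point, hence null and nowhere dense. Continuity of $ \chi_Y $ then forces $ \chi_Y\equiv0 $, and injectivity of the characteristic function gives $ Y=0 $, a contradiction. Therefore $ \sigma_{\mathrm p}(\cL_*)=\varnothing $.

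\emph{Residual spectrum.} Fix $ \lambda=\psi(z_0) $. By \eqref{eq-evolution-Wz-zero-drift-case}, $ \cT_t(W(z_0))=\me^{t\lambda}W(z_0) $. Then for $ Y\in\dom\cL_* $, using \eqref{eq-predual-pairing},
\begin{equation*}
    \Tr\bigl(\cT_{*t}(Y)W(z_0)\bigr)=\me^{t\lambda}\Tr\bigl(YW(z_0)\bigr).
\end{equation*}
Differentiating at $ t=0 $ in trace norm gives $ \Tr(\cL_*(Y)W(z_0))=\lambda\Tr(YW(z_0)) $. Thus the nonzero continuous functional $ Y\mapsto\Tr(YW(z_0)) $ annihilates $ \ran(\lambda-\cL_*) $, so that range is not dense. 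Combined with injectivity, which holds because the point spectrum is empty, this gives $ \psi(\C^d)\subset\sigma_{\mathrm r}(\cL_*) $. The reverse inclusion $ \sigma_{\mathrm r}\subset\sigma=\psi(\C^d) $ is trivial.

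The only delicate step is the null-set claim for the level sets. The real part alone suffices because $ \mathbf{C}>0 $, so no analysis of the imaginary part, which carries the displacement $ \zeta $, is required. This step is therefore routine.
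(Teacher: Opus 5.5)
Your proof is correct and follows the paper's argument essentially step for step. You exclude eigenvectors by observing that the open set $\{\chi_Y\neq0\}$ lies in the level set $\{\psi=\lambda\}$, which has empty interior because $\Re\psi(z)=-\tfrac12\mathbf{z}^T\mathbf{C}\mathbf{z}$ with $\mathbf{C}>0$; you then get residuality because $Y\mapsto\Tr(YW(z_0))$ annihilates $\ran(\psi(z_0)-\cL_*)$, combined with injectivity. The paper does exactly this, reusing the range argument from the proof of Theorem \ref{theorem-zero-drift-case-spectrum}.
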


    Proofs of all three statements are given in Appendix \ref{appendix-periodic-proofs}.

    \subsection{Nonzero periodic drift}

    Suppose $ \mathbf{Z}\neq0 $ and the flow is periodic. Its return times form a proper nonzero closed subgroup of $ \R $, so the least positive period exists. We denote it by
    \begin{equation*}
        \tau:=\min\{t>0:\me^{t\mathbf{Z}}=\1\}.
    \end{equation*}
    Recall the accumulated diffusion $ \mathbf{S}_\tau $ and displacement $ \mathbf{b}_\tau $ from \eqref{eq-definition-boldface-S-t} and \eqref{eq-definition-b-t}. The generator spectrum is obtained by allowing every logarithmic branch of the multiplier at time $ \tau $.

    \begin{theorem} \label{theorem-spectrum-predual-generator-periodic-critical-drift}
        Assume that $ \mathbf{Z}\neq0 $ and $ \me^{\tau\mathbf{Z}}=\1 $, where $ \tau $ is the least positive period. Then
        \begin{equation*}
            \sigma(\cL_*) = \bigcup_{n \in \Z} \left\{ -\frac{1}{2\tau} \mathbf{z}^T \mathbf{S}_\tau \mathbf{z} + \frac{\mi}{\tau} ( \mathbf{b}_\tau^T \mathbf{z} + 2 \pi n ) : \mathbf{z} \in \R^{2d} \right\}.
        \end{equation*}
    \end{theorem}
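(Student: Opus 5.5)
The plan is to combine the return-time identity $\sigma(\cT_{*\tau})=\overline{\phi_\tau(\C^d)}$, which Appendix \ref{appendix-periodic-proofs} establishes through the representation of $\cT_{*\tau}$ as a Gaussian average of Weyl conjugations, with a spectral mapping argument that keeps track of periodicity.

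The spectral mapping theorem for point and residual spectrum gives one inclusion. For a $C_0$-semigroup, $\me^{\tau\sigma(\cL_*)}\subset\sigma(\cT_{*\tau})$. Since $\phi_\tau(z)=\exp\{-\tfrac12\mathbf z^T\mathbf S_\tau\mathbf z+\mi\mathbf b_\tau^T\mathbf z\}$, any $\eta\in\sigma(\cL_*)$ satisfies $\me^{\tau\eta}\in\overline{\phi_\tau(\C^d)}$. The right-hand set $\Sigma$ of the theorem is exactly the full preimage $\{\eta:\me^{\tau\eta}\in\phi_\tau(\C^d)\}$. The range $\phi_\tau(\C^d)$ is closed away from $0$ because $\mathbf S_\tau>0$ by \eqref{eq-positive-diffusion-gramian}, so $|\phi_\tau(z)|\to0$ as $|z|\to\infty$. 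The value $0$ is never of the form $\me^{\tau\eta}$. Hence $\sigma(\cL_*)\subset\Sigma$.

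The reverse inclusion is the main obstacle, because the spectral mapping theorem gives no lower bound on $\sigma(\cL_*)$ from $\sigma(\cT_{*\tau})$. I would argue directly, in the spirit of the zero-drift case, using the \emph{Weyl orbit average}. Fix $\mathbf z\in\R^{2d}$ and $n\in\Z$, and let $\eta$ be the corresponding element of $\Sigma$, so that $\me^{\tau\eta}=\phi_\tau(z)$. Set
\begin{equation*}
X:=\int_0^\tau \me^{-\eta s}\,\cT_s(W(z))\,\md s\in\cB(\mathsf h).
\end{equation*}
From $\cT_\tau(W(z))=\phi_\tau(z)W(z)=\me^{\tau\eta}W(z)$ one gets $\cT_t(X)=\me^{\eta t}X$; this is the standard averaging trick over one period. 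It remains to show $X\neq0$. By \eqref{eq-gaussian-weyl-action}, $X=\int_0^\tau \me^{-\eta s}\phi_s(z)W(\me^{sZ}z)\,\md s$. Pairing with coherent states via \eqref{eq-coherent-characteristic-function} gives
\begin{equation*}
\me^{-|\alpha|^2}\langle e(\alpha),Xe(\alpha)\rangle=\int_0^\tau c(s)\,\me^{-|\me^{sZ}z|^2/2}\me^{2\mi\Im\langle\alpha,\me^{sZ}z\rangle}\,\md s,
\end{equation*}
where $c(s)=\me^{-\eta s}\phi_s(z)$ is continuous and nowhere zero. As a function of $\alpha$, this is the Fourier transform of the pushforward of the measure $c(s)\me^{-|\me^{sZ}z|^2/2}\md s$ under the orbit map $s\mapsto\me^{sZ}z$. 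The measure is nonzero and its Fourier transform is injective, so $X\neq0$. Some care is needed when $z\ne0$ has a shorter period; then the pushforward may stack several arcs. In that case the density equals a sum $\sum_k c(s+k\tau')$, which is nonzero by a geometric-series computation in $\me^{-\eta\tau'}\phi$-factors; otherwise I would simply handle it via the minimal period of the orbit. The case $z=0$ gives $\eta=2\pi\mi n/\tau$ and $X=\int\me^{-\eta s}\,\md s\,\1$, which is problematic. There I would instead use the orbit of $\me^{sZ}$ acting on a nonperiodic direction, or closedness of the spectrum, approximating by $z\to0$.

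Once $X\neq0$ satisfies $\cT_t(X)=\me^{\eta t}X$, the functional $Y\mapsto\Tr(YX)$ annihilates $\ran(\eta-\cL_*)$, exactly as after Proposition \ref{proposition-eigenvector-Gaussian-generator}. Hence $\eta\in\sigma(\cL_*)$, and together with the first inclusion this gives the equality of the theorem.
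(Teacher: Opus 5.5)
Your first inclusion is the paper's argument: spectral inclusion, the return-time identity $\sigma(\cT_{*\tau})=\overline{\phi_\tau(\C^d)}$, and closedness of $\phi_\tau(\C^d)$ away from $0$. Your dual eigenoperator $X$ is exactly the paper's period average. When the orbit $s\mapsto\me^{sZ}z$ has least period exactly $\tau$, your Fourier argument for $X\neq0$ is a valid substitute for the paper's Stone--Weierstrass argument. In that case the orbit map is a homeomorphism of $\R/\tau\Z$ onto its image. So the pushforward of the nonzero measure $f(s)\,\md s$, with $f$ continuous, $\tau$-periodic and nowhere zero, is a nonzero finite measure on $\R^{2d}$, and its Fourier--Stieltjes transform cannot vanish identically.

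The gap is your treatment of shorter periods. Suppose the orbit of $z$ has least period $\tau/m$ with $m\ge2$. The cocycle identity $\phi_{s+t}(z)=\phi_s(z)\phi_t(\me^{sZ}z)$ gives
\begin{equation*}
X=\Bigl(\sum_{k=0}^{m-1}q^k\Bigr)\int_0^{\tau/m}\me^{-\eta s}\phi_s(z)W(\me^{sZ}z)\,\md s,
\qquad q=\me^{-\eta\tau/m}\phi_{\tau/m}(z).
\end{equation*}
Because the orbit is $\tau/m$-periodic, $\mathbf z^T\mathbf S_\tau\mathbf z=m\,\mathbf z^T\mathbf S_{\tau/m}\mathbf z$ and $\mathbf b_\tau^T\mathbf z=m\,\mathbf b_{\tau/m}^T\mathbf z$. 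Hence $q=\me^{-2\pi\mi n/m}$, so the geometric sum is $0$ whenever $m\nmid n$, and then $X=0$. Your claim that this sum is nonzero is therefore false. For example, take frequencies $1$ and $2$, $\tau=2\pi$, $z$ in the frequency-$2$ plane and $n$ odd.

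Averaging over the minimal period does not help either. The operator $\int_0^{\tau/m}\me^{-\eta s}\cT_s(W(z))\,\md s$ is an eigenoperator only if $\me^{\eta\tau/m}=\phi_{\tau/m}(z)$, which holds exactly when $m\mid n$. The case $z=0$, $n\neq0$ is the same phenomenon.

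What is missing is the paper's density step. The set $\cD_\tau$ of points whose orbit has least period exactly $\tau$ has complement $\bigcup_{m\ge2}\ker(\me^{(\tau/m)\mathbf Z}-\1)$. This is a countable union of proper subspaces, proper because $\tau$ is the least period of $\me^{t\mathbf Z}$, so $\cD_\tau$ is dense. Only then can you take $z_k\in\cD_\tau$ with $z_k\to z$ and $n$ fixed. The nonzero dual eigenoperators give $\mu_k\in\sigma(\cL_*)$, and continuity of $z\mapsto\mu$ together with closedness of the spectrum finishes the argument. You invoke closedness only for $z=0$, and you never show that the approximating points carry nonzero eigenoperators. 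So the reverse inclusion is not established as written.
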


    Spectral inclusion gives one direction. For the other, averaging $ \cT_s(W(z)) $ against a suitable exponential over $ [0,\tau] $ produces bounded dual eigenoperators. They are nonzero whenever the orbit of $ z $ has least period $ \tau $. Such points form a dense subset of phase space, and closedness of the spectrum gives the full formula. The details are in Appendix \ref{appendix-periodic-proofs}.

    Table \ref{tab-periodic-spectrum} expresses these results geometrically. For $ \mathbf{b}_\tau\neq0 $, its final formula follows from the same quadratic minimization as Proposition \ref{proposition-zero-drift-shape}, with $ \mathbf{C} $ replaced by $ \mathbf{S}_\tau/\tau $ and $ \boldsymbol{\zeta} $ by $ \mathbf{b}_\tau/\tau $, followed by translation by $ 2\pi\mi n/\tau $. If $ \mathbf{b}_\tau=0 $, the positive quadratic form ranges over $ [0,\infty) $. In particular, $ \mathbf{b}_\tau=0 $ whenever $ \mathbf{Z} $ is invertible, since $ \int_0^\tau\me^{s\mathbf{Z}^T}\,\md s=(\mathbf{Z}^T)^{-1}(\me^{\tau\mathbf{Z}^T}-\1)=0 $.

    \begin{table}[htbp]
        \centering
        \small
        \renewcommand{\arraystretch}{1.8}
        \begin{tabular}{@{}p{3.7cm}p{9.7cm}@{}}
            \toprule
            Drift and displacement & Spectrum of $ \cL_* $ \\
            \midrule
            $ \mathbf{Z}=0 $, $ \zeta=0 $ & $ (-\infty,0] $ \\
            $ \mathbf{Z}=0 $, $ \zeta\neq0 $ & $ \displaystyle\left\{\lambda:\Re\lambda\le-\frac{(\Im\lambda)^2}{2\boldsymbol{\zeta}^T\mathbf{C}^{-1}\boldsymbol{\zeta}}\right\} $ \\
            $ \mathbf{Z}\neq0 $, $ \mathbf{b}_\tau=0 $ & $ \displaystyle\bigcup_{n\in\Z}\left(( -\infty,0]+\frac{2\pi\mi n}{\tau}\right) $ \\
            $ \mathbf{Z}\neq0 $, $ \mathbf{b}_\tau\neq0 $ & $ \displaystyle\bigcup_{n\in\Z}\left\{\lambda:\Re\lambda\le-\frac{(\tau\Im\lambda-2\pi n)^2}{2\tau\mathbf{b}_\tau^T\mathbf{S}_\tau^{-1}\mathbf{b}_\tau}\right\} $ \\
            \bottomrule
        \end{tabular}
        \caption{The spectrum for periodic critical drift under irreducibility. In the last two rows, $ \tau $ is the least positive period. The first two spectra are entirely residual by Theorem \ref{theorem-zero-drift-spectral-types}; the last two rows state the full spectrum without a further spectral-type classification.}
        \label{tab-periodic-spectrum}
    \end{table}

    \section{The open nonperiodic critical case} \label{section-critical-drift-aperiodic}

    The results above leave the following problem open: determine $ \sigma(\cL_*) $ for an irreducible Gaussian QMS whose drift satisfies
    \begin{equation} \label{eq-open-critical-case}
        s(\mathbf{Z})=0, \qquad \me^{t\mathbf{Z}}\neq\1 \quad\text{for every } t>0.
    \end{equation}
    Here nonperiodicity refers to the full linear flow. It includes matrices with both stable and critical directions, matrices with nontrivial Jordan blocks associated with imaginary eigenvalues, and diagonalizable matrices with purely imaginary eigenvalues whose nonzero frequencies are incommensurable. Zero drift is already covered by Section \ref{section-critical-drift} and is excluded from \eqref{eq-open-critical-case}.

    The arguments in the preceding sections do not yield a general spectral formula under \eqref{eq-open-critical-case}. The stable construction uses a faithful invariant Gaussian state, the strictly unstable construction uses an expanding eigendirection, and the periodic argument uses an exact return time. None of these hypotheses is available for the whole remaining class. The classical classification in \cite{FMPS21} provides a useful comparison, but a quantum extension must also control trace-class realizability and the symplectic structure. We leave both the full spectrum and its spectral-type decomposition in this general regime for future work.

    \appendix

    \section{Real-linear operators} \label{section-real-linear-operators}

     Let $ S:\C^d\to\C^d $ be real-linear. There are unique complex matrices $ S_1,S_2 $ such that
    \begin{equation*}
        S z = S_1 z + S_2 \overline{z}, \quad \forall z \in \C^d.
    \end{equation*}

    Under the identification \eqref{eq-definition-iota}, its real matrix representation is
    \begin{equation} \label{eq-identification-real-linear-operators}
        \mathbf{S} := \begin{bmatrix}
            \Re S_1 + \Re S_2 & \Im S_2 - \Im S_1 \\ 
            \Im S_1 + \Im S_2 & \Re S_1 - \Re S_2
        \end{bmatrix}.
    \end{equation}
    We call $\mathbf S$ the \emph{real matrix representation} of $S$.

    Conversely, if a real $2d\times2d$ matrix is written in $d\times d$ blocks as
    \begin{equation*}
        \mathbf{S} = \begin{bmatrix}
            S_{11} & S_{12} \\ 
            S_{21} & S_{22}
        \end{bmatrix},
    \end{equation*}
    then its associated real-linear operator is
    \begin{equation*}
        S z = \left( \frac{S_{11} + S_{22}}{2} + \mi \, \frac{S_{21} - S_{12}}{2}  \right) z + \left( \frac{S_{11} - S_{22}}{2} + \mi \, \frac{S_{12} + S_{21}}{2} \right) \overline{z}.
    \end{equation*}

    The real adjoint is given by
    \begin{equation*}
        S^\sharp z = S_1^* z + S_2^T \overline{z}, \quad \forall z \in \C^d,
    \end{equation*}
    and is represented by $ \mathbf{S}^T $.

    The representation preserves the action and the real inner product.

    \begin{lemma}
        Let $S$ be real-linear and let $y,z\in\C^d$. Then
        \begin{equation*}
            \mathbf{S} \mathbf{z} = \begin{bmatrix}
                \Re S z \\ 
                \Im S z
            \end{bmatrix}, \quad \langle \mathbf{y}, \mathbf{z} \rangle = \Re \langle y, z \rangle.
        \end{equation*}
    \end{lemma}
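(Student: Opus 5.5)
The statement consists of two identities, and both reduce to coordinate calculations with the decomposition $ Sz = S_1 z + S_2\overline{z} $.

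For the first identity, write $ z = x + \mi y $ with $ x = \Re z $ and $ y = \Im z $, so that $ \mathbf{z} = \begin{bmatrix} x \\ y \end{bmatrix} $. Split $ S_1 = A_1 + \mi B_1 $ and $ S_2 = A_2 + \mi B_2 $ into real and imaginary parts. Expanding gives
\begin{equation*}
    Sz = (A_1 + \mi B_1)(x + \mi y) + (A_2 + \mi B_2)(x - \mi y).
\end{equation*}
Collecting terms, the real part is $ (A_1 + A_2)x + (B_2 - B_1)y $. The imaginary part is $ (B_1 + B_2)x + (A_1 - A_2)y $. These are exactly the two block rows of \eqref{eq-identification-real-linear-operators} applied to $ \mathbf{z} $, which proves $ \mathbf{S}\mathbf{z} = \begin{bmatrix} \Re Sz \\ \Im Sz \end{bmatrix} $.

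For the second identity, use that the scalar product is linear in the second variable. Then
\begin{equation*}
    \langle y, z \rangle = \sum_j \overline{y_j}\, z_j.
\end{equation*}
Writing $ y_j = a_j + \mi b_j $ and $ z_j = c_j + \mi e_j $, we get $ \Re(\overline{y_j} z_j) = a_j c_j + b_j e_j $. Summing over $ j $ yields the Euclidean product of $ \iota(y) $ and $ \iota(z) $.

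No step is a genuine obstacle. The only point needing care is the sign bookkeeping in the off-diagonal blocks, namely $ \Im S_2 - \Im S_1 $ in the upper right block. This comes from $ \mi B_1 \cdot \mi y = -B_1 y $ and $ \mi B_2 \cdot (-\mi y) = B_2 y $, so I would write out these two products explicitly.
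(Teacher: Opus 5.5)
Your proposal is correct and is the same direct coordinate computation the paper intends: the paper just states that both identities follow from \eqref{eq-definition-iota} and \eqref{eq-identification-real-linear-operators}. Your expansions of $S_1 z$ and $S_2\overline{z}$ produce exactly the four blocks, with the correct signs, and your computation of $\Re(\overline{y_j}z_j)$ settles the inner-product identity. One cosmetic point: you reuse $y$ for $\Im z$ in the first part, while $y$ also denotes the second vector in the statement, so rename one of them when you write it up.
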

    \begin{proof}
        Both identities follow directly from \eqref{eq-definition-iota} and \eqref{eq-identification-real-linear-operators}.
    \end{proof}

    \section{Proofs for stable drift} \label{appendix-stable-proofs}

    Throughout this appendix, the drift $ \mathbf{Z} $ is stable and $ \rho $ denotes the unique faithful normal invariant state of the semigroup. We retain the notation of Section \ref{eq-stable-drift}.

    \subsection{The eigenvalue criterion and spectral classification}

    \begin{proof}[Proof of Lemma \ref{lemma-eigenvalue-criterion}]
        Suppose that $ Y \in \dom\cL_* $ and \eqref{eq-cL-star-Y-lambda-Y} holds. Then $ \cT_{*t}(Y) = \me^{t\lambda}Y $. Taking characteristic functions and using \eqref{eq-evolution-predual-semigroup-characteristic-function} and \eqref{eq-phi-t-z-as-quotient}, we obtain
        \begin{equation} \label{eq-median-computations}
            \chi_Y (\me^{t Z} z) \frac{\chi_\rho (z)}{\chi_\rho (\me^{t Z} z)} =\Tr ( Y W(\me^{t Z} z) ) \phi_t (z) = \Tr( \cT_{*t} (Y) W(z) ) = \Tr ( \me^{t \lambda} Y W(z)) = \me^{t \lambda} \chi_Y (z),
        \end{equation}
        which immediately implies \eqref{eq-gY-etZ-gY-z}. 

        Conversely, suppose that \eqref{eq-gY-etZ-gY-z} holds. Reversing the calculation in \eqref{eq-median-computations} and using injectivity of characteristic functions gives $ \cT_{*t}(Y) = \me^{t\lambda}Y $. Hence $ Y \in \dom\cL_* $ and $ \cL_*(Y) = \lambda Y $.
    \end{proof}

    \begin{proof}[Proof of Theorem \ref{theorem-strict-negative-implies-point-spectrum}]
        Choose $ \lambda \in \sigma(\mathbf{Z}) $. By Lemma \ref{lemma-existence-Q-lambda}, there is a nonzero positive semidefinite matrix $ Q_\lambda \in M_{2d}(\R) $ such that
        \begin{equation} \label{eq-lyapunov-type-identity-Z-Q-lambda}
            \mathbf{Z}^T Q_\lambda + Q_\lambda \mathbf{Z} = 2 \Re(\lambda) Q_\lambda.
        \end{equation}
        Using \eqref{eq-lyapunov-type-identity-Z-Q-lambda}, we compute
        \begin{equation*}
            \frac{\md}{\md t} \left( \me^{- 2 \Re(\lambda) t} \me^{t \mathbf{Z}^T} Q_\lambda \me^{t \mathbf{Z}} \right) = \me^{-2 \Re (\lambda) t} \me^{t \mathbf{Z}^T} \left( - 2 \Re(\lambda) Q_\lambda + \mathbf{Z}^T Q_\lambda + Q_\lambda \mathbf{Z} \right) \me^{t \mathbf{Z}} = 0,
        \end{equation*}
        which implies that 
        \begin{equation} \label{eq-sandwiched-Q-lambda}
            \me^{t \mathbf{Z}^T} Q_\lambda \me^{t \mathbf{Z}} = \me^{2 \Re(\lambda) t} Q_\lambda, \quad \forall t \in \R.
        \end{equation}
    
        Define the nonnegative quadratic form on real phase space by
        \begin{equation} \label{eq-definition-q-lambda}
            q_{\lambda} (z) := \mathbf{z}^T Q_\lambda \mathbf{z}. 
        \end{equation}
        It follows that
        \begin{equation*}
            q_\lambda(\me^{t Z} z) = \mathbf{z}^T \me^{t \mathbf{Z}^T} Q_\lambda \me^{t \mathbf{Z}} \mathbf{z} = \me^{2 \Re(\lambda) t} \mathbf{z}^T Q_\lambda \mathbf{z} = \me^{2 \Re(\lambda) t} q_\lambda (z).
        \end{equation*}

        To obtain the scaling required by Lemma \ref{lemma-eigenvalue-criterion}, fix $ \eta \in \C $ with $ \Re\eta < 0 $ and define
        \begin{equation*}
            g_\eta (z) := ( q_\lambda (z) )^{\eta / (2 \Re \lambda)} := \begin{cases}
                \exp{ \frac{\eta}{2 \Re \lambda} \log q_\lambda (z)  }, & q_\lambda (z) \neq 0, \\
                0, & q_\lambda (z) = 0,
            \end{cases}
        \end{equation*} 
        where the logarithm is real. This defines a continuous function because $ \Re\eta/(2\Re\lambda) > 0 $. Indeed, $ q_\lambda(z) = 0 $ precisely when $ \mathbf{z}\in\ker Q_\lambda $, and $ \abs{g_\eta(z)} = q_\lambda(z)^{\Re\eta/(2\Re\lambda)} \to 0 $ as $ q_\lambda(z)\to0^+ $.

        Equation \eqref{eq-sandwiched-Q-lambda} also shows that $ \mathbf{z}\in\ker Q_\lambda $ if and only if $ \me^{t\mathbf{Z}}\mathbf{z}\in\ker Q_\lambda $. When $ q_\lambda(z)>0 $, we therefore have
        \begin{equation} \label{eq-g-eta-almostly-meets-criterion}
            g_\eta ( \me^{t Z} z ) = \exp{ \frac{\eta}{2 \Re \lambda} \log q_\lambda(\me^{t Z} z) } = \exp{\frac{\eta}{2 \Re \lambda} \log ( \me^{2 \Re(\lambda) t}  q_\lambda(z) ) } = \me^{\eta t} g_\eta (z). 
        \end{equation}
        If $ q_\lambda(z)=0 $, both sides of \eqref{eq-g-eta-almostly-meets-criterion} vanish. Thus the identity holds for every $ z\in\C^d $.

        Set $ \alpha := \eta/(2\Re\lambda) $, so that $ \Re\alpha>0 $. Lemma \ref{lemma-construction-Y-alpha} constructs a nonzero trace-class operator $ Y_\alpha $ satisfying $ \chi_{Y_\alpha}(z)=\chi_\rho(z)q_\lambda(z)^\alpha $. Hence $ g_\eta $ is its relative characteristic function. Equations \eqref{eq-g-eta-almostly-meets-criterion} and \eqref{eq-gY-etZ-gY-z} imply $ \cL_*(Y_\alpha)=\eta Y_\alpha $, as required.
    \end{proof}

    \begin{proof}[Proof of Theorem \ref{theorem-stable-spectrum}]
        Since $ \cL_* $ generates a strongly continuous contraction semigroup, the Hille--Yosida theorem and Theorem \ref{theorem-strict-negative-implies-point-spectrum} give
        \begin{equation*}
            \{ \eta : \Re \eta < 0 \} \subset \sigma_{\mathrm{p}} (\cL_*) \subset \sigma (\cL_*) \subset \{ \eta : \Re \eta \le 0 \}.
        \end{equation*}
        Because the spectrum is closed, we have 
        \begin{equation*}
            \sigma (\cL_*) = \{ \eta : \Re \eta \le 0  \}.
        \end{equation*}
        The invariant state satisfies $ \cL_*(\rho)=0 $, so $ 0\in\sigma_{\mathrm{p}}(\cL_*) $. To exclude other imaginary eigenvalues, suppose that $ \theta\in\R\setminus\{0\} $ and $ \cL_*(Y)=\mi\theta Y $ for a nonzero trace-class operator $ Y $. Then
        \begin{equation*}
            \cT_{*t} (Y) = \me^{\mi \theta t} Y.
        \end{equation*}
        Taking characteristic functions and using \eqref{eq-evolution-predual-semigroup-characteristic-function}, we obtain
        \begin{equation*}
            \me^{\mi \theta t} \chi_Y (z) = \phi_t (z) \chi_Y (\me^{t Z} z).
        \end{equation*}
        By Lemma \ref{lemma-eigenvalue-criterion} we have $ g_Y (\me^{t Z} z) = \me^{\mi \theta t} g_Y (z) $. Taking $ z = 0 $ shows $ g_Y (0) = \me^{\mi \theta t} g_Y (0) $ for all $ t \ge 0 $ and hence $ g_Y (0) = \chi_Y (0) / \chi_\rho (0) = \Tr (Y) = 0 $. Since $ Y \neq 0 $, injectivity of the Weyl transform gives a $ z_0 $ such that $ g_Y (z_0) \neq 0 $. Stability of $ Z $ and continuity of $ g_Y $ yield 
        \begin{equation*}
            g_Y (\me^{t Z} z_0) \rightarrow g_Y (0) = 0,
        \end{equation*}
        whereas 
        \begin{equation*}
            \abs{\me^{\mi \theta t} g_Y (z_0)} = \abs{g_Y (z_0)} > 0, \quad \forall t \ge 0,
        \end{equation*}
        a contradiction. This proves $ \sigma_{\mathrm{p}}(\cL_*)\cap\mi\R=\{0\} $ and hence \eqref{eq-point-spectrum-cL-star}. Finally, \cite[Chapter IV, Proposition 1.10]{EngelNagel2000} gives
        \begin{equation*}
            \mi \R = \partial \sigma (\cL_*) \subset \sigma_{\mathrm{ap}} (\cL_*),
        \end{equation*}
        where $ \partial\sigma(\cL_*) $ is the topological boundary of the spectrum. The open left half-plane already belongs to the point spectrum. Therefore
        \begin{equation*}
            \sigma_{\mathrm{ap}} (\cL_*) = \sigma (\cL_*) = \{ \eta : \Re \eta  \le 0 \},
        \end{equation*}
        which is exactly \eqref{eq-full-spectrum-cL-star}.
    \end{proof}

    \subsection{Trace-class construction and Gaussian diffusion estimates}

    The following lemmas justify the finite-dimensional construction and its realization in trace norm.

    \begin{lemma} \label{lemma-existence-Q-lambda}
        Let $ n\ge2 $ and $ A\in M_n(\R) $. For every $ \lambda\in\sigma(A) $, there is a nonzero positive semidefinite matrix $ Q_\lambda\in M_n(\R) $ of rank at most two such that
        \begin{equation} \label{eq-lemma-existence-Q-lambda}
            A^T Q_\lambda + Q_\lambda A = 2 \Re(\lambda) Q_\lambda.
        \end{equation}
    \end{lemma}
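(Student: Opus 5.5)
The plan is to build $ Q_\lambda $ from a left eigenvector of $ A $. Since $ \lambda\in\sigma(A) = \sigma(A^T) $, we can choose a nonzero $ v\in\C^n $ with $ A^T v = \lambda v $, equivalently $ v^T A = \lambda v^T $ after conjugation. The natural candidate is the Hermitian rank-one matrix $ \overline{v}v^T $, or more precisely its real part:
\begin{equation*}
    Q_\lambda := \Re\bigl(\overline{v}\,v^T\bigr) = \frac{1}{2}\bigl(\overline{v}v^T + v\overline{v}^T\bigr).
\end{equation*}
Writing $ v = u + \mi w $ with $ u,w\in\R^n $, this equals $ uu^T + ww^T $. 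Hence $ Q_\lambda $ is real, symmetric and positive semidefinite, has rank at most two, and is nonzero because $ v\neq0 $.

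Next, verify the Lyapunov-type identity. From $ A^T v = \lambda v $ and the reality of $ A $ we get $ A^T\overline{v} = \overline{\lambda}\,\overline{v} $. Then
\begin{equation*}
    A^T\overline{v}v^T + \overline{v}v^T A = \overline{\lambda}\,\overline{v}v^T + \overline{v}(A^Tv)^T = (\overline{\lambda}+\lambda)\,\overline{v}v^T = 2\Re(\lambda)\,\overline{v}v^T.
\end{equation*}
Because $ A $ is real, taking real parts of both sides yields
\begin{equation*}
    A^TQ_\lambda + Q_\lambda A = 2\Re(\lambda)Q_\lambda.
\end{equation*}
The identity holds equally for $ \overline{v}v^T $ and $ v\overline{v}^T $, so the symmetrized version is consistent.

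In the real case, where $ \lambda\in\R $, we may take $ v $ real. Then $ Q_\lambda = vv^T $ has rank one, and this agrees with the formula above with $ w = 0 $. The hypothesis $ n\ge2 $ plays no essential role; it only reflects the setting $ n = 2d $.

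There is essentially no obstacle here. The only point needing care is to use a left eigenvector, that is, an eigenvector of $ A^T $ rather than of $ A $, so that the quadratic form $ \mathbf{z}\mapsto\abs{v^T\mathbf{z}}^2 $ scales correctly along $ \me^{tA} $. Indeed, $ v^T\me^{tA}\mathbf{z} = \me^{\lambda t}v^T\mathbf{z} $, which gives $ \abs{v^T\me^{tA}\mathbf{z}}^2 = \me^{2\Re(\lambda)t}\abs{v^T\mathbf{z}}^2 $, and this is the identity used in the proof of Theorem \ref{theorem-strict-negative-implies-point-spectrum}.
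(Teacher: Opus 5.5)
Your proof is correct and is essentially the paper's: your $Q_\lambda=\Re(\overline{v}v^T)$ is the same matrix as the paper's $\Re(vv^*)=\Re v(\Re v)^T+\Im v(\Im v)^T$, built from an eigenvector $v$ of $A^T$, and you verify the identity the same way, by taking real parts of the complex rank-one identity. One small wording slip: $v^TA=\lambda v^T$ follows from $A^Tv=\lambda v$ by transposition, not conjugation.
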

    \begin{proof}
        Since $ \sigma(A)=\sigma(A^T) $, choose $ v\in\C^n\setminus\{0\} $ with $ A^Tv=\lambda v $, and set
        \begin{equation} \label{eq-definition-Q-lambda}
            Q_\lambda := \Re ( v v^* ) = \Re v (\Re v)^T + \Im v (\Im v)^T.
        \end{equation}
        Therefore,
        \begin{align*}
            A^T Q_\lambda + Q_\lambda A = \Re \left( A^T v v^* + v v^* A \right) = \Re \left( (A^T v) v^* + v (A^T v)^* \right) = 2 \Re (\lambda) Q_\lambda,
        \end{align*}
        which proves \eqref{eq-lemma-existence-Q-lambda}. Formula \eqref{eq-definition-Q-lambda} also shows that $ Q_\lambda $ is real, nonzero, positive semidefinite, and has rank at most two.
    \end{proof}

    \begin{lemma} \label{lemma-construction-Y-alpha}
        Let $ Q_\lambda $ and $ q_\lambda $ be as in \eqref{eq-definition-q-lambda}. For every $ \alpha\in\C $ with $ \Re\alpha>0 $, there is a nonzero trace-class operator $ Y_\alpha $ such that
        \begin{equation} \label{eq-characteristic-function-Y-alpha-equal-to-rho-q-lambda}
            \chi_{Y_\alpha} ( z ) = \chi_\rho (z) q_\lambda (z)^{\alpha}.
        \end{equation}
    \end{lemma}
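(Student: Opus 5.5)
The plan is to realize the factor $q_\lambda(z)^\alpha$ as an integral, over an auxiliary time, of Gaussian damping factors $\me^{-sq_\lambda(z)}$. Each such factor corresponds to a genuine trace-class operation on $\rho$.

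\emph{Auxiliary semigroup.} Consider the Gaussian semigroup with zero drift, zero displacement and diffusion matrix $2Q_\lambda$. Its quantum diffusion matrix is $\mathbf{C}_{0}=2Q_\lambda\ge 0$, so Theorem \ref{theo-explicit-action-on-weyl-operators} (with the complete-positivity criterion \eqref{eq-quantum-diffusion-matrix}) yields a Gaussian QMS $(\mathcal{P}_s)_{s\ge0}$. By \eqref{eq-evolution-predual-semigroup-characteristic-function}, its predual acts as
\begin{equation*}
\chi_{\mathcal{P}_{*s}(Y)}(z)=\me^{-s q_\lambda(z)}\chi_Y(z).
\end{equation*}
In particular, $\rho_s:=\mathcal{P}_{*s}(\rho)$ is a Gaussian state with covariance $\mathbf{S}+2sQ_\lambda$ and characteristic function $\chi_\rho\,\me^{-sq_\lambda}$. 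Irreducibility of this auxiliary semigroup is not needed.

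\emph{Subordination formula.} Fix an integer $k>\Re\alpha$. For $x\ge0$ and $\Re\alpha\in(0,k)$, the elementary scalar identity
\begin{equation*}
x^\alpha=c_{\alpha,k}\int_0^\infty (1-\me^{-sx})^k\, s^{-\alpha-1}\,\md s
\end{equation*}
holds. Here $c_{\alpha,k}^{-1}=\int_0^\infty(1-\me^{-u})^k u^{-\alpha-1}\,\md u$, which is finite and nonzero. Finiteness follows from the behaviour $u^{k-\Re\alpha-1}$ at $0$ and $u^{-\Re\alpha-1}$ at $\infty$. Nonvanishing holds because the integral equals $\Gamma(-\alpha)\sum_{j}\binom{k}{j}(-1)^j j^{\alpha}$ by analytic continuation; alternatively, I would just take $c_{\alpha,k}$ as the reciprocal after checking nonvanishing. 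I then define the Bochner integral in $\cB_1(\mathsf{h})$
\begin{equation*}
Y_\alpha:=c_{\alpha,k}\int_0^\infty (\mathrm{I}-\mathcal{P}_{*s})^k(\rho)\, s^{-\alpha-1}\,\md s .
\end{equation*}
Since $Y\mapsto\Tr(YW(z))$ is a continuous functional, it commutes with the integral. The $\mathcal{P}$-evolution formula then gives $\chi_{Y_\alpha}(z)=\chi_\rho(z)\,c_{\alpha,k}\int_0^\infty(1-\me^{-sq_\lambda(z)})^k s^{-\alpha-1}\md s=\chi_\rho(z)q_\lambda(z)^\alpha$. The integrand is strongly continuous in $s$, so the integral is well defined. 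Moreover, $Y_\alpha\neq0$ because $Q_\lambda\neq0$ makes $q_\lambda\not\equiv0$ and $\chi_\rho$ never vanishes.

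\emph{Main obstacle: convergence in trace norm.} Near $s=\infty$ the integrand is bounded by $2^k s^{-\Re\alpha-1}$, which is integrable. Near $s=0$ I need $\norm{(\mathrm{I}-\mathcal{P}_{*s})^k\rho}_1\le Cs^k$. This amounts to showing that $\rho$ lies in the domain of the $k$-th power of the generator $\mathcal{G}_*$ of $(\mathcal{P}_{*s})$, because then
\begin{equation*}
(\mathrm{I}-\mathcal{P}_{*s})^k\rho=(-1)^k\int_{[0,s]^k}\mathcal{P}_{*(s_1+\cdots+s_k)}\mathcal{G}_*^k\rho\,\md s_1\cdots\md s_k .
\end{equation*}
I would obtain this bound by working directly with the explicit Gaussian family $s\mapsto\rho_s$, i.e.\ $\rho_{s}$ equals the Gaussian state with covariance $\mathbf{S}+2sQ_\lambda$. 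The first step is to show that this map is $k$ times differentiable in trace norm on $[0,\infty)$, with derivatives bounded on $[0,1]$. Since $\mathbf{S}+\mi\mathbf{J}>0$ (faithfulness, Proposition \ref{proposition-stable-invariant-state}), each $\rho_s$ is a faithful Gaussian state. Derivatives with respect to the covariance are then finite sums of terms $R_1\cdots R_m\rho_s R'_1\cdots R'_n$ with field operators $R_i,R'_j$, and such terms are trace class with locally uniform bounds. For example, after a symplectic diagonalization they reduce to thermal states, for which polynomial moments times $\rho_s^{1/2}$ are Hilbert--Schmidt. Given this smoothness, the $k$-th finite difference satisfies $(\mathrm{I}-\mathcal{P}_{*s})^k\rho=\sum_j\binom{k}{j}(-1)^j\rho_{js}=O(s^k)$ in trace norm by Taylor's formula, and this is exactly what is needed. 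If the direct differentiability argument proves awkward, my fallback is to take $k=1$ when $\Re\alpha<1/2$ and use a H\"older-type bound $\norm{\rho-\rho_s}_1=O(s^{1/2})$ obtained from a fidelity formula for Gaussian states. For larger $\Re\alpha$ I would then multiply these $Y_\alpha$ by the polynomial factor $q_\lambda^m$, realized as a finite-order differential operator applied to smooth Gaussian data.
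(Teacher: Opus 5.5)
Your overall strategy matches the paper's: subordinate in the time of the auxiliary diffusion with diffusion matrix $2Q_\lambda$, form a trace-norm Bochner integral, and read off the result through characteristic functions. There are, however, two gaps.

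\emph{The normalising constant.} You assert that $c_{\alpha,k}^{-1}=\int_0^\infty(1-\me^{-u})^k u^{-\alpha-1}\,\md u$ is nonzero for every integer $k>\Re\alpha$. This is false for complex $\alpha$, and the formula you quote exhibits the zeros. For $k=2$ and nonintegral $\alpha$ it equals $\Gamma(-\alpha)(2^\alpha-2)$, which vanishes at $\alpha=1+2\pi\mi n/\ln 2$, $n\neq0$. These values satisfy $0<\Re\alpha<2$, and they are needed, because $\alpha=\eta/(2\Re\lambda)$ ranges over the whole right half-plane. There are two ways to repair this step:
\begin{itemize}
\item Let $k$ depend on $\alpha$. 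For integral $\alpha$ the integrand is positive. For nonintegral $\alpha$, the $k$-th differences of $j\mapsto j^\alpha$ at $0$ cannot vanish for all large $k$, since $j^\alpha$ is not a polynomial in $j$.
\item More simply, use the representation the paper uses, $q^\alpha=\Gamma(m-\alpha)^{-1}\int_0^\infty t^{m-\alpha-1}q^m\me^{-tq}\,\md t$. Its constant is finite and nonzero for $0<\Re\alpha<m$, and it replaces your $k$-th differences of $s\mapsto\rho_s$ by $m$-th derivatives.
\end{itemize}

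\emph{The trace-norm estimate.} The estimate $\norm{(\mathrm{I}-\mathcal{P}_{*s})^k\rho}_1=O(s^k)$ is true, but it is the whole content of the lemma, and you only assert it. To identify formal expressions $R_1\cdots R_m\rho_sR'_1\cdots R'_n$ with genuine trace-norm derivatives you must:
\begin{itemize}
\item settle the domain questions for the unbounded field operators;
\item prove trace-class bounds that are uniform in $s$.
\end{itemize}
None of this is carried out. The fallback has the same problem, since multiplying by $q_\lambda^m$ amounts to applying the $m$-th power of the generator to $Y_\alpha$.

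The missing idea, which is how the paper proceeds, is that faithfulness is a \emph{strict} inequality. Since $\mathbf{S}-\mi\mathbf{J}>0$, there is $\epsilon>0$ with $\mathbf{S}-2\epsilon Q_\lambda-\mi\mathbf{J}>0$. Hence the Gaussian state $\sigma$ with mean $\omega$ and covariance $\mathbf{S}-2\epsilon Q_\lambda$ exists, and $\rho_s=\mathcal{P}_{*(\epsilon+s)}(\sigma)$ for $s\ge0$. Now write $\mathcal{P}_{*t}(Y)$ as an average of Weyl conjugations of $Y$ against the heat kernel $f_t$ on $\R^r$, $r=\rank Q_\lambda$. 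This gives $\norm{\partial_t^m\mathcal{P}_{*t}(Y)}_1\le\norm{\partial_t^m f_t}_{L^1}\norm{Y}_1\le c_{r,m}t^{-m}\norm{Y}_1$. Consequently every trace-norm derivative of $s\mapsto\rho_s$ exists on $[0,\infty)$ and is bounded by $c_{r,m}\epsilon^{-m}$. Your finite-difference bound then follows at once, with no moment calculus.
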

    \begin{proof}
        Define the auxiliary Gaussian diffusion semigroup $ (\cG_t)_{t\ge0} $ by
        \begin{equation*}
            \cG_t (W(z)) := \me^{-t q_\lambda (z)} W(z) = \exp{ - t \, \mathbf{z}^T Q_\lambda \mathbf{z} } W(z).
        \end{equation*}
        Its drift and displacement are zero, and its diffusion matrix is $ 2Q_\lambda $.

        We first choose a Gaussian state with slightly smaller covariance. Faithfulness of $ \rho $ gives $ \mathbf{S}-\mi\mathbf{J}=\overline{\mathbf{S}+\mi\mathbf{J}}>0 $. Choose $ \epsilon>0 $ sufficiently small that
        \begin{equation*}
            \mathbf{S} - 2 \epsilon Q_\lambda - \mi \mathbf{J} > 0.
        \end{equation*}
        Since $ 0\le Q_\lambda\le\norm{Q_\lambda}\1 $, it is enough to take
        \begin{equation*}
            0 < \epsilon < \frac{\lambda_{\min} (\mathbf{S} - \mi \mathbf{J})}{2 \norm{Q_\lambda}},
        \end{equation*}
        where $ \lambda_{\min} $ denotes the smallest eigenvalue. Let $ \sigma $ be the Gaussian state with mean $ \omega $ and covariance matrix $ \mathbf{S}-2\epsilon Q_\lambda $. By \eqref{eq-characteristic-function-rho},
        \begin{align} 
            \chi_\sigma (z) &= \exp{ 2 \mi \, \boldsymbol{\omega}^T \mathbf{J} \mathbf{z} - \frac{1}{2} \mathbf{z}^T ( \mathbf{S} - 2 \epsilon Q_\lambda ) \mathbf{z} } \nonumber \\
            &= \me^{\epsilon \, \mathbf{z}^T Q_\lambda \mathbf{z} } \cdot \exp{ 2 \mi \, \boldsymbol{\omega}^T \mathbf{J} \mathbf{z} -\frac{1}{2} \mathbf{z}^T \mathbf{S} \mathbf{z} } \nonumber \\
            &= \me^{\epsilon \, q_\lambda(z) } \cdot \chi_\rho (z). \label{eq-relation-sigma-and-rho}
        \end{align}

        The auxiliary semigroup evolves $ \sigma $ into $ \rho $ at time $ \epsilon $. Indeed, for $ t\ge0 $,
        \begin{equation} \label{eq-characteristic-function-evolution-of-sigma-under-auxiliary-QMS}
            \chi_{\cG_{* t} (\sigma) } (z) = \Tr( \cG_{* t} (\sigma) W(z) ) = \Tr( \sigma \cG_{t} (W(z)) ) =  \me^{- t \mathbf{z}^T Q_\lambda \mathbf{z} } \chi_\sigma (z) = \me^{-t q_\lambda(z)} \chi_\sigma (z).
        \end{equation}
        Setting $ t=\epsilon $ and using \eqref{eq-relation-sigma-and-rho} yields
        \begin{equation*}
            \chi_{\cG_{*\epsilon} (\sigma)} (z) = \me^{-\epsilon q_\lambda(z)} \me^{\epsilon q_\lambda (z)} \chi_\rho(z) = \chi_\rho (z),
        \end{equation*} 
        and $ \cG_{*\epsilon} (\sigma) = \rho $ follows from the injectivity of the Weyl transform.

        To estimate the time derivatives of $ \cG_{*t} $, let $ r=\rank Q_\lambda $ and choose a real $ 2d\times r $ matrix $ R $ such that
        \begin{equation} \label{eq-definition-R}
            R R^T = \frac{1}{4} \mathbf{J}^T Q_\lambda \mathbf{J}.
        \end{equation}
        Let $ f_t $ be the heat kernel on $ \R^r $,
        \begin{equation} \label{eq-definition-f-t-heat-kernel}
            f_t (u) := (4 \pi t)^{-r/2} \exp{ - \frac{\abs{u}^2}{4 t} }, \quad u \in \R^r, \quad t > 0.
        \end{equation}
        Then $ \int_{\R^r}f_t(u)\,\md u=1 $. For $ Y\in\cB_1(\mathsf{h}) $, Lemma \ref{eq-equiv-form-action-auxiliary-QMS} gives
        \begin{equation*}
            \cG_{*t } (Y) = \int_{\R^r} f_t (u) W(\iota^{-1} (R u)) Y W( \iota^{-1} ( R u) )^* \md u.
        \end{equation*}
        Lemma \ref{lemma-L1-norm-estimate-time-derivative-heat-kernel} provides the required bounds on the time derivatives of $ f_t $. The explicit formula and dominated convergence on compact subintervals of $ (0,\infty) $ also show that $ t\mapsto f_t $ belongs to $ C^\infty((0,\infty);L^1(\R^r)) $. Thus, for $ t>0 $,
        \begin{align*}
            &\quad \norm{ \frac{\cG_{*(t+h)} (Y) - \cG_{*t} (Y)}{h} - \int_{\R^r} \frac{\md}{\md t} f_t (u) W( \iota^{-1} (R u) ) Y W( \iota^{-1} (R u) )^* \md u }_1 \\
            &\le \norm{ \frac{f_{t + h} (u) - f_t (u)}{h} - \frac{\md}{\md t} f_t (u) }_{L^1(\R^r; \md u)} \norm{Y}_1 \rightarrow 0
        \end{align*}
        as $ h \rightarrow 0 $. Iterating this argument gives 
        \begin{equation*}
            \frac{\md^m}{\md t^m} \cG_{*t} (Y) = \int_{\R^r} \frac{\md^m}{\md t^m} f_t (u) W( \iota^{-1} (R u) ) Y W( \iota^{-1} (R u) )^* \md u.
        \end{equation*}
        Consequently, we have
        \begin{align}
            \norm{ \frac{\md^m}{\md t^m} \cG_{*t} (Y) }_1 &\le \norm{ \int_{\R^r} \frac{\md^m}{\md t^m} f_t (u) \, W( \iota^{-1} (R u ) ) Y  W( \iota^{-1} (R u ) )^* \, \md u }_1 \nonumber \\ 
            &\le \int_{\R^r} \abs{ \frac{\md^m}{\md t^m} f_t (u) } \norm{ W( \iota^{-1} (R u ) ) Y  W( \iota^{-1} (R u ) )^* }_1  \md u \nonumber \\ 
            &\le \int_{\R^r} \abs{ \frac{\md^m}{\md t^m} f_t (u) } \md u \cdot \norm{Y}_1 = \norm{Y}_1 \norm{ \frac{\md^m}{\md t^m} f_t (u) }_{L^1 (\R^r; \md u)} \nonumber \\ 
            &\le c_{r, m} t^{-m} \norm{Y}_1, \label{eq-L1-norm-estimate-time-derivative-auxiliary-QMS}
        \end{align}
        In particular, all these derivatives exist in trace norm.

        Choose an integer $ m>\Re\alpha $ and define
        \begin{equation*}
            Y_\alpha := \frac{(-1)^m}{\Gamma (m-\alpha)} \int_0^\infty t^{m - \alpha - 1} \left. \frac{\md^m}{\md s^m} \cG_{*s} (\sigma) \right\vert_{s = \epsilon + t} \md t.
        \end{equation*}
        This Bochner integral converges in trace norm. Indeed, \eqref{eq-L1-norm-estimate-time-derivative-auxiliary-QMS} gives
        \begin{align}
            \norm{Y_\alpha}_1 &\le \frac{1}{\abs{ \Gamma (m - \alpha) }} \int_0^\infty \abs{ t^{m - \alpha - 1} } \norm{ \left. \frac{\md^m}{\md s^m} \cG_{*s} (\sigma) \right\vert_{s = \epsilon + t} }_1 \md t \nonumber \\ 
            &\le \frac{c_{r, m} \norm{\sigma}_1}{\abs{\Gamma(m - \alpha)}} \int_0^\infty t^{m - \Re \alpha - 1} (\epsilon + t)^{-m} \md t \label{eq-L1-norm-estimate-Y-alpha-part-1}.
        \end{align}
        Since $ \sigma $ is a state, $ \norm{\sigma}_1=1 $. The substitution $ t=\epsilon s $ and the beta integral \cite[Eqs. (5.12.1) and (5.12.3)]{Olver2010NIST} yield
        \begin{equation} \label{eq-beta-function-integral}
            \int_0^\infty t^{m - \Re \alpha - 1} (\epsilon + t)^{-m} \md t = \epsilon^{- \Re \alpha} \int_0^\infty s^{m - \Re \alpha - 1} (1 + s)^{-m} \md s = \epsilon^{-\Re \alpha} \frac{\Gamma(m - \Re \alpha) \Gamma (\Re \alpha)}{\Gamma (m)}.
        \end{equation}
        Combining \eqref{eq-L1-norm-estimate-Y-alpha-part-1} and \eqref{eq-beta-function-integral}, we obtain
        \begin{equation*}
            \norm{Y_\alpha}_1 \le c_{r, m} \, \epsilon^{-\Re \alpha} \, \frac{\Gamma (m - \Re \alpha) \Gamma (\Re \alpha)}{\abs{\Gamma(m-\alpha) } \Gamma (m)},
        \end{equation*}
        which is finite because $ 0<\Re\alpha<m $.
    
        By Lemma \ref{lemma-m-th-order-derivative} and \eqref{eq-relation-sigma-and-rho}, if $ q_\lambda(z)>0 $, then
        \begin{align*}
            \chi_{Y_\alpha} (z) &= \frac{(-1)^m}{\Gamma (m-\alpha)} \int_0^\infty t^{m - \alpha - 1}  \left. \Tr( \frac{\md^m}{\md s^m} \cG_{*s} (\sigma) W(z)) \right\vert_{s = \epsilon + t} \md t  \\ 
            &= \frac{(-1)^m}{\Gamma (m-\alpha)} \int_0^\infty t^{m - \alpha - 1} \left. \frac{\md^m}{\md s^m} \chi_{ \cG_{*s} (\sigma) } (z)  \right\vert_{s = \epsilon + t} \md t  \\ 
            &= \frac{(-1)^m (- q_\lambda(z))^m \me^{-\epsilon q_\lambda(z)} \chi_\sigma (z) }{\Gamma (m-\alpha)} \int_0^\infty t^{m - \alpha - 1} \me^{- t q_\lambda(z)} \md t  \\ 
            &= \frac{\chi_\rho(z) q_\lambda (z)^m }{\Gamma (m - \alpha)} \frac{\Gamma (m - \alpha)}{q_\lambda (z)^{m - \alpha}} = \chi_\rho (z) q_\lambda (z)^\alpha.
        \end{align*}
        This proves \eqref{eq-characteristic-function-Y-alpha-equal-to-rho-q-lambda} when $ q_\lambda(z)>0 $. If $ q_\lambda(z)=0 $, the differentiated characteristic function vanishes, so
        \begin{equation*}
            \chi_{Y_\alpha} (z) = 0 = \chi_\rho (z) q_\lambda (z)^\alpha.
        \end{equation*}
        Thus \eqref{eq-characteristic-function-Y-alpha-equal-to-rho-q-lambda} holds everywhere. Since $ Q_\lambda\neq0 $, there is $ z_0 $ with $ q_\lambda(z_0)>0 $. Hence $ \chi_{Y_\alpha}(z_0)\neq0 $ and $ Y_\alpha\neq0 $.
    \end{proof}

    We now justify the Weyl-conjugation representation of the auxiliary diffusion semigroup. Quantum diffusion semigroups of this form are studied in \cite{DattaPautratRouze2017}.

    \begin{lemma} \label{eq-equiv-form-action-auxiliary-QMS}
        Let $ f_t $ and $ R $ be as in \eqref{eq-definition-f-t-heat-kernel} and \eqref{eq-definition-R}. For $ t>0 $ and $ X\in\cB(\mathsf{h}) $, the auxiliary semigroup satisfies
        \begin{equation*}
            \cG_t (X) = \int_{\R^r} f_t (u) W(\iota^{-1} (R u))^* X W(\iota^{-1} (R u)) \md u,
        \end{equation*}
        where the integral is understood $ \sigma $-weakly, that is, for all $ Y \in \cB_1 (\mathsf{h}) $, we have 
        \begin{equation} \label{eq-auxiliary-qms-equivalent-form-sigma-weak-sense}
            \Tr( Y \cG_t (X) ) = \int_{\R^r} f_t (u) \Tr ( Y W(\iota^{-1} (Ru))^* X W( \iota^{-1} (R u)) ) \md u.
        \end{equation}
        Consequently, we have 
        \begin{equation} \label{eq-auxiliary-qms-predual}
            \cG_{*t } (Y) = \int_{\R^r} f_t (u) W(\iota^{-1} (R u)) Y W( \iota^{-1} ( R u) )^* \md u.
        \end{equation}
    \end{lemma}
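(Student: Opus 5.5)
The approach is to verify \eqref{eq-auxiliary-qms-equivalent-form-sigma-weak-sense} first on Weyl operators $X=W(w)$, then extend to all of $\cB(\mathsf{h})$ by $\sigma$-weak density and normality. The predual formula \eqref{eq-auxiliary-qms-predual} will then follow by cyclicity of the trace.

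\emph{Step 1: the Weyl computation.} Write $v=\iota^{-1}(Ru)$. The Weyl relations give
\begin{equation*}
W(v)^*W(w)W(v)=W(-v)W(w)W(v)=\me^{-2\mi\Im\langle v,w\rangle}W(w)=\me^{-2\mi\,\mathbf{w}^T\mathbf{J}^T R u}\,W(w),
\end{equation*}
where I use $\Im\langle v,w\rangle=\mathbf{v}^T\mathbf{J}\mathbf{w}=u^TR^T\mathbf{J}\mathbf{w}$ and keep careful track of the sign conventions. The right-hand side of \eqref{eq-auxiliary-qms-equivalent-form-sigma-weak-sense} then equals $\Tr(YW(w))$ times a Fourier transform of the heat kernel:
\begin{equation*}
\int_{\R^r}f_t(u)\,\me^{-\mi\xi^Tu}\,\md u=\me^{-t\abs{\xi}^2},\qquad \xi=2R^T\mathbf{J}\mathbf{w}.
\end{equation*}
By \eqref{eq-definition-R},
\begin{equation*}
t\abs{\xi}^2=4t\,\mathbf{w}^T\mathbf{J}^TRR^T\mathbf{J}\mathbf{w}=t\,\mathbf{w}^T\mathbf{J}^T\mathbf{J}^TQ_\lambda\mathbf{J}\mathbf{J}\mathbf{w}=t\,\mathbf{w}^TQ_\lambda\mathbf{w},
\end{equation*}
since $\mathbf{J}\mathbf{J}=-\1$ and the two signs cancel. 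This gives $\me^{-tq_\lambda(w)}\Tr(YW(w))=\Tr(Y\cG_t(W(w)))$. The factor $\tfrac14$ in \eqref{eq-definition-R} is exactly what makes the constants match, and the sign of the phase is irrelevant because the Gaussian is even.

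\emph{Step 2: extension to all of $\cB(\mathsf{h})$.} Define $\Phi_t(X)$ by the integral, which is well defined $\sigma$-weakly because $u\mapsto\Tr(YW(v)^*XW(v))$ is bounded by $\norm{Y}_1\norm{X}$ and continuous, by strong continuity of the Weyl operators and trace-class approximation. Then $\Phi_t$ is a normal map. Concretely, its predual $Y\mapsto\int f_t(u)W(v)YW(v)^*\,\md u$ is a trace-norm convergent Bochner integral with $\norm{\cdot}_1\le\norm{Y}_1$, and the integrand is continuous in trace norm.

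Both $\Phi_t$ and $\cG_t$ are normal, and they agree on Weyl operators by Step 1. By linearity and the $\sigma$-weak density of the span of Weyl operators, they agree everywhere. This density step is the main obstacle to state cleanly. The agreement is really an equality of two $\sigma$-weakly continuous functionals $X\mapsto\Tr(Y\cdot)$ on a $\sigma$-weakly dense subspace, which suffices. Normality of $\cG_t$ is part of its being a QMS, as a Gaussian semigroup with parameters $(0,2Q_\lambda,0)$ satisfying \eqref{eq-quantum-diffusion-matrix}.

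\emph{Step 3: the predual formula.} Applying cyclicity inside the integral, $\Tr(YW(v)^*XW(v))=\Tr(W(v)YW(v)^*X)$, together with Fubini for the Bochner integral, identifies the predual with \eqref{eq-auxiliary-qms-predual}, using \eqref{eq-predual-pairing}.
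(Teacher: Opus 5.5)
Your proposal is correct and follows essentially the same route as the paper: the integral defines a normal map whose predual is the trace-norm Bochner integral, it agrees with $\cG_t$ on Weyl operators by the heat-kernel Fourier transform and the factor $\tfrac14$ in \eqref{eq-definition-R}, and $\sigma$-weak density of the Weyl span plus cyclicity of the trace finish the argument. One small slip: with the paper's Weyl relations one gets $W(v)^*W(w)W(v)=\me^{2\mi\Im\langle v,w\rangle}W(w)$ rather than $\me^{-2\mi\Im\langle v,w\rangle}W(w)$, but, as you note, this is harmless because $\widehat{f_t}$ is even.
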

    \begin{proof}
        Fix $ X\in\cB(\mathsf{h}) $. For every $ Y\in\cB_1(\mathsf{h}) $,
        \begin{align*}
            &\quad \abs{\int_{\R^r} f_t (u) \Tr ( Y W(\iota^{-1} (Ru))^* X W( \iota^{-1} (R u)) ) \md u} \\
            &\le \int_{\R^r} \abs{f_t (u)} \norm{Y}_1  \norm{ W(\iota^{-1} (Ru))^* X W( \iota^{-1} (R u)) } \md u \\
            &\le \int_{\R^r} \abs{f_t (u)} \norm{Y}_1 \norm{X} \md u \le \norm{Y}_1 \norm{X}.
        \end{align*}
        The right-hand side of \eqref{eq-auxiliary-qms-equivalent-form-sigma-weak-sense} therefore defines a bounded functional on $ \cB_1(\mathsf{h}) $. Denote the corresponding bounded operator by $ \widetilde{\cG}_t(X) $.

        For $ Y\in\cB_1(\mathsf{h}) $, consider
        \begin{equation} \label{eq-an-bochner-integral}
            \int_{\R^r} f_t (u) W(\iota^{-1} (R u)) Y W( \iota^{-1} ( R u) )^* \md u
        \end{equation} 
        as a Bochner integral in $ \cB_1(\mathsf{h}) $. The map
        \begin{equation*}
            u \mapsto W(\iota^{-1} (Ru)) Y W(\iota^{-1} (Ru))^* 
        \end{equation*}
        is continuous in trace norm, as follows by finite-rank approximation. Moreover,
        \begin{equation*}
            \int_{\R^r} \norm{f_t (u) W(\iota^{-1} (R u)) Y W( \iota^{-1} ( R u) )^* }_1 \md u  \le \norm{Y}_1  < \infty,
        \end{equation*}
        so that \eqref{eq-an-bochner-integral} is a well-defined trace-class operator.
    
        Using cyclicity of the trace, we obtain
        \begin{equation*}
            \Tr (Y \widetilde{\cG}_t (X)) = \Tr( \left( \int_{\R^r} f_t(u) W(\iota^{-1} (R u)) Y W(\iota^{-1} (R u))^* \md u \right) X ),
        \end{equation*}
        so $ \widetilde{\cG}_t $ is normal and its predual is the right-hand side of \eqref{eq-auxiliary-qms-predual}.

        It remains to identify $ \widetilde{\cG}_t $ with $ \cG_t $. For a Weyl operator, \eqref{eq-definition-R} and the Gaussian integral give
        \begin{align*}
            \widetilde{\cG}_t (W(z)) &= \int_{\R^r} f_t (u) W(\iota^{-1} (R u))^* W(z) W(\iota^{-1} (R u)) \md u \\ 
            &= \int_{\R^r} f_t (u) \me^{ - 2 \mi \, \Im \langle z, \iota^{-1} (R u)  \rangle } W(z) \md u \\ 
            &= \int_{\R^r} f_t (u) \me^{ - 2 \mi \, \mathbf{z}^T \mathbf{J} R u } \, W(z) \md u \\
            &= \int_{\R^r} (4 \pi t)^{-r / 2} \exp{ - \frac{1}{4t} u^T \1_{r} u  - 2 \mi \mathbf{z}^T \mathbf{J} R u } W(z) \md u \\ 
            &= (4 \pi t)^{-r/2} (4 \pi t)^{r/2} \exp{  \frac{1}{2} (2 \mi \mathbf{z}^T \mathbf{J} R )  ( 2 t \1_r )   (2 \mi \mathbf{z}^T \mathbf{J} R )^T } W(z) \\ 
            &= \exp{- t \, \mathbf{z}^T Q_\lambda \mathbf{z} } W(z) = \cG_t (W(z)).
        \end{align*}
        These equalities hold in the $ \sigma $-weak sense. Normality and $ \sigma $-weak density of the span of the Weyl operators imply $ \widetilde{\cG}_t=\cG_t $. The predual formula follows from cyclicity of the trace.
    \end{proof}

    \begin{lemma} \label{lemma-m-th-order-derivative}
        For every integer $ m \ge 1 $ and every $ t > 0 $, we have
        \begin{equation} \label{eq-m-th-order-derivative}
            \chi_{ \md^m \cG_{*t} (\sigma) / \md t^m } (z) = \frac{\md^m}{\md t^m} \chi_{ \cG_{*t} (\sigma) } (z).
        \end{equation}
        Moreover, with $ q_\lambda $ as in \eqref{eq-definition-q-lambda},
        \begin{equation*}
            \frac{\md^m}{\md t^m} \chi_{ \cG_{*t} (\sigma) } (z) = ( - q_\lambda (z) )^m \me^{- t q_\lambda (z) } \chi_\sigma(z),
        \end{equation*}
        for every $ z\in\C^d $, with value zero when $ q_\lambda(z)=0 $.
    \end{lemma}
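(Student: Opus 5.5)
The plan has two parts. First, I will show that the trace-norm derivative $\frac{\md^m}{\md t^m}\cG_{*t}(\sigma)$ has the characteristic function obtained by differentiating $\chi_{\cG_{*t}(\sigma)}(z)$ pointwise. Second, I will compute that pointwise derivative explicitly from \eqref{eq-characteristic-function-evolution-of-sigma-under-auxiliary-QMS}.

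For \eqref{eq-m-th-order-derivative}, fix $z\in\C^d$. The map $Y\mapsto\chi_Y(z)=\Tr(YW(z))$ is a bounded linear functional on $\cB_1(\mathsf{h})$, with norm at most $\norm{W(z)}=1$, since $\abs{\Tr(YW(z))}\le\norm{Y}_1$. In the proof of Lemma \ref{lemma-construction-Y-alpha}, the derivatives $\frac{\md^m}{\md t^m}\cG_{*t}(\sigma)$ were shown to exist in trace norm for $t>0$. They were obtained by iterating trace-norm limits of difference quotients, with the heat-kernel representation from Lemma \ref{eq-equiv-form-action-auxiliary-QMS} and the smoothness of $t\mapsto f_t$ in $L^1$. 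I will argue by induction on $m$. Suppose the $(m-1)$-st derivative commutes with the functional. Apply the functional to the difference quotient of $\frac{\md^{m-1}}{\md t^{m-1}}\cG_{*t}(\sigma)$. By continuity, the limit passes inside, so the scalar function $t\mapsto\frac{\md^{m-1}}{\md t^{m-1}}\chi_{\cG_{*t}(\sigma)}(z)$ is differentiable, and its derivative equals $\chi$ of the $m$-th trace-norm derivative. The base case $m=0$ is trivial.

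For the explicit formula, \eqref{eq-characteristic-function-evolution-of-sigma-under-auxiliary-QMS} gives $\chi_{\cG_{*t}(\sigma)}(z)=\me^{-tq_\lambda(z)}\chi_\sigma(z)$ for every $t\ge0$. Here $z$ is fixed and $\chi_\sigma(z)$ does not depend on $t$, so this is a scalar exponential in $t$. Differentiating $m$ times gives $(-q_\lambda(z))^m\me^{-tq_\lambda(z)}\chi_\sigma(z)$. When $q_\lambda(z)=0$, the function is constant in $t$, so every derivative with $m\ge1$ vanishes, consistent with $(-0)^m=0$.

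I do not expect a real obstacle. The only point needing care is that the trace-norm derivatives exist; this is already established in the proof of Lemma \ref{lemma-construction-Y-alpha}. Once that is granted, interchanging with the bounded functional $\Tr(\,\cdot\,W(z))$ is immediate.
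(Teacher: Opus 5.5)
Your proposal is correct and follows the paper's proof: it interchanges the trace-norm derivative with the bounded functional $Y\mapsto\Tr(YW(z))$ and iterates (your induction), then differentiates the scalar exponential $\me^{-tq_\lambda(z)}\chi_\sigma(z)$, which is constant in $t$ when $q_\lambda(z)=0$. As in the paper, existence of the trace-norm derivatives is taken from the heat-kernel estimates in the proof of Lemma \ref{lemma-construction-Y-alpha}.
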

    \begin{proof}
        For $ m=1 $, continuity of $ Y\mapsto\Tr(YW(z)) $ in trace norm gives
        \begin{align*}
            \frac{\md}{\md t} \chi_{\cG_{*t} (\sigma) } (z) &= \lim_{h \rightarrow 0} \frac{\chi_{\cG_{*(t + h)} (\sigma)} (z) - \chi_{ \cG_{*(t)} (\sigma)}  (z) }{h} \\ 
            &= \lim_{h \rightarrow 0} \frac{\Tr( (\cG_{*(t+h)} (\sigma) - \cG_{*(t)} (\sigma) ) W(z) )}{h} \\
            &= \Tr( \lim_{h \rightarrow 0}  \frac{ \cG_{*(t+h)} (\sigma) - \cG_{*(t)} (\sigma) }{h} \, W(z) ) \\ 
            &= \Tr( \frac{\md}{\md t} \cG_{*t} (\sigma) W(z) ) = \chi_{ \md \cG_{*t} (\sigma) / \md t } (z).
        \end{align*} 
        Iteration proves \eqref{eq-m-th-order-derivative}. Differentiating \eqref{eq-characteristic-function-evolution-of-sigma-under-auxiliary-QMS} yields
        \begin{equation*}
            \frac{\md^m}{\md t^m} \chi_{\cG_{*t} (\sigma) } (z) = \frac{\md^m}{\md t^m} \me^{-t q_\lambda(z) } \chi_\sigma (z) = (-q_\lambda(z))^m \me^{-t q_\lambda(z) } \chi_\sigma (z),
        \end{equation*}
        which is zero when $ q_\lambda(z)=0 $.
    \end{proof}

    \begin{lemma} \label{lemma-L1-norm-estimate-time-derivative-heat-kernel}
        For the heat kernel \eqref{eq-definition-f-t-heat-kernel}, and every integer $ m\ge1 $,
        \begin{equation} \label{eq-L1-norm-estimate-time-derivative-heat-kernel}
            \norm{ \frac{\md^m}{\md t^m} f_t (u)  }_{L^1 (\R^r)} \le c_{r, m} \, t^{-m},
        \end{equation}
        where $ c_{r,m} $ depends only on $ r $ and $ m $.
    \end{lemma}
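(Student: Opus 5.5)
The bound follows from an explicit computation of $\partial_t^m f_t$ followed by parabolic scaling. The heat kernel is homogeneous in the sense that
\begin{equation*}
    f_t(u) = t^{-r/2} f_1\bigl(u/\sqrt{t}\bigr).
\end{equation*}
I will first show by induction on $m$ that
\begin{equation*}
    \frac{\md^m}{\md t^m} f_t(u) = t^{-m} t^{-r/2} P_m\bigl(u/\sqrt{t}\bigr)\, f_1\bigl(u/\sqrt{t}\bigr)\, (4\pi)^{0},
\end{equation*}
where $P_m$ is a polynomial on $\R^r$ depending only on $r$ and $m$.

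For $m=1$, a direct computation gives
\begin{equation*}
    \partial_t f_t(u) = \Bigl(-\frac{r}{2t}+\frac{\abs{u}^2}{4t^2}\Bigr) f_t(u)
    = t^{-1}\Bigl(-\frac{r}{2}+\frac{\abs{v}^2}{4}\Bigr) f_t(u),
    \qquad v=u/\sqrt{t}.
\end{equation*}
For the inductive step, differentiate $t^{-m-r/2}P_m(u/\sqrt t)f_1(u/\sqrt t)$ in $t$ by the product and chain rules. Using
\begin{equation*}
    \partial_t\bigl(u/\sqrt t\bigr) = -\tfrac{1}{2t}\,u/\sqrt t,
\end{equation*}
every term acquires exactly one extra factor $t^{-1}$ and a polynomial in $v$. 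This gives
\begin{equation*}
    P_{m+1}(v) = -\Bigl(m+\tfrac r2\Bigr)P_m(v) - \tfrac12\, v\cdot\nabla P_m(v) + \tfrac{\abs{v}^2}{4}\,P_m(v),
\end{equation*}
where the last term uses $\nabla f_1(v) = -\tfrac{v}{2} f_1(v)$.

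Next, take the $L^1(\R^r)$ norm and substitute $u=\sqrt t\, v$, so that $\md u = t^{r/2}\md v$. This yields
\begin{equation*}
    \norm{\partial_t^m f_t}_{L^1} = t^{-m}\int_{\R^r}\abs{P_m(v)}\,f_1(v)\,\md v.
\end{equation*}
Set $c_{r,m}$ equal to this integral. It is finite because a polynomial is integrable against the Gaussian $f_1$, and it depends only on $r$ and $m$.

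There is no real obstacle. The only point needing care is to confirm that each $t$-derivative yields exactly one factor $t^{-1}$ after rescaling, which the parabolic homogeneity guarantees. Equivalently, one can write $\partial_t f_t = \Delta f_t$, so $\partial_t^m f_t = \Delta^m f_t$, and then apply the same scaling argument to $\Delta^m f_1$.
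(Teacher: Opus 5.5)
Your proof is correct and takes essentially the same route as the paper. Both differentiate $f_t$ explicitly, showing that $\partial_t^m f_t(u)$ is $t^{-m-r/2}$ times a polynomial of degree $2m$ in $u/\sqrt{t}$ times the Gaussian, and then substitute $u=\sqrt{t}\,v$. Your recursion for $P_m$ (equivalently, $\partial_t^m f_t=\Delta^m f_t$) makes the paper's ``direct differentiation'' step explicit and even gives equality with $c_{r,m}=\int_{\R^r}\abs{P_m(v)}f_1(v)\,\md v$; the stray factor $(4\pi)^0=1$ is harmless.
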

    \begin{proof}
        Direct differentiation of the heat kernel gives a constant $ c_{r,m}^{\prime} $, depending only on $ r $ and $ m $, such that
        \begin{equation*}
            \abs{ \frac{\md^m}{\md t^m} f_t (u) } \le c_{r, m}^\prime \, \frac{1}{t^{r/2 + m}} \left( 1 + \frac{\abs{u}^2}{t} \right)^m \exp{- \frac{\abs{u}^2}{4 t}}.
        \end{equation*}
        With $ u=\sqrt{t}v $ and
        \begin{equation*}
            c_{r, m} = c_{r, m}^\prime \int_{\R^r} (1 + \abs{v}^2)^m \exp{-\abs{v}^2 / 4} \, \md v,
        \end{equation*}
        we obtain
        \begin{align*}
            \norm{ \frac{\md^m}{\md t^m} f_t (u) }_{L^1 (\R^r; \md u)} &\le c_{r, m}^\prime \, t^{- r/2 - m} \int  \left( 1 + \abs{u}^2 / t \right)^m \exp{- \abs{u}^2 / (4 t) } \, \md u \\ 
            &\le c_{r, m}^\prime \, t^{-m} \int (1 + \abs{v}^2)^m \exp{ - \abs{v}^2 / 4 } \, \md v \le c_{r, m} \, t^{-m},
        \end{align*}
        which completes the proof of \eqref{eq-L1-norm-estimate-time-derivative-heat-kernel}.
    \end{proof}

    \section{Proofs for strictly unstable drift} \label{appendix-unstable-proofs}

    \subsection{The expanding subspace and dual eigenoperators}

    Assume $ s(\mathbf{Z})>0 $. Choose $ \lambda\in\sigma(\mathbf{Z}) $ with $ \Re\lambda>0 $ and $ v\in\C^{2d}\setminus\{0\} $ satisfying $ \mathbf{Z}v=\lambda v $. Irreducibility gives
    \begin{equation*}
        \mathbf{S}_t = \int_0^t \me^{s \mathbf{Z}^T} \mathbf{C} \me^{s \mathbf{Z}} \md s > 0, \quad \forall t > 0.
    \end{equation*}

    If $ \lambda $ is real, choose $ v $ real and set $ k=1 $. Otherwise set $ k=2 $. Define the real $ 2d\times k $ matrix
    \begin{equation*}
        V_\lambda := (v), \quad k = 1; \quad V_\lambda := \left( \Re v \, \Im v \right) \in M_{2d \times k} (\R), \quad k = 2.
    \end{equation*}
    In either case,
    \begin{equation*}
        \rank V_\lambda = k,
    \end{equation*}
    and 
    \begin{equation} \label{eq-reduce-boldface-Z-by-V-lambda}
        \mathbf{Z}V_\lambda = V_\lambda Z_\lambda,
    \end{equation}
    where
    \begin{equation*}
        Z_\lambda := (\Re \lambda), \quad k = 1; \quad Z_\lambda := \begin{pmatrix}
                \Re \lambda & \Im \lambda \\ 
                -\Im \lambda & \Re \lambda
            \end{pmatrix}, \quad k = 2.
    \end{equation*}
    Consequently,
    \begin{equation} \label{eq-absolute-value-exp-Z-lambda-x}
        \abs{\me^{- t Z_\lambda} x} = \me^{-\Re(\lambda) t} \abs{x}, \quad \forall t \ge 0, \quad \forall x \in \R^k.
    \end{equation}
    Define the restricted diffusion matrix
    \begin{equation} \label{eq-definition-C-lambda}
        C_\lambda := V_\lambda^T \mathbf{C} V_\lambda \ge 0.
    \end{equation}
    All eigenvalues of $ -Z_\lambda $ have negative real part. Hence
    \begin{equation} \label{eq-definition-S-lambda}
        S_\lambda := \int_0^\infty \me^{- s Z_\lambda^T} C_\lambda \me^{-s Z_\lambda} \md s
    \end{equation}
    is well-defined, and Lemma \ref{lemma-strict-positivity-S-lambda} shows that it is positive definite.

    For $ f\in L^1(\R^k) $, define $ X_f\in\cB(\mathsf{h}) $ by
    \begin{equation*}
        X_f \xi := \int_{\R^k} f(x) W( \iota^{-1} (V_\lambda x) ) \xi \, \md x, \quad \xi  \in \mathsf{h}.
    \end{equation*}
    The integrand is strongly measurable by strong continuity of the Weyl operators, and its norm is integrable. In particular,
    \begin{align}
        \norm{X_f} &= \sup_{\norm{\xi}_\mathsf{h} = 1} \norm{X_f \xi} = \sup_{\norm{\xi}_\mathsf{h} = 1} \norm{ \int_{\R^k} f(x) W( \iota^{-1} (V_\lambda x) ) \xi \md x } \nonumber \\
        &\le \sup_{\norm{\xi}_\mathsf{h} = 1} \int_{\R^k} \abs{f(x)} \norm{W(\iota^{-1} (V_\lambda x)) \xi} \md x = \norm{f}_{L^1(\R^{k}; \md x)}. \label{eq-bound-of-X-f-by-L1-norm}
    \end{align}

    To incorporate the displacement, define
    \begin{equation} \label{eq-definition-zeta-lambda}
        \zeta_\lambda := (Z_\lambda^T)^{-1} V_\lambda^T \boldsymbol{\zeta} \in \R^k.
    \end{equation}
    The inverse exists because $ \Re\lambda>0 $. For $ t\ge0 $, set
    \begin{equation} \label{eq-definition-S-lambda-t}
        S_{\lambda, t} := \int_0^t \me^{-s Z_\lambda^T} C_\lambda \me^{-s Z_\lambda} \md s.
    \end{equation}
    Splitting the integral in \eqref{eq-definition-S-lambda} at time $ t $ gives
    \begin{equation} \label{eq-S-lambda-t-and-S-lambda}
        S_{\lambda, t} + \me^{- t Z_\lambda^T} S_{\lambda} \me^{- t Z_\lambda} = S_\lambda.
    \end{equation}

    Lemma \ref{lemma-T-X-f-and-G-f} gives the intertwining identity
    \begin{equation} \label{eq-action-of-QMS-on-X-f}
        \cT_t (X_f) = X_{G_t (f)},
    \end{equation}
    where 
    \begin{equation} \label{eq-definition-auxiliary-semigroup-G-t}
        (G_t f) (u) = \me^{- k \Re (\lambda) t } \exp{ -\frac{1}{2} u^T S_{\lambda, t} u + \mi \zeta^T_\lambda (\1_k - \me^{- t Z_\lambda}) u } \, f \left( \me^{- t Z_\lambda} u \right), \quad u \in \R^k.
    \end{equation}

    \begin{proof}[Proof of Proposition \ref{proposition-eigenvector-Gaussian-generator}]
        Fix $ \eta\in\C $ with $ \Re\eta<0 $. For $ u \neq 0 $, define
        \begin{equation} \label{eq-f-eta-u}
            f_\eta (u) := \exp{ \mi \zeta_\lambda^T u - \frac{1}{2} u^T S_\lambda u} \abs{u}^{- \eta / \Re(\lambda) - k}.
        \end{equation}
        By Lemma \ref{lemma-f-eta-L1-function}, $ f_\eta\in L^1(\R^k) $. Thus $ X_\eta:=X_{f_\eta} $ is bounded by \eqref{eq-bound-of-X-f-by-L1-norm}, and it is nonzero by Lemma \ref{lemma-X-trivial-kernel}. Using \eqref{eq-absolute-value-exp-Z-lambda-x} and \eqref{eq-S-lambda-t-and-S-lambda}, we compute
        \begin{align*}
            G_t (f_\eta) (u) &= \me^{-k \Re(\lambda) t} \exp{ - \frac{1}{2} u^T S_{\lambda, t} u + \mi \zeta_\lambda^T (\1_k - \me^{- t Z_\lambda}) u } \\ 
            &\quad \cdot \exp{\mi \zeta_\lambda^T \me^{- t Z_\lambda} u - \frac{1}{2} u^T \me^{- t Z_\lambda^T} S_\lambda \me^{- t Z_\lambda} u } \abs{ \me^{- t Z_\lambda} u }^{- \eta / \Re(\lambda) - k} \\
            &= \me^{-k \Re(\lambda) t} \exp{-\frac{1}{2} u^T S_\lambda u + \mi \zeta_\lambda^T u} \left( \me^{- \Re(\lambda) t} \abs{u} \right)^{- \eta / \Re(\lambda) - k} \\ 
            &= \me^{\eta t} \exp{-\frac{1}{2} u^T S_\lambda u + \mi \zeta_\lambda^T u} \abs{u}^{-\eta/\Re(\lambda) - k} = \me^{\eta t} f_\eta (u).
        \end{align*}
        Equation \eqref{eq-action-of-QMS-on-X-f} now gives 
        \begin{equation*}
            \cT_t (X_\eta) = X_{G_t (f_\eta)} = X_{\me^{\eta t} f_\eta} = \me^{\eta t} X_{\eta}, \quad \forall t \ge 0,
        \end{equation*}
        which proves \eqref{eq-eigenvector-Gaussian-generator}.
    \end{proof}

    \subsection{Spectrum and absence of trace-class eigenvectors}

    We now pass from the dual eigenoperators to the spectrum of the predual generator.

    \begin{proof}[Proof of Theorem \ref{theorem-whole-spectrum-strictly-unstable-drift}]
        Let $ \eta \in \C $ with $ \Re \eta < 0 $. By Proposition \ref{proposition-eigenvector-Gaussian-generator} we have a nonzero bounded operator $ X_\eta $ such that \eqref{eq-eigenvector-Gaussian-generator} holds. Therefore, for every $ Y \in \dom \cL_* $ we have
        \begin{equation*}
            \Tr (\cL_*(Y) X_\eta) = \lim_{t \rightarrow 0^+} \Tr( \frac{\cT_{*t} (Y) - Y}{t} X_\eta ) = \lim_{t \rightarrow 0^+} \Tr ( Y \frac{\cT_t (X_\eta) - X_\eta}{t} ) = \eta \Tr(Y X_\eta),
        \end{equation*}
        which implies that 
        \begin{equation*}
            \Tr( ((\eta - \cL_* )(Y)) X_\eta ) = 0.
        \end{equation*}
        Since $ X_\eta \neq 0 $, we conclude that 
        \begin{equation*}
            \ran (\eta - \cL_*) \subset \{ Y \in \cB_1 (\mathsf{h}) : \Tr (Y X_\eta) = 0 \}
        \end{equation*}
        and the set on the right is a proper closed hyperplane. Thus $ \eta - \cL_* $ does not have dense range, and
        \begin{equation} \label{eq-expanding-left-belongs-to-spectrum}
            \{ \eta \in \C : \Re \eta < 0 \} \subset \sigma (\cL_*).
        \end{equation}
        On the other hand, $ (\cT_{*t})_{t \ge 0} $ is a strongly continuous contraction semigroup on $ \cB_1 (\mathsf{h}) $. Hence, by the Hille--Yosida theorem, 
        \begin{equation} \label{eq-expanding-spectrum-belongs-to-left}
            \sigma (\cL_*) \subset \{ \eta \in \C : \Re \eta \le 0 \}.
        \end{equation}
        Closedness of $ \sigma(\cL_*) $, together with \eqref{eq-expanding-left-belongs-to-spectrum} and \eqref{eq-expanding-spectrum-belongs-to-left}, proves \eqref{eq-spectrum-predual-generator-expanding-case}.
    \end{proof}

    \begin{proof}[Proof of Theorem \ref{theorem-unstable-spectral-types}]
        By Theorem \ref{theorem-whole-spectrum-strictly-unstable-drift}, the spectrum is the closed left half-plane. Its boundary $ \mi\R $ lies in the approximate point spectrum by \cite[Chapter IV, Proposition 1.10]{EngelNagel2000}.

        Suppose that $ \cL_*(Y)=\eta Y $ for some nonzero $ Y\in\dom\cL_* $. Then $ \cT_{*t}(Y)=\me^{\eta t}Y $. Taking characteristic functions yields
        \begin{equation*}
            \chi_Y (z) = \me^{- \eta t} \phi_t (z) \chi_Y(\me^{t Z} z), 
        \end{equation*} 
        and, for every $ \omega\in\C^d $,
        \begin{equation*}
            \abs{ \chi_Y (\omega) } = \abs{\Tr (Y W(\omega))} \le \norm{Y}_1 \norm{W(\omega)} \le \norm{Y}_1,
        \end{equation*}
        so we obtain
        \begin{equation} \label{eq-upper-bound-abs-chi-Y}
            \abs{\chi_Y (z)} \le \exp{ - \Re (\eta) t - \frac{1}{2} \mathbf{z}^T \mathbf{S}_t \mathbf{z}} \norm{Y}_1, \quad \forall z \in \C^d.
        \end{equation}
        Choose $ \lambda\in\sigma(\mathbf{Z}) $ with $ \Re\lambda>0 $ and a nonzero eigenvector $ w\in\C^{2d} $ of $ \mathbf{Z}^T $ such that
        \begin{equation*}
            \mathbf{Z}^T w = \lambda w.
        \end{equation*}
        Consider the subspace
        \begin{equation*}
            E := \{ \mathbf{z} \in \R^{2d} : w^T \mathbf{z} = 0 \}.
        \end{equation*}
        When $ \lambda $ is real, choose $ w $ real; then $ \dim E=2d-1 $. When $ \lambda $ is nonreal, $ \Re w $ and $ \Im w $ are linearly independent and $ \dim E=2d-2 $. Thus $ E $ is a proper real subspace and $ \R^{2d}\setminus E $ is dense. For $ \mathbf{z}\notin E $ and $ t\ge0 $,
        \begin{equation*}
            \abs{w^* \me^{t \mathbf{Z}} \mathbf{z}} = \abs{ (\me^{t \mathbf{Z}^T} w)^* \mathbf{z} } = \abs{ \me^{\overline{\lambda} t} w^* \mathbf{z} } = \me^{\Re(\lambda) t} \abs{w^* \mathbf{z}}.
        \end{equation*}
        On the other hand, 
        \begin{equation*}
            \abs{w^* \me^{t \mathbf{Z}} \mathbf{z}} \le \norm{w}_{\C^{2 d}} \norm{\me^{t \mathbf{Z}} \mathbf{z}}_{\R^{2d}}.
        \end{equation*}
        Setting $ c_{w,z}:=\abs{w^*\mathbf{z}}/\norm{w}_{\C^{2d}}>0 $, we obtain
        \begin{equation*}
            \norm{\me^{t \mathbf{Z}} \mathbf{z}}_{\R^{2d}} \ge \frac{\abs{w^* \mathbf{z}}}{\norm{w}_{\C^{2d}}} \me^{\Re(\lambda) t} =  c_{w, z} \me^{\Re (\lambda) t}.
        \end{equation*}
        By \eqref{eq-positive-diffusion-gramian}, $ \mathbf{S}_1>0 $. Let $ m_1 $ be its smallest eigenvalue. Retaining the last unit interval in the integral defining $ \mathbf{S}_t $, for $ t\ge1 $ we obtain
        \begin{align*}
            \mathbf{z}^T \mathbf{S}_t \mathbf{z} &\ge \int_{t - 1}^t ( \me^{s \mathbf{Z}} \mathbf{z} )^T \mathbf{C} ( \me^{s \mathbf{Z}} \mathbf{z} ) \md s = ( \me^{(t-1) \mathbf{Z}} \mathbf{z} )^T \mathbf{S}_1 (\me^{(t-1) \mathbf{Z} } \mathbf{z} ) \\
            &\ge m_1 \norm{\me^{(t-1) \mathbf{Z} } \mathbf{z}}^2_{\R^{2d}} \ge c_{w, z}^2 m_1 \me^{2 \Re(\lambda) (t-1)}. 
        \end{align*}
        Substituting this bound into \eqref{eq-upper-bound-abs-chi-Y} gives
        \begin{equation*}
            \abs{\chi_Y (z)} \le \norm{Y}_1 \exp{ - \Re(\eta) t - \frac{1}{2} c_{w, z}^2 m_1 \me^{2 \Re (\lambda) (t-1)} }.
        \end{equation*}
        The right-hand side tends to zero as $ t\to\infty $, because $ \Re\lambda>0 $. Hence
        \begin{equation*}
            \chi_Y (\mathbf{z}) = 0, \quad \mathbf{z} \in \R^{2 d} \setminus E.  
        \end{equation*}
        By continuity and density, $ \chi_Y $ vanishes everywhere, so $ Y=0 $, a contradiction. This proves \eqref{eq-expanding-case-point-spectrum}.

        For $ \Re\eta<0 $, the proof of Theorem \ref{theorem-whole-spectrum-strictly-unstable-drift} shows that $ \eta-\cL_* $ has nondense range. Since it is injective,
        \begin{equation} \label{eq-left-open-half-plane-residual-spectrum}
            \{ \eta: \Re \eta  < 0 \} \subset \sigma_{\mathrm{r}} (\cL_*). 
        \end{equation}
        For $ \eta=0 $, trace preservation gives $ \Tr(\cL_*(Y))=0 $ for every $ Y\in\dom\cL_* $. Consequently,
        \begin{equation*}
            \ran (- \cL_*) \subset \{ Y \in \cB_1 (\mathsf{h}) : \Tr Y = 0 \}.
        \end{equation*}
        This is a proper closed hyperplane. Since $ \cL_* $ is injective, $ 0\in\sigma_{\mathrm{r}}(\cL_*) $. Together with \eqref{eq-left-open-half-plane-residual-spectrum}, this proves \eqref{eq-expanding-case-residual-spectrum}.
    \end{proof}

    \subsection{Supporting estimates and intertwining identities}

    We justify the positivity, injectivity, and continuity assertions used in the construction above.

    \begin{lemma} \label{lemma-strict-positivity-S-lambda}
    The matrix $ S_\lambda $ defined in \eqref{eq-definition-S-lambda} is positive definite.
    \end{lemma}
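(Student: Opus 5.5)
Since $S_\lambda$ is symmetric and positive semidefinite, it suffices to show that $x^T S_\lambda x > 0$ for every nonzero $x \in \R^k$. Because
\begin{equation*}
    x^T S_\lambda x = \int_0^\infty \bigl(V_\lambda \me^{-sZ_\lambda}x\bigr)^T \mathbf{C}\, \bigl(V_\lambda \me^{-sZ_\lambda}x\bigr)\,\md s
\end{equation*}
has a continuous nonnegative integrand, it vanishes only if $\mathbf{C}^{1/2}V_\lambda \me^{-sZ_\lambda}x = 0$ for all $s \ge 0$. The plan is therefore to argue by contradiction: assume this happens for some $x \neq 0$ and derive a violation of irreducibility.

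First we transport the relation through the drift. From \eqref{eq-reduce-boldface-Z-by-V-lambda} we get $\me^{s\mathbf{Z}}V_\lambda = V_\lambda \me^{sZ_\lambda}$ for every real $s$. Hence, for $t>0$,
\begin{equation*}
    \mathbf{y}^T \mathbf{S}_t \mathbf{y} = \int_0^t \bigl(V_\lambda \me^{(s-t)Z_\lambda}x\bigr)^T \mathbf{C}\,\bigl(V_\lambda \me^{(s-t)Z_\lambda}x\bigr)\,\md s \quad\text{with}\quad \mathbf{y} := V_\lambda \me^{-tZ_\lambda}x,
\end{equation*}
and $\me^{s\mathbf{Z}}\mathbf{y} = V_\lambda \me^{(s-t)Z_\lambda}x$. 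For $0 \le s \le t$, the argument $(s-t)Z_\lambda$ corresponds to the negative time $s-t \le 0$, which is the range where the integrand is assumed to vanish (it equals $\mathbf{C}^{1/2}V_\lambda\me^{-(t-s)Z_\lambda}x$). Therefore $\mathbf{y}^T\mathbf{S}_t\mathbf{y} = 0$. Since $V_\lambda$ has rank $k$ and $\me^{-tZ_\lambda}$ is invertible, $\mathbf{y} \neq 0$. This contradicts \eqref{eq-positive-diffusion-gramian}, i.e.\ $\mathbf{S}_t > 0$, which follows from irreducibility.

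No real obstacle is expected. The only point needing care is the direction of time: the identity $\me^{s\mathbf{Z}}V_\lambda = V_\lambda\me^{sZ_\lambda}$ holds for all real $s$, so shifting by $-t$ is legitimate. As an alternative route, one can avoid the time shift: vanishing for all $s \ge 0$ together with analyticity in $s$ gives $\mathbf{C}^{1/2}V_\lambda Z_\lambda^j x = 0$ for all $j$. Then $\operatorname{span}\{V_\lambda Z_\lambda^j x\}$ is a nonzero real $\mathbf{Z}$-invariant subspace contained in $\ker\mathbf{C}$. Since $\mathbf{C}_{\mathbf{Z}}$ and $\mathbf{C}$ agree as quadratic forms on real vectors, the vectors of this subspace lie in $\ker\mathbf{C}_{\mathbf{Z}}$ (the latter being positive semidefinite). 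This contradicts item 4 of Proposition \ref{proposition-irreducibility}.
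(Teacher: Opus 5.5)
Your main argument is correct and is essentially the paper's proof. The paper computes $S_{\lambda,1} = V_\lambda^T \me^{-\mathbf{Z}^T}\mathbf{S}_1\me^{-\mathbf{Z}}V_\lambda > 0$ from $\me^{s\mathbf{Z}}V_\lambda = V_\lambda\me^{sZ_\lambda}$ and then uses $S_\lambda \ge S_{\lambda,1}$; this is exactly your identity $x^T S_{\lambda,t}\,x = \mathbf{y}^T\mathbf{S}_t\mathbf{y}$ with $\mathbf{y} = V_\lambda\me^{-tZ_\lambda}x$ (taking $t=1$), stated directly rather than by contradiction, and your alternative route through the Krylov subspace in $\ker\mathbf{C}_{\mathbf{Z}}$ is also valid.
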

    \begin{proof}
        Irreducibility implies
        \begin{equation*}
            \mathbf{S}_1 = \int_0^1 \me^{s \mathbf{Z}^T} \mathbf{C} \me^{s \mathbf{Z}} \md s > 0.
        \end{equation*}
        Equation \eqref{eq-reduce-boldface-Z-by-V-lambda} gives
        \begin{equation} \label{eq-Z-V-lambda-exponential-form}
            \me^{t \mathbf{Z}} V_\lambda = V_\lambda \me^{t Z_\lambda}, \quad \forall t \in \R.
        \end{equation}
        Therefore, since $ V_\lambda $ has full column rank,
        \begin{equation*}
            S_{\lambda, 1} = \int_0^1 \me^{-s Z_\lambda^T} C_\lambda \me^{-s Z_\lambda} \md s = V_\lambda^T \left( \int_0^1 \me^{-s \mathbf{Z}^T} \mathbf{C} \me^{-s \mathbf{Z}} \md s \right) V_\lambda = V_\lambda^T \me^{-\mathbf{Z}^T} \mathbf{S}_1 \me^{- \mathbf{Z}} V_\lambda > 0.
        \end{equation*}
        Since $ S_\lambda \ge S_{\lambda, 1} $, it follows that $ S_{\lambda} > 0 $.
    \end{proof}

    \begin{lemma} \label{lemma-X-trivial-kernel}
        For $ f\in L^1(\R^k) $, $ X_f=0 $ implies $ f=0 $ almost everywhere.
    \end{lemma}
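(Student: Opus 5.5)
The plan is to pair $X_f$ with suitable trace-class operators and read off the Fourier transform of $f$. Suppose $X_f=0$. Then for every $Y\in\cB_1(\mathsf{h})$,
\begin{equation*}
    0=\Tr(YX_f)=\int_{\R^k} f(x)\,\chi_Y\bigl(\iota^{-1}(V_\lambda x)\bigr)\,\md x .
\end{equation*}
Exchanging the trace with the strong (Bochner) integral is justified by finite-rank approximation of $Y$ and the bound $\abs{\chi_Y}\le\norm{Y}_1$.

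The natural test operators are the coherent states. By \eqref{eq-coherent-characteristic-function}, taking $Y=\me^{-\abs{\alpha}^2}\ketbra{e(\alpha)}{e(\alpha)}$ gives
\begin{equation*}
    0=\int_{\R^k} f(x)\exp\Bigl(-\tfrac12\abs{V_\lambda x}^2+2\mi\,\boldsymbol{\alpha}^T\mathbf{J}V_\lambda x\Bigr)\md x, \qquad \alpha\in\C^d,
\end{equation*}
where $\boldsymbol{\alpha}=\iota(\alpha)$ and we used $\Im\langle\alpha,z\rangle=\boldsymbol{\alpha}^T\mathbf{J}\mathbf{z}$.

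Set $h(x):=f(x)\me^{-\abs{V_\lambda x}^2/2}$, which lies in $L^1(\R^k)$. The previous identity says that $\widehat h(\xi)=0$ for every $\xi$ of the form $\xi=-2V_\lambda^T\mathbf{J}^T\boldsymbol{\alpha}$. Since $\mathbf{J}$ is invertible and $V_\lambda$ has full column rank $k$, the map $\boldsymbol{\alpha}\mapsto V_\lambda^T\mathbf{J}^T\boldsymbol{\alpha}$ from $\R^{2d}$ to $\R^k$ is surjective. Hence $\widehat h\equiv0$ on $\R^k$.

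Injectivity of the Fourier transform on $L^1(\R^k)$ then gives $h=0$ almost everywhere. Because the Gaussian factor $\me^{-\abs{V_\lambda x}^2/2}$ never vanishes, it follows that $f=0$ almost everywhere.

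The main point requiring care is the surjectivity of the frequency map, together with the correct sign convention relating the symplectic form to the Fourier exponent. Surjectivity follows from the rank of $V_\lambda$, so I do not expect a real obstacle there. An alternative route is to use only Weyl operators $Y$ of finite rank, but coherent states are cleaner because their characteristic functions are explicit.
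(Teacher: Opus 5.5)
Your proposal is correct and follows essentially the same route as the paper. You test $X_f$ against coherent states, recognize the Fourier transform of $h=f\,\me^{-\abs{V_\lambda x}^2/2}$ evaluated at $2V_\lambda^T\mathbf{J}\boldsymbol{\alpha}$, which ranges over all of $\R^k$ by the full column rank of $V_\lambda$, and conclude by Fourier uniqueness; the paper does the same with the vectors $W(y)\Omega$ in place of the normalized $e(\alpha)$, which is only a reparametrization of the same coherent states.
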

    \begin{proof}
        Let $ \Omega=e(0) $ be the vacuum vector. If $ X_f=0 $, evaluation in the coherent vector $ W(y)\Omega $ gives
        \begin{align}
            0 &= \left\langle W(y) \Omega, X_f W(y) \Omega \right\rangle \nonumber \\ 
            &= \int_{\R^k} f(x) \exp{-\frac{1}{2} \abs{ \iota^{-1}(V_\lambda x) }^2 + 2 \mi \Im \left\langle y, \iota^{-1} (V_\lambda x) \right\rangle } \md x \nonumber \\ 
            &= \int_{\R^k} f(x) \exp{-\frac{1}{2} x^T V_\lambda^T V_\lambda x} \me^{- \mi (2 V_\lambda^T \mathbf{J} \mathbf{y})^T x } \md x. \label{eq-fourier-transform-contracdition}
        \end{align}
        Define
        \begin{equation*}
            h (x) := f(x) \exp{-\frac{1}{2} x^T V_\lambda^T V_\lambda x}.
        \end{equation*}
        Equation \eqref{eq-fourier-transform-contracdition} gives
        \begin{equation*}
            \widehat{h} ( 2 V_\lambda^T \mathbf{J} \mathbf{y} ) = 0.  
        \end{equation*}
        Since $ \mathbf{J} $ is invertible and $ V_\lambda $ has full column rank, $ 2V_\lambda^T\mathbf{J}\mathbf{y} $ ranges over $ \R^k $. Uniqueness of the Fourier transform gives $ h=0 $ almost everywhere. The Gaussian factor is strictly positive, so $ f=0 $ almost everywhere.
    \end{proof}

    \begin{lemma} \label{lemma-T-X-f-and-G-f}
        We have $ \cT_t (X_f) = X_{G_t (f)} $ as in \eqref{eq-action-of-QMS-on-X-f}.
    \end{lemma}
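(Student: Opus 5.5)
The plan is to compute $\cT_t(X_f)$ by passing $\cT_t$ under the integral defining $X_f$. We then apply the Weyl action \eqref{eq-gaussian-weyl-action} pointwise and change variables along the flow $\me^{tZ_\lambda}$ on $\R^k$. Everything is checked weakly, by pairing with trace-class operators.

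\emph{Step 1: interchange.} For $Y\in\cB_1(\mathsf{h})$ we first show
\begin{equation*}
\Tr(YX_f)=\int_{\R^k}f(x)\Tr\bigl(YW(\iota^{-1}(V_\lambda x))\bigr)\,\md x .
\end{equation*}
For rank-one $Y=\ketbra{\xi}{\eta}$ this follows because $X_f\xi$ is a Bochner integral in $\mathsf{h}$ and $\langle\eta,\cdot\rangle$ commutes with it. Both sides are bounded in $\norm{Y}_1\norm{f}_{L^1}$, so trace-norm approximation by finite-rank operators gives the general case. Applying this with $\cT_{*t}(Y)$ in place of $Y$ and using the duality \eqref{eq-predual-pairing}, we get
\begin{equation*}
\Tr(Y\cT_t(X_f))=\Tr(\cT_{*t}(Y)X_f)=\int_{\R^k}f(x)\Tr\bigl(Y\,\cT_t(W(\iota^{-1}(V_\lambda x)))\bigr)\,\md x .
\end{equation*}

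\emph{Step 2: pointwise action.} By \eqref{eq-gaussian-weyl-action} and \eqref{eq-Z-V-lambda-exponential-form},
\begin{equation*}
\cT_t(W(\iota^{-1}(V_\lambda x)))=\phi_t(V_\lambda x)\,W(\iota^{-1}(V_\lambda\me^{tZ_\lambda}x)).
\end{equation*}
We substitute $u=\me^{tZ_\lambda}x$. The Jacobian is $\md x=\det(\me^{-tZ_\lambda})\,\md u=\me^{-t\Tr Z_\lambda}\,\md u=\me^{-k\Re(\lambda)t}\,\md u$.

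\emph{Step 3: identify the multiplier.} We evaluate $\phi_t(V_\lambda\me^{-tZ_\lambda}u)$ using $\phi_t(z)=\exp\{-\mathbf{z}^T\mathbf{S}_t\mathbf{z}/2+\mi\mathbf{b}_t^T\mathbf{z}\}$ and $\me^{s\mathbf{Z}}V_\lambda=V_\lambda\me^{sZ_\lambda}$.
\begin{itemize}
\item The quadratic part becomes
\begin{equation*}
\int_0^t u^T\me^{-(t-s)Z_\lambda^T}C_\lambda\me^{-(t-s)Z_\lambda}u\,\md s=u^TS_{\lambda,t}u,
\end{equation*}
after substituting $r=t-s$ and using \eqref{eq-definition-S-lambda-t}.
\item The linear part becomes
\begin{equation*}
\boldsymbol{\zeta}^TV_\lambda\int_0^t\me^{-rZ_\lambda}\,\md r\,u=\boldsymbol{\zeta}^TV_\lambda Z_\lambda^{-1}(\1_k-\me^{-tZ_\lambda})u=\zeta_\lambda^T(\1_k-\me^{-tZ_\lambda})u,
\end{equation*}
by \eqref{eq-definition-zeta-lambda}. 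Here $Z_\lambda$ is invertible because $\Re\lambda>0$.
\end{itemize}
Together with the Jacobian, this reproduces exactly the factor in \eqref{eq-definition-auxiliary-semigroup-G-t}. We therefore obtain
\begin{equation*}
\Tr(Y\cT_t(X_f))=\int_{\R^k}(G_tf)(u)\Tr\bigl(YW(\iota^{-1}(V_\lambda u))\bigr)\,\md u .
\end{equation*}

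\emph{Step 4: conclude.} Since $\abs{G_tf(u)}\le\me^{-k\Re(\lambda)t}\abs{f(\me^{-tZ_\lambda}u)}$, we have $G_tf\in L^1(\R^k)$ with $\norm{G_tf}_{L^1}\le\norm{f}_{L^1}$, so $X_{G_tf}$ is defined. By Step 1 applied to $G_tf$, the right-hand side of Step 3 equals $\Tr(YX_{G_tf})$. As $Y\in\cB_1(\mathsf{h})$ is arbitrary and $\cB_1(\mathsf{h})$ separates $\cB(\mathsf{h})$, it follows that $\cT_t(X_f)=X_{G_tf}$.

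The only delicate point is Step 1, the weak interchange of the Bochner integral with the trace pairing. The rest is the explicit change of variables and the two matrix integrals above.
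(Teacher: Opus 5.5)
Your proposal is correct and follows essentially the same route as the paper. The paper also first proves $\Tr(YX_f)=\int_{\R^k} f(x)\Tr\bigl(YW(\iota^{-1}(V_\lambda x))\bigr)\,\md x$ by finite-rank (singular-value) approximation, pairs with $\cT_{*t}(Y)$, applies the Weyl action, substitutes $u=\me^{tZ_\lambda}x$, and identifies the two parts of $\phi_t$ with $S_{\lambda,t}$ and $\zeta_\lambda^T(\1_k-\me^{-tZ_\lambda})u$. Your explicit check that $G_tf\in L^1(\R^k)$ before reapplying Step 1 makes the last identification a little more transparent than in the paper, but the argument is the same.
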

    \begin{proof}
        We first show that, for every $ Y \in \cB_1 (\mathsf{h}) $,
        \begin{equation} \label{eq-trace-of-Y-X-f}
            \Tr ( Y X_f ) = \int_{\R^k} f(x) \Tr (Y W(\iota^{-1} (V_\lambda x))) \md x.
        \end{equation}
        To justify this identity, write a singular-value decomposition of $ Y $ and let
        \begin{equation*}
            Y_n := \sum_{j = 1}^n s_j \ketbra{e_j}{g_j},
        \end{equation*}
        where $ (e_j) $ and $ (g_j) $ are orthonormal families and $ (s_j) $ are the singular values. Then $ \norm{Y_n-Y}_1\to0 $. By the definition of $ X_f $,
        \begin{align}
            \Tr (Y_n X_f) &= \Tr ( \sum_{j=1}^n s_j \ketbra{e_j}{g_j} X_f ) \nonumber \\
            &= \sum_{j=1}^n \int_{\R^k} s_j \langle g_j, f(x) W( \iota^{-1} (V_\lambda x) ) e_j \rangle \md x \nonumber \\ 
            &= \int_{\R^k} f(x) \Tr( Y_n  W(\iota^{-1} (V_\lambda x) ) ) \md x. \label{eq-approximants-Tr-Yn-X-f}
        \end{align}
        Note that 
        \begin{equation*}
            \abs{ \Tr(Y X_f) - \Tr (Y_n X_f) } \le \norm{Y - Y_n}_1 \norm{X_f} \le \norm{Y - Y_n}_1 \norm{f}_{L^1(\R^k)} \rightarrow 0,
        \end{equation*}
        and 
        \begin{align*}
            &\quad \abs{ \int_{\R^k} f(x) \Tr( Y W(\iota^{-1} (V_\lambda x) ) ) \md x  -  \int_{\R^k} f(x) \Tr( Y_n W(\iota^{-1} (V_\lambda x) ) ) \md x } \\
            &=  \abs{ \int_{\R^k} f(x) \Tr( (Y-Y_n) W(\iota^{-1} (V_\lambda x) ) ) \md x } \le \int_{\R^k} \abs{f(x)} \norm{Y - Y_n}_1 \norm{W( \iota^{-1} (V_\lambda x) )} \md x \\
            &\le \norm{f}_{L^1(\R^k; \md x)} \norm{Y - Y_n}_1 \rightarrow 0,
        \end{align*}
        as $ n\to\infty $. Taking the limit in \eqref{eq-approximants-Tr-Yn-X-f} proves \eqref{eq-trace-of-Y-X-f}. We can therefore compute
        \begin{align}
            \Tr ( Y \cT_t (X_f) ) &= \Tr(\cT_{*t} (Y) X_f) \nonumber \\
            &= \int_{\R^k} f(x) \Tr( \cT_{*t} (Y) W(\iota^{-1} (V_\lambda x) ) ) \md x \nonumber \\
            &= \int_{\R^k} f(x) \Tr( Y \cT_{t}( W(\iota^{-1} (V_\lambda x) ) ) ) \md x \nonumber \\
            &= \Tr ( Y \int_{\R^k} f(x) \cdot \phi_t ( \iota^{-1} (V_\lambda x) ) \cdot W( \me^{t Z} \iota^{-1} (V_\lambda x) ) \, \md x ), \label{eq-action-semigroup-on-X-f}
        \end{align}
        where $ \phi_t $ is defined in \eqref{eq-defintion-phi-t}. Make the change of variables
        \begin{equation*}
            u := \me^{t Z_\lambda} x,
        \end{equation*}
        so that $ \md x = \me^{- k (\Re \lambda) t} \md u $.

        Using \eqref{eq-definition-C-lambda}, \eqref{eq-definition-S-lambda-t}, \eqref{eq-definition-zeta-lambda}, and \eqref{eq-Z-V-lambda-exponential-form}, we obtain
        \begin{align}
            \phi_t ( \iota^{-1} (V_\lambda x) ) &= \exp{-\frac{1}{2} \int_0^t \Re \left\langle \me^{s Z} \iota^{-1} (V_\lambda x), C \me^{s Z} \iota^{-1} (V_\lambda x) \right\rangle \md s + \mi \int_0^t \Re \left\langle \zeta, \me^{s Z} \iota^{-1} (V_\lambda x) \right\rangle \md s } \nonumber \\
            &= \exp{ -\frac{1}{2} \int_0^t x^T V_\lambda^T \me^{s \mathbf{Z}^T} \mathbf{C} \me^{s \mathbf{Z}} V_\lambda x \, \md s + \mi \int_0^t \boldsymbol{\zeta}^T \me^{s \mathbf{Z}} V_\lambda x \, \md s } \nonumber \\ 
            &= \exp{ -\frac{1}{2} u^T \me^{- t Z_\lambda^T} \int_0^t \me^{s Z_\lambda^T} V_\lambda^T \mathbf{C} V_\lambda \me^{s Z_\lambda} \md s \, \me^{-t Z_\lambda} u + \mi \boldsymbol{\zeta}^T \int_0^t V_\lambda \me^{s Z_\lambda} \md s \,\me^{-t Z_\lambda} u } \nonumber \\
            &= \exp{ -\frac{1}{2} u^T \left( \int_0^t \me^{-r Z_\lambda^T} C_\lambda \me^{- r Z_\lambda} \md r \right) u + \mi \boldsymbol{\zeta}^T V_\lambda \left( \int_0^t \me^{-r Z_\lambda} \md r \right) u } \nonumber \\
            &= \exp{ -\frac{1}{2} u^T S_{\lambda, t} u + \mi \zeta_\lambda^T (\1_k - \me^{-t Z_\lambda} ) u }, \label{eq-phi-t-inverse-V-lambda-x-by-u}
        \end{align}
        and 
        \begin{equation} \label{eq-weyl-x-by-u}
            W(\me^{t Z} \iota^{-1} (V_\lambda x)) = W( \iota^{-1} (\me^{t \mathbf{Z}} V_\lambda x ) ) = W( \iota^{-1} (V_\lambda \me^{t Z_\lambda} \me^{- t Z_\lambda} u ) ) = W(\iota^{-1} ( V_\lambda u )).
        \end{equation}

        Substituting \eqref{eq-phi-t-inverse-V-lambda-x-by-u} and \eqref{eq-weyl-x-by-u} into \eqref{eq-action-semigroup-on-X-f} gives
        \begin{equation*}
            \Tr (Y \cT_t (X_f) ) = \Tr( Y X_{G_t (f)} ), \quad \forall Y \in \cB_1 (\mathsf{h}),
        \end{equation*}
        which proves \eqref{eq-action-of-QMS-on-X-f}.
    \end{proof}

    \begin{lemma}
        The operators $ (G_t)_{t\ge0} $ defined in \eqref{eq-definition-auxiliary-semigroup-G-t} form a strongly continuous contraction semigroup on $ L^1(\R^k) $.
    \end{lemma}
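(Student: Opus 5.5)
I will verify the three properties directly from the explicit formula \eqref{eq-definition-auxiliary-semigroup-G-t}: each $G_t$ is an $L^1$-contraction, the family satisfies $G_{t+s}=G_tG_s$ with $G_0=\1$, and $t\mapsto G_tf$ is continuous in $L^1$.

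\emph{Contraction.} The exponential prefactor has modulus $\exp\bigl(-\tfrac12 u^TS_{\lambda,t}u\bigr)\le1$, because $S_{\lambda,t}\ge0$; the displacement term is purely imaginary. Hence
\begin{equation*}
\norm{G_tf}_{L^1}\le \me^{-k\Re(\lambda)t}\int_{\R^k}\abs{f(\me^{-tZ_\lambda}u)}\,\md u=\norm{f}_{L^1},
\end{equation*}
after the substitution $x=\me^{-tZ_\lambda}u$. Its Jacobian is $\abs{\det\me^{-tZ_\lambda}}=\me^{-k\Re(\lambda)t}$, since $\det\me^{-tZ_\lambda}=\me^{-t\Tr Z_\lambda}$ and $\Tr Z_\lambda=k\Re\lambda$.

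\emph{Semigroup law.} $G_0=\1$ is immediate, since $S_{\lambda,0}=0$ and $\1_k-\me^{0}=0$. For $G_t(G_sf)(u)$ I will insert the formula twice and compare with $G_{t+s}f(u)$. The scalar prefactors combine to $\me^{-k\Re(\lambda)(t+s)}$, and the argument of $f$ becomes $\me^{-sZ_\lambda}\me^{-tZ_\lambda}u=\me^{-(t+s)Z_\lambda}u$. The quadratic terms add up to $S_{\lambda,t}+\me^{-tZ_\lambda^T}S_{\lambda,s}\me^{-tZ_\lambda}=S_{\lambda,t+s}$; this follows by splitting the integral \eqref{eq-definition-S-lambda-t} at $t$, exactly as in \eqref{eq-S-lambda-t-and-S-lambda}. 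The linear terms add up to
\begin{equation*}
\zeta_\lambda^T(\1_k-\me^{-tZ_\lambda})u+\zeta_\lambda^T(\1_k-\me^{-sZ_\lambda})\me^{-tZ_\lambda}u=\zeta_\lambda^T(\1_k-\me^{-(t+s)Z_\lambda})u,
\end{equation*}
which telescopes. A cleaner alternative is to derive the semigroup law from Lemma~\ref{lemma-T-X-f-and-G-f} together with $\cT_{t+s}=\cT_t\cT_s$ and the injectivity in Lemma~\ref{lemma-X-trivial-kernel}. However, the direct computation is elementary, so I will use it.

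\emph{Strong continuity.} This is the only step needing care. By contractivity and the semigroup law, it suffices to prove $G_tf\to f$ in $L^1$ as $t\downarrow0$, and by an $\varepsilon/3$ argument it suffices to do this for $f$ in the dense set $\cC_c(\R^k)$. For such $f$, the multiplier $m_t(u)$ (prefactor times Gaussian times phase) converges to $1$ uniformly on compact sets, since $S_{\lambda,t}\to0$ and $\me^{-tZ_\lambda}\to\1_k$. Also $f(\me^{-tZ_\lambda}u)\to f(u)$ uniformly, by uniform continuity of $f$. For $t\le1$ all the functions $G_tf$ are supported in the fixed compact set $\me^{[0,1]Z_\lambda}\supp f$ and are bounded by $\norm{f}_\infty$. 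Dominated convergence (or uniform convergence on a compact set) therefore gives $\norm{G_tf-f}_{L^1}\to0$.
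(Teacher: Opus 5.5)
Your proposal is correct and follows the paper's proof essentially step by step: you get the semigroup law by direct substitution, using $S_{\lambda,t}+\me^{-tZ_\lambda^T}S_{\lambda,s}\me^{-tZ_\lambda}=S_{\lambda,t+s}$ and the telescoping phase; contractivity by the linear change of variables; and strong continuity by dominated convergence on $\cC_c(\R^k)$, with supports in the fixed compact set $\{\me^{tZ_\lambda}x : 0\le t\le 1,\ x\in\supp f\}$, followed by density. Your Jacobian computation via $\Tr Z_\lambda=k\Re\lambda$ and the bound $\exp\bigl(-\tfrac12 u^TS_{\lambda,t}u\bigr)\le 1$ only make explicit what the paper leaves implicit.
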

    \begin{proof}
        Clearly $ G_0=\1 $. For $ s,t>0 $, \eqref{eq-definition-S-lambda-t} gives
        \begin{align*}
            &\quad (G_t G_s(f) ) (u) \\
            &= \me^{- k \Re (\lambda) t } \exp{ -\frac{1}{2} u^T S_{\lambda, t} u + \mi \zeta^T_\lambda (\1_k - \me^{- t Z_\lambda}) u } (G_s f) ( \me^{- t Z_\lambda} u ) \\
            &=  \me^{- k \Re (\lambda) t } \exp{ -\frac{1}{2} u^T S_{\lambda, t} u + \mi \zeta^T_\lambda (\1_k - \me^{- t Z_\lambda}) u } \\ 
            &\quad \cdot \me^{- k \Re (\lambda) s } \exp{ -\frac{1}{2} u^T \me^{-t Z_\lambda^T} S_{\lambda, s} \me^{-t Z_\lambda} u + \mi \zeta^T_\lambda (\1_k - \me^{- s Z_\lambda}) \me^{-t Z_\lambda} u } f (\me^{-s Z_\lambda} \me^{-t Z_\lambda} u) \\ 
            &= \me^{-k \Re(\lambda) (t + s)} \exp{-\frac{1}{2} u^T S_{\lambda, t + s} u + \mi \zeta_\lambda^T (\1_k - \me^{-(t + s) Z_\lambda} ) u } f(\me^{-(s+t) Z_\lambda} u) \\
            &= G_{t + s} (f) (u).
        \end{align*}
        Moreover,
        \begin{equation*}
            \norm{G_t f}_{L^1(\R^k)} \le \me^{- k \Re(\lambda) t} \int_{\R^k} \abs{f (\me^{-t Z_\lambda} u )} \md u = \me^{-k \Re(\lambda) t} \me^{k \Re(\lambda) t} \norm{f}_{L^1 (\R^k)} = \norm{f}_{L^1 (\R^k)},
        \end{equation*}
        so the semigroup is contractive.

        To prove strong continuity, first take $ f\in\cC_c(\R^k) $. Then $ G_tf\to f $ pointwise as $ t\downarrow0 $. Let $ K:=\supp f $. For $ 0\le t\le1 $,
        \begin{equation*}
            \supp ( G_t f ) \subset \me^{t Z_\lambda} K.
        \end{equation*}
        The set $ K_1:=\{\me^{tZ_\lambda}x:0\le t\le1,\ x\in K\} $ is compact, and
        \begin{equation*}
            \abs{ (G_t f) (x) } \le \norm{f}_\infty \1_{K_1} (x),
        \end{equation*}
        where $ \1_{K_1} $ is its indicator function. Thus $ \abs{G_tf-f} $ has an integrable bound independent of $ t $. Dominated convergence gives
        \begin{equation} \label{eq-strong-continuity-Cc-functions}
            \lim_{t \rightarrow 0^+} \norm{G_t f - f}_{L^1(\R^k)} = 0, \quad f \in \cC_c (\R^k).
        \end{equation}
        For $ g\in L^1(\R^k) $, choose $ f\in\cC_c(\R^k) $ with $ \norm{g-f}_{L^1}<\epsilon $. Contractivity gives
        \begin{align*}
            \norm{G_t g - g}_{L^1(\R^k)} &\le \norm{G_t (g - f) }_{L^1(\R^k)} + \norm{ G_t f - f }_{L^1(\R^k)} + \norm{f - g}_{L^1(\R^k)} \\
            &< 2 \epsilon + \norm{G_t f - f}_{L^1(\R^k)},
        \end{align*}
        and strong continuity follows from \eqref{eq-strong-continuity-Cc-functions}.
    \end{proof}

    \begin{lemma} \label{lemma-f-eta-L1-function}
        The function $ f_\eta $, as defined in \eqref{eq-f-eta-u}, belongs to $ L^1 (\R^k) $.
    \end{lemma}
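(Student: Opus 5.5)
We bound the modulus and split $ \R^k $ into a neighbourhood of the origin and its complement. Since $ \zeta_\lambda $ is real, the oscillatory factor $ \exp(\mi\zeta_\lambda^Tu) $ has modulus one. Writing $ \beta := -\eta/\Re(\lambda) - k $, the power term satisfies $ \abs{\,\abs{u}^{\beta}\,} = \abs{u}^{\Re\beta} $ for $ u\neq0 $, because $ \abs{u} $ is a positive real number and we use the real logarithm. Hence
\begin{equation*}
    \abs{f_\eta(u)} = \exp\Bigl(-\tfrac12 u^TS_\lambda u\Bigr)\,\abs{u}^{-\Re\eta/\Re\lambda - k}, \qquad u\neq0 .
\end{equation*}
Measurability is clear, since $ f_\eta $ is continuous on $ \R^k\setminus\{0\} $. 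Only the origin and infinity need attention.

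Near the origin we bound the Gaussian factor by one and integrate in polar coordinates on the unit ball. This gives
\begin{equation*}
    \int_{\abs{u}\le1}\abs{f_\eta(u)}\,\md u \le \abs{\mathbb{S}^{k-1}}\int_0^1 \varrho^{-\Re\eta/\Re\lambda - k}\varrho^{k-1}\,\md\varrho = \abs{\mathbb{S}^{k-1}}\int_0^1 \varrho^{-\Re\eta/\Re\lambda - 1}\,\md\varrho ,
\end{equation*}
where $ \abs{\mathbb{S}^{0}}=2 $ when $ k=1 $. This integral is finite exactly when $ -\Re\eta/\Re\lambda>0 $. That inequality holds because $ \Re\eta<0 $ and $ \Re\lambda>0 $, and it is the only place where the sign hypothesis on $ \eta $ enters.

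Away from the origin we use Lemma \ref{lemma-strict-positivity-S-lambda}. Since $ S_\lambda>0 $, we have $ u^TS_\lambda u\ge \mu\abs{u}^2 $, where $ \mu>0 $ is the smallest eigenvalue of $ S_\lambda $. Then
\begin{equation*}
    \int_{\abs{u}\ge1}\abs{f_\eta(u)}\,\md u \le \abs{\mathbb{S}^{k-1}}\int_1^\infty \me^{-\mu\varrho^2/2}\varrho^{-\Re\eta/\Re\lambda-1}\,\md\varrho<\infty,
\end{equation*}
because Gaussian decay dominates any power of $ \varrho $. Adding the two pieces gives $ f_\eta\in L^1(\R^k) $.

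The step that needs care is the behaviour at infinity. It relies entirely on the strict positivity of $ S_\lambda $, which is where irreducibility is used; if $ S_\lambda $ were only positive semidefinite, the integral would diverge along its kernel. Lemma \ref{lemma-strict-positivity-S-lambda} supplies exactly this, so no obstacle remains.
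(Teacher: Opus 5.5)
Your argument is correct and matches the paper's proof. Both bound $\abs{f_\eta(u)}$ by $\exp(-\tfrac{m}{2}\abs{u}^2)\abs{u}^{-\Re\eta/\Re\lambda-k}$, with $m>0$ the smallest eigenvalue of $S_\lambda$ from Lemma \ref{lemma-strict-positivity-S-lambda}, then pass to polar coordinates and use $-\Re\eta/\Re\lambda>0$ for integrability at the origin; splitting the radial integral at $\abs{u}=1$ is only a cosmetic difference from the paper's single integral over $(0,\infty)$.
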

    \begin{proof}
        By definition,
        \begin{equation*}
            \abs{ f_\eta (x) } = \exp{ -\frac{1}{2} x^T S_\lambda x } \abs{x}^{-\Re(\eta)/\Re(\lambda) - k}.
        \end{equation*}
        Since $ \Re\lambda>0 $ and $ \Re\eta<0 $,
        \begin{equation} \label{eq-ratio-real-parts-eta-lambda}
            - \Re(\eta) / \Re(\lambda) > 0.
        \end{equation}
        Let $ m_\lambda>0 $ be the smallest eigenvalue of $ S_\lambda $. Then
        \begin{equation*}
            \abs{f_\eta (x)} \le \exp{- \frac{m_\lambda}{2} \abs{x}^2 } \abs{x}^{-k - \Re(\eta)/\Re(\lambda)}.
        \end{equation*}
        For $ k=2 $, polar coordinates give
        \begin{equation*}
            \int_{\R^k} \abs{f_\eta (x)} \md x \le 2 \pi \int_0^\infty \me^{- \frac{m_\lambda}{2} r^2} r^{-1 - \Re(\eta)/\Re(\lambda)} \md r.
        \end{equation*}
        The integral converges by \eqref{eq-ratio-real-parts-eta-lambda}. For $ k=1 $, the same estimate holds with $ 2 $ in place of $ 2\pi $, so the conclusion follows in that case as well.
    \end{proof}

    \section{Proofs for periodic critical drift} \label{appendix-periodic-proofs}

    \subsection{The semigroup at a return time}

    The first three lemmas apply at every return time, including arbitrary positive times when $ \mathbf{Z}=0 $.

    Fix $ r>0 $ with $ \me^{r\mathbf{Z}}=\1 $. By \eqref{eq-positive-diffusion-gramian}, $ \mathbf{S}_r>0 $. At this return time,
        \begin{equation} \label{eq-action-of-GQMS-at-time-tau}
            \cT_r (W(z)) = \phi_r (z) W(\me^{r Z} z) = \exp{-\frac{1}{2} \mathbf{z}^T \mathbf{S}_r \mathbf{z} + \mi \mathbf{b}_r^T \mathbf{z} } W(z).
        \end{equation}
        For $ Y\in\cB_1(\mathsf{h}) $, Lemma \ref{lemma-equivalent-representation-predual-semigroup-at-tau} yields the representation
        \begin{equation} \label{eq-predual-semigroup-at-tau-equivalent-form}
            \cT_{* r} (Y) = \int_{\R^{2d}} f_r ( u ) W(\iota^{-1} (u)) Y W( \iota^{-1} (u) )^* \md u,
        \end{equation}
        where
        \begin{equation} \label{eq-definition-f-tau}
            f_r (u) := \frac{2^d}{\pi^d \sqrt{\det \mathbf{S}_r}} \exp{-\frac{1}{2} (2 \mathbf{J} u + \mathbf{b}_r)^T \mathbf{S}_r^{-1} (2 \mathbf{J} u + \mathbf{b}_r) }, \quad u \in \R^{2d}.
        \end{equation}

    \begin{lemma} \label{lemma-equivalent-representation-predual-semigroup-at-tau}
        At every return time $ r>0 $, the predual has the representation \eqref{eq-predual-semigroup-at-tau-equivalent-form}, with Gaussian density \eqref{eq-definition-f-tau}.
    \end{lemma}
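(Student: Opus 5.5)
The plan is to copy the proof of Lemma \ref{eq-equiv-form-action-auxiliary-QMS}, now with a full-rank Gaussian on $\R^{2d}$ and a linear phase coming from the displacement. First I define a candidate map $\widetilde{\cT}_r$ on $\cB(\mathsf{h})$ by
\begin{equation*}
    \Tr(Y\widetilde{\cT}_r(X)) := \int_{\R^{2d}} f_r(u)\,\Tr\bigl(Y W(\iota^{-1}(u))^* X W(\iota^{-1}(u))\bigr)\,\md u .
\end{equation*}
I also set up the trace-norm Bochner integral on the right-hand side of \eqref{eq-predual-semigroup-at-tau-equivalent-form}. This works exactly as before: $u\mapsto W(\iota^{-1}(u))YW(\iota^{-1}(u))^*$ is trace-norm continuous by finite-rank approximation, and it is bounded by $\norm{Y}_1$. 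I will check by a Gaussian integral that $f_r\ge0$ has total mass one. The substitution $v=2\mathbf{J}u+\mathbf{b}_r$ has Jacobian $|\det 2\mathbf{J}|=2^{2d}$, and $\int \me^{-v^T\mathbf{S}_r^{-1}v/2}\,\md v=(2\pi)^d\sqrt{\det\mathbf{S}_r}$. These give $\frac{2^d}{\pi^d\sqrt{\det\mathbf S_r}}\cdot 2^{-2d}(2\pi)^d\sqrt{\det\mathbf S_r}=1$. So the Bochner integral has trace norm at most $\norm{Y}_1$. Cyclicity of the trace then shows that $\widetilde{\cT}_r$ is normal, with predual given by that integral.

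The main step is to identify $\widetilde{\cT}_r$ with $\cT_r$ on Weyl operators. By the Weyl relations, $W(w)^*W(z)W(w)=\me^{-2\mi\Im\langle z,w\rangle}W(z)=\me^{-2\mi\mathbf{z}^T\mathbf{J}\mathbf{w}}W(z)$. Hence $\widetilde{\cT}_r(W(z))=\widehat{f_r}(2\mathbf{J}^T\mathbf{z})\,W(z)$, with the paper's Fourier convention. To compute this, I change variables to $v=2\mathbf{J}u+\mathbf{b}_r$, so that $u=\frac12\mathbf{J}^{-1}(v-\mathbf{b}_r)$. The phase then becomes
\begin{equation*}
    -2\mi\,\mathbf{z}^T\mathbf{J}u=-\mi\,\mathbf{z}^T(v-\mathbf{b}_r).
\end{equation*}
The $v$-integral is the characteristic function of the centred Gaussian $N(0,\mathbf{S}_r)$ evaluated at $-\mathbf z$, which equals $\me^{-\mathbf{z}^T\mathbf{S}_r\mathbf{z}/2}$. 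Together with the factor $\me^{\mi\mathbf{b}_r^T\mathbf{z}}$, this gives exactly $\phi_r(z)$, as required by \eqref{eq-action-of-GQMS-at-time-tau}. The sign bookkeeping between $\mathbf{J}$, $\mathbf{J}^{-1}=-\mathbf{J}$ and the sign of $\mathbf b_r$ is the one place where errors are likely. I expect this step to be the main obstacle, and I will verify it by tracking the substitution carefully rather than by quoting a formula.

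Finally, $\widetilde{\cT}_r$ and $\cT_r$ are both normal and bounded, and they agree on every Weyl operator, where \eqref{eq-action-of-GQMS-at-time-tau} uses $\me^{r\mathbf Z}=\1$. The linear span of the Weyl operators is $\sigma$-weakly dense, so $\widetilde{\cT}_r=\cT_r$. Taking preduals, via \eqref{eq-predual-pairing} and cyclicity, gives \eqref{eq-predual-semigroup-at-tau-equivalent-form}. Positivity $\mathbf{S}_r>0$ comes from \eqref{eq-positive-diffusion-gramian}, and it ensures that $f_r$ is well defined.
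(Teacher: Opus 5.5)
Your proof is correct; the sign and normalization bookkeeping checks out, and the proof rests on the same Gaussian identity $\int_{\R^{2d}} f_r(u)\me^{-2\mi\mathbf{z}^T\mathbf{J}u}\,\md u=\phi_r(z)$ as the paper. The paper skips building $\widetilde{\cT}_r$ and checking its normality: it compares the characteristic functions of the Bochner integral $\Phi(Y)$ and of $\cT_{*r}(Y)$ and uses injectivity of $Y\mapsto\chi_Y$, whereas your dual-side argument via $\sigma$-weak density of the Weyl operators, modeled on Lemma \ref{eq-equiv-form-action-auxiliary-QMS}, is the same fact seen through the duality.
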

    \begin{proof}
        The Gaussian integral formula gives
        \begin{equation*}
            \int_{\R^{2d}} f_r (u) \me^{-2 \mi \mathbf{z}^T \mathbf{J} u} \md u = \exp{ -\frac{1}{2} \mathbf{z}^T \mathbf{S}_r \mathbf{z} + \mi \mathbf{b}_r^T \mathbf{z} } = \phi_r (z).
        \end{equation*}
        Equivalently, with the Fourier convention of Section \ref{section-spectral-notation},
        \begin{equation} \label{eq-f-tau-phi-tau-relation-2}
            \phi_r (z) = \widehat{f_r} ( - 2 \mathbf{J} \mathbf{z} ).
        \end{equation}
        Define 
        \begin{equation*}
            \Phi (Y) = \int_{\R^{2d}} f_r ( u ) W(\iota^{-1} (u)) Y W( \iota^{-1} (u) )^* \md u.
        \end{equation*}
        Taking characteristic functions and using the Weyl relations gives
        \begin{align*}
            \chi_{\Phi (Y)} (z) &= \Tr \left( \int_{\R^{2d}} f_r ( u ) W(\iota^{-1} (u)) Y W( \iota^{-1} (u) )^* \md u \, W(z) \right) \\
            &= \int_{\R^{2d}} f_r (u) \me^{- 2 \mi \mathbf{z}^T \mathbf{J} u} \, \md u \cdot \Tr(Y W(z)) = \phi_r(z) \chi_Y (z).
        \end{align*}
        On the other hand, \eqref{eq-action-of-GQMS-at-time-tau} gives
        \begin{equation*}
            \chi_{\cT_{*r} (Y)} (z) = \Tr ( Y \cT_{r} W(z) ) = \phi_r (z) \chi_Y (z).
        \end{equation*}
        Injectivity of characteristic functions implies $ \Phi(Y)=\cT_{*r}(Y) $, proving the representation.
    \end{proof}

    \begin{lemma} \label{lemma-spectrum-predual-semigroup-at-tau}
        We have
        \begin{equation*}
            \sigma (\cT_{*r}) = \overline{\phi_r ( \C^{d} )},
        \end{equation*}
        where $ \phi_r(\C^d):=\{\phi_r(z):z\in\C^d\} $.
    \end{lemma}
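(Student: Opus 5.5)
We prove the two inclusions separately. By Lemma \ref{lemma-equivalent-representation-predual-semigroup-at-tau}, the operator $\cT_{*r}$ acts on characteristic functions as multiplication by $\phi_r$, and it is a convolution-type average of the isometries $Y\mapsto W(\iota^{-1}(u))YW(\iota^{-1}(u))^*$ against the integrable Gaussian density $f_r$. Two features drive the argument. The Weyl operators are bounded dual eigenvectors, and the Weyl conjugations form a commuting isometric representation of $\R^{2d}$ on $\cB_1(\mathsf{h})$.

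\emph{Inclusion $\overline{\phi_r(\C^d)}\subset\sigma(\cT_{*r})$.} For each $z$, relation \eqref{eq-action-of-GQMS-at-time-tau} gives $\cT_r(W(z))=\phi_r(z)W(z)$. Hence for every $Y\in\cB_1(\mathsf{h})$,
\begin{equation*}
\Tr\bigl((\phi_r(z)-\cT_{*r})(Y)\,W(z)\bigr)=0 .
\end{equation*}
So $\ran(\phi_r(z)-\cT_{*r})$ lies in the proper closed hyperplane annihilated by $W(z)\neq0$. This shows $\phi_r(z)\in\sigma(\cT_{*r})$, exactly as in the proof of Theorem \ref{theorem-whole-spectrum-strictly-unstable-drift}. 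Closedness of the spectrum then gives the whole closure.

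\emph{Inclusion $\sigma(\cT_{*r})\subset\overline{\phi_r(\C^d)}$.} Write $\alpha_u(Y):=W(\iota^{-1}(u))YW(\iota^{-1}(u))^*$. By the Weyl relations, $u\mapsto\alpha_u$ is a group representation of $\R^{2d}$ by trace-norm isometries. It is strongly continuous, as in the proof of Lemma \ref{eq-equiv-form-action-auxiliary-QMS}. Its integrated form
\begin{equation*}
\pi(f):=\int_{\R^{2d}} f(u)\alpha_u\,\md u
\end{equation*}
is a contractive algebra homomorphism from the commutative Banach algebra $L^1(\R^{2d})$ (with convolution) into $\cB(\cB_1(\mathsf{h}))$, and $\cT_{*r}=\pi(f_r)$. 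Consequently $\sigma(\cT_{*r})\subset\sigma_{L^1}(f_r)\cup\{0\}$. The Gelfand spectrum of $f_r$ in $L^1(\R^{2d})$ is $\overline{\widehat{f_r}(\R^{2d})}\cup\{0\}$ because the algebra is nonunital. The closure of a unital commutative algebra containing $\pi(L^1)$ would be used to make this rigorous. By \eqref{eq-f-tau-phi-tau-relation-2}, $\widehat{f_r}(-2\mathbf{J}\mathbf{z})=\phi_r(z)$, and $-2\mathbf{J}$ is invertible, so $\widehat{f_r}(\R^{2d})=\phi_r(\C^d)$. Finally, $0\in\overline{\phi_r(\C^d)}$, since $\mathbf{S}_r>0$ makes $\abs{\phi_r(z)}\to0$ as $\abs{z}\to\infty$. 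So the extra point $0$ is harmless.

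The main obstacle is the spectral mapping step $\sigma(\pi(f_r))\subset\sigma_{L^1}(f_r)\cup\{0\}$. I would handle it by adjoining a unit: if $\mu\notin\overline{\widehat{f_r}(\R^{2d})}\cup\{0\}$, then by Wiener's lemma for the unitization $L^1\oplus\C\delta_0$ we have $\mu\delta_0-f_r$ invertible. Its inverse has the form $\mu^{-1}\delta_0+g$ with $g\in L^1$. Applying the homomorphism, extended to measures by $\delta_0\mapsto\mathrm{id}$, yields the bounded inverse $\mu^{-1}+\pi(g)$ of $\mu-\cT_{*r}$.

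The only checks needed are that $\pi$ is multiplicative, which follows from Fubini together with $\alpha_u\alpha_v=\alpha_{u+v}$, and that $\pi$ is well-defined as a Bochner integral, which is already shown in Lemma \ref{eq-equiv-form-action-auxiliary-QMS}.
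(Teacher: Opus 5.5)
Your proof is correct. The inclusion $\overline{\phi_r(\C^d)}\subset\sigma(\cT_{*r})$ is exactly the paper's argument. For the other inclusion you arrive at literally the same resolvent as the paper, namely $R_\lambda(Y)=\lambda^{-1}Y+\int_{\R^{2d}}h_\lambda(x)\,W(\iota^{-1}(x))YW(\iota^{-1}(x))^*\,\md x$ with $\widehat{h_\lambda}=(\lambda-\widehat{f_r})^{-1}-\lambda^{-1}$. What differs is how you certify that $h_\lambda\in L^1(\R^{2d})$ and that $R_\lambda$ inverts $\lambda-\cT_{*r}$.

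You obtain $h_\lambda$ from Gelfand theory of the unitized group algebra $L^1(\R^{2d})\oplus\C\delta_0$. This is Wiener's lemma: $\lambda\delta_0-f$ is invertible iff $\lambda\neq0$ and $\lambda\notin\widehat f(\R^{2d})$. The two-sided inverse property then follows formally from commutativity and from multiplicativity of the integrated representation $\pi$, which rests on $\alpha_u\alpha_v=\alpha_{u+v}$.

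The paper avoids identifying the maximal ideal space of $L^1(\R^{2d})$. Since $\widehat{f_r}$ is a nondegenerate Gaussian times a linear phase, it lies in $\cS(\R^{2d})$. Composing it with $w\mapsto w/(\lambda(\lambda-w))$, whose derivatives are bounded on the range of $\widehat{f_r}$, keeps it in $\cS(\R^{2d})$, and Fourier inversion gives $h_\lambda\in\cS(\R^{2d})$. The identity $(\lambda-\cT_{*r})R_\lambda=\mathrm{id}$ is then checked by a direct Fubini computation, and commutativity of the Weyl conjugations supplies the other side.

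The paper's route is elementary and self-contained but exploits the explicit Gaussian form of $f_r$. Yours is shorter modulo a substantial cited theorem. It also applies unchanged to any integrable density $f$, giving $\sigma(\pi(f))=\overline{\widehat f(\R^{2d})}\cup\{0\}$ in general.
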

    \begin{proof}
        Let $ \lambda \notin \overline{\phi_r (\C^d)} $. Since $ \phi_r (z) \rightarrow 0 $ as $ \abs{z} \rightarrow +\infty $, we have $ 0 \in \overline{\phi_r (\C^d)} $. Thus $ \lambda \neq 0 $, and we may define 
        \begin{equation*}
            F_\lambda (x) := \frac{1}{\lambda - \phi_r (\iota^{-1}(\mathbf{J} x / 2))} - \frac{1}{\lambda}, \quad x \in \R^{2d}.
        \end{equation*}
        By \eqref{eq-f-tau-phi-tau-relation-2}, this can be written as
        \begin{equation} \label{eq-F-lambda-by-f-tau}
            F_\lambda (x) = \frac{1}{\lambda - \widehat{f_r} (x)} - \frac{1}{\lambda} = \frac{\widehat{f_r} (x)}{\lambda \left( \lambda - \widehat{f_r} (x) \right)}, \quad x \in \R^{2d}.
        \end{equation}
        The denominator is bounded away from zero, so $ F_\lambda $ is well-defined.
    
        We show that $ F_\lambda $ is a Schwartz function. By \eqref{eq-f-tau-phi-tau-relation-2}, $ \widehat{f_r} $ is a nondegenerate Gaussian multiplied by a linear phase, and hence belongs to $ \cS(\R^{2d}) $. Moreover,
        \begin{equation*}
            \delta := \inf_{x \in \R^{2d}} \abs{ \lambda - \widehat{f_r} (x) } > 0.
        \end{equation*} 
        Define the function
        \begin{equation*}
            E_\lambda (w) := \frac{w}{\lambda (\lambda - w)},
        \end{equation*}
        so that $ F_\lambda = E_\lambda \circ \widehat{f_r} $. For every integer $ m \ge 1 $, $ E_\lambda^{(m)} (w) = m! / (\lambda - w)^{m+1} $. Hence,
        \begin{equation*}
            \abs{  E_\lambda^{(m)} (\widehat{f_r} (x)) } =  \frac{m!}{\abs{\lambda - \widehat{f_r} (x)}^{m+1}} \le \frac{m!}{\delta^{m+1}}.
        \end{equation*}
        Thus all derivatives of positive order of $ E_\lambda $ are bounded on the range of $ \widehat{f_r} $.
    
        For the zeroth-order term, we have 
        \begin{equation*}
            \abs{F_\lambda (x)} = \frac{\abs{\widehat{f_r} (x)}}{\abs{\lambda} \abs{\lambda - \widehat{f_r} (x)} } \le \frac{\abs{\widehat{f_r} (x)}}{ \abs{\lambda} \delta },
        \end{equation*}
        so $ F_\lambda $ decays rapidly. For each nonzero multi-index $ \alpha $, the chain rule expresses $ \partial^\alpha F_\lambda $ as a finite sum of terms of the form
        \begin{equation*}
            E_\lambda^{(m)} ( \widehat{f_r} (x) ) \prod_{j=1}^m \partial^{\alpha_j} \widehat{f_r} (x),
        \end{equation*}
        where $ \alpha_j \neq 0 $ and $ \alpha_1 + \cdots + \alpha_m = \alpha $. The first factor is bounded, while every derivative of $ \widehat{f_r} $ belongs to the Schwartz space. Hence, for every $ M \ge 0 $, 
        \begin{equation*}
            \sup_{x \in \R^{2d}} (1 + \abs{x})^M \abs{ \partial^\alpha F_\lambda (x) } < + \infty.
        \end{equation*}
        Therefore $ F_\lambda\in\cS(\R^{2d}) $. Fourier inversion gives $ h_\lambda\in\cS(\R^{2d}) $ such that
        \begin{equation*}
            \widehat{h_\lambda} = F_\lambda.
        \end{equation*}

        Define the bounded operator $ R_\lambda $ on $ \cB_1 (\mathsf{h}) $ by 
        \begin{equation*}
            R_\lambda (Y) := \frac{1}{\lambda} Y + \int_{\R^{2d}} h_\lambda (x) W(\iota^{-1} x) Y W(\iota^{-1} (x))^* \, \md x.
        \end{equation*}
        Using Fubini's theorem, the Weyl relations, and the substitution $ v=u+x $, we obtain
        \begin{align*}
            &\quad (\lambda \1 - \cT_{*r}) R_\lambda (Y)  \\
            &= Y + \lambda \int_{\R^{2d}} h_\lambda (x) W(\iota^{-1} (x)) Y W(\iota^{-1} (x))^* \md x  \\ 
            &\quad - \frac{1}{\lambda} \int_{\R^{2d}} f_r (x) W(\iota^{-1} (x)) Y W(\iota^{-1} (x))^* \md x  \\ 
            &\quad - \int_{\R^{2d}} f_r (u) W(\iota^{-1} (u)) \left( \int_{\R^{2d}} h_\lambda(x) W(\iota^{-1} (x)) Y W(\iota^{-1} (x))^* \md x \right) W(\iota^{-1} (u))^* \md u  \\
            &= Y + \int_{\R^{2d}} \left( \lambda h_\lambda (x) - \frac{1}{\lambda} f_r (x) \right) W(\iota^{-1} (x)) Y W(\iota^{-1} (x))^* \md x  \\
            &\quad - \int_{\R^{2d}}\int_{\R^{2d}} f_r(u) h_\lambda(x) W(\iota^{-1}(u+x)) Y W( \iota^{-1} (u+x) )^*  \md u \, \md x  \\
            &= Y + \int_{\R^{2d}} \left( \lambda h_\lambda (x) - \frac{1}{\lambda} f_r (x) \right) W(\iota^{-1} (x)) Y W(\iota^{-1} (x))^* \md x  \\
            &\quad - \int_{\R^{2d}}\int_{\R^{2d}} f_r(v - x) h_\lambda(x) \md x \, W(\iota^{-1}(v)) Y W( \iota^{-1} (v) )^*  \md v  \\ 
            &= Y + \int_{\R^{2d}} \left( \lambda h_\lambda - \frac{1}{\lambda} f_r - f_r \ast h_\lambda  \right) (x) \,  W(\iota^{-1} (x)) Y W(\iota^{-1} (x))^* \md x.
        \end{align*}
        Taking Fourier transforms and using \eqref{eq-F-lambda-by-f-tau} gives
        \begin{equation*}
            \lambda \widehat{h_\lambda} - \frac{1}{\lambda} \widehat{f_r} - \widehat{f_r} \widehat{h_\lambda} = (\lambda - \widehat{f_r}) F_\lambda - \frac{1}{\lambda} \widehat{f_r} = (\lambda - \widehat{f_r}) \frac{\widehat{f_r}}{\lambda (\lambda - \widehat{f_r})} - \frac{1}{\lambda} \widehat{f_r} = 0,
        \end{equation*}
        and hence 
        \begin{equation*}
            (\lambda \1 - \cT_{*r} ) R_\lambda (Y) = Y, \quad \forall Y \in \cB_1 (\mathsf{h}).
        \end{equation*}
        The Weyl conjugations commute with one another, because the scalar phases cancel. Thus $ R_\lambda $ commutes with $ \cT_{*r} $ and is a two-sided bounded inverse of $ \lambda\1-\cT_{*r} $. This proves $ \sigma(\cT_{*r})\subset\overline{\phi_r(\C^d)} $.

        For the reverse inclusion, fix $ z\in\C^d $ and set $ \lambda=\phi_r(z) $. For every $ Y\in\cB_1(\mathsf{h}) $,
        \begin{equation*}
            \chi_{ (\lambda \1 - \cT_{*r}) (Y) } (z) = \Tr( ((\lambda \1 - \cT_{*r})(Y)) W(z) ) = \phi_r (z) \chi_Y(z) - \phi_r (z) \chi_Y(z) = 0.
        \end{equation*}
        The nonzero functional $ Y\mapsto\Tr(YW(z)) $ annihilates the range of $ \lambda\1-\cT_{*r} $. Hence $ \lambda\in\sigma(\cT_{*r}) $. Closedness of the spectrum proves the reverse inclusion.
    \end{proof}

    \begin{lemma} \label{lemma-closure-of-phi-tau}
        We have 
        \begin{equation*}
            \overline{\phi_r (\C^d)} = \phi_r (\C^d) \cup \{ 0 \}.
        \end{equation*}
    \end{lemma}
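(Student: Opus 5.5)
The plan is to view $\phi_r$ as a continuous function on $\R^{2d}$ through $\iota$. It has the explicit form
\begin{equation*}
    \phi_r(z)=\exp\Bigl(-\tfrac12\mathbf{z}^T\mathbf{S}_r\mathbf{z}+\mi\mathbf{b}_r^T\mathbf{z}\Bigr),
\end{equation*}
with $\mathbf{S}_r>0$ by \eqref{eq-positive-diffusion-gramian}. The inclusion $\phi_r(\C^d)\cup\{0\}\subset\overline{\phi_r(\C^d)}$ is immediate. Indeed, $\abs{\phi_r(z)}=\exp(-\frac12\mathbf{z}^T\mathbf{S}_r\mathbf{z})\le \exp(-\frac{m_r}{2}\abs{\mathbf{z}}^2)$, where $m_r>0$ is the smallest eigenvalue of $\mathbf{S}_r$. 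Hence $\phi_r(z)\to0$ as $\abs{z}\to\infty$, so $0$ lies in the closure. This was already used in the proof of Lemma \ref{lemma-spectrum-predual-semigroup-at-tau}.

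For the reverse inclusion, I will take $\mu\in\overline{\phi_r(\C^d)}$ and a sequence $z_n$ with $\phi_r(z_n)\to\mu$. There are two cases. If $(z_n)$ is unbounded, pass to a subsequence with $\abs{z_n}\to\infty$. The Gaussian bound above then gives $\phi_r(z_n)\to0$, so $\mu=0$. If $(z_n)$ is bounded, compactness of closed balls in $\R^{2d}$ gives a convergent subsequence $z_n\to z_*$. Continuity of $\phi_r$ then yields $\mu=\phi_r(z_*)\in\phi_r(\C^d)$.

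Equivalently, the argument says that $\phi_r$ extends continuously to the one-point compactification of $\R^{2d}$ with value $0$ at infinity. The image of a compact set is compact, hence closed, and it equals $\phi_r(\C^d)\cup\{0\}$. There is no real obstacle here. The only point needing care is the uniform lower bound $\mathbf{z}^T\mathbf{S}_r\mathbf{z}\ge m_r\abs{\mathbf{z}}^2$. It relies on irreducibility through \eqref{eq-positive-diffusion-gramian}, since without strict positivity $\phi_r$ need not vanish at infinity along degenerate directions.

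Whether $0$ itself belongs to $\phi_r(\C^d)$ is irrelevant to the statement. In fact it never does, because $\phi_r$ is nowhere zero, so the union is disjoint.
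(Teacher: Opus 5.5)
Your proposal is correct and follows essentially the same argument as the paper: Gaussian decay of $\abs{\phi_r}$ (from $\mathbf{S}_r>0$) puts $0$ in the closure, and compactness of bounded sets plus continuity of $\phi_r$ handles every other limit point. The only cosmetic difference is that the paper fixes $\lambda\neq0$ and derives boundedness of $(z_k)$ from $\abs{\phi_r(z_k)}\ge\abs{\lambda}/2$, whereas you split into the bounded and unbounded cases directly; the two case splits are equivalent.
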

    \begin{proof}
        Since $ \abs{\phi_r(z)}\to0 $ as $ \abs{z}\to\infty $, the inclusion $ \phi_r(\C^d)\cup\{0\}\subset\overline{\phi_r(\C^d)} $ is immediate. Conversely, let $ \lambda\neq0 $ belong to this closure, and choose $ z_k $ with $ \phi_r(z_k)\to\lambda $. For sufficiently large $ k $,
        \begin{equation*}
            \exp{-\frac{1}{2} \mathbf{z}_k^T \mathbf{S}_r \mathbf{z}_k } = \abs{\phi_r (z_k)} \ge \frac{\abs{\lambda}}{2} > 0.
        \end{equation*}
        Positive definiteness of $ \mathbf{S}_r $ gives $ c_r>0 $ such that $ \mathbf{z}^T\mathbf{S}_r\mathbf{z}\ge c_r\abs{\mathbf{z}}^2 $. Consequently,
        \begin{equation*}
            c_r \norm{\mathbf{z}_k}_{\R^{2d}}^2 \le \mathbf{z}_k^T \mathbf{S}_r \mathbf{z}_k \le - 2 \log \frac{\abs{\lambda}}{2},
        \end{equation*}
        so that $ (z_k)_{k \in \N} $ is a bounded sequence. Since $ \C^d \cong \R^{2d} $ is finite-dimensional, there exists a subsequence $ (z_{k_j})_{j \in \N} $ such that $ z_{k_j} \rightarrow z $ for some $ z \in \C^d $. Since $ \phi_r $ is continuous, we have
        \begin{equation*}
            \lim_{j \rightarrow +\infty} \phi_r (z_{k_j}) = \phi_r (z) = \lambda.
        \end{equation*}
        Thus $ \lambda\in\phi_r(\C^d) $, as required.
    \end{proof}

    \subsection{The zero-drift generator}

    \begin{proof}[Proof of Theorem \ref{theorem-zero-drift-case-spectrum}]
        By \eqref{eq-evolution-Wz-zero-drift-case}, for each $ z \in \C^d $, the Weyl operator belongs to the domain of the $ \sigma $-weak generator and satisfies
        \begin{equation*}
            \cL (W(z)) = \psi(z) W(z).
        \end{equation*}
        Consequently, for any $ Y \in \dom \cL_* $ we have 
        \begin{equation*}
            \Tr( (\psi(z) - \cL_*) (Y) \, W(z) ) = 0,
        \end{equation*}
        so that the range of $ \psi (z) - \cL_* $ is not dense. Hence 
        \begin{equation*}
            \psi(\C^d) \subset \sigma (\cL_*).
        \end{equation*}

        Conversely, let $ \lambda\in\sigma(\cL_*) $. Spectral inclusion \cite[Chapter IV, Theorem 3.6]{EngelNagel2000} gives $ \me^{t\lambda}\in\sigma(\cT_{*t}) $ for every $ t>0 $. Lemmas \ref{lemma-spectrum-predual-semigroup-at-tau} and \ref{lemma-closure-of-phi-tau}, applied with return time $ r=t $, imply
        \begin{equation*}
            \sigma (\cT_{*t}) = \overline{ \{ \me^{t \psi (z)} : z \in \C^d \} } = \{ \me^{t \psi(z)} : z\in\C^d \} \cup \{ 0 \}.
        \end{equation*}
        Since $ \me^{t\lambda}\neq0 $, there are $ z_t\in\C^d $ and $ n_t\in\Z $ such that
        \begin{equation} \label{eq-relation-difference-lambda-psi}
            t( \lambda - \psi (z_t) ) = 2 \pi \mi n_t,
        \end{equation}
        which implies that 
        \begin{equation*}
            \Re \lambda = - \frac{1}{2} \mathbf{z}_t^T \mathbf{C} \mathbf{z}_t, \quad \Im \lambda = \frac{2 \pi n_t}{t} + \boldsymbol{\zeta}^T \mathbf{z}_t, \quad \forall t > 0.
        \end{equation*}
        Since $ \mathbf{C}>0 $, the vectors $ \mathbf{z}_t $ all lie in a fixed bounded ellipsoid. Thus there is $ M>0 $, independent of $ t $, such that
        \begin{equation} \label{eq-im-lambda-upperbound}
            \abs{\Im \lambda - \boldsymbol{\zeta}^T \mathbf{z}_t} \le \abs{\Im \lambda} + \abs{\boldsymbol{\zeta}^T \mathbf{z}_t} < M.
        \end{equation} 
        If $ n_t\neq0 $, then
        \begin{equation} \label{eq-im-lambda-lowerbound}
            \abs{\Im \lambda - \boldsymbol{\zeta}^T \mathbf{z}_t } = \abs{\frac{2 \pi n_t}{t}} \ge \frac{2 \pi}{t}.
        \end{equation}
        For sufficiently small $ t>0 $, \eqref{eq-im-lambda-lowerbound} contradicts \eqref{eq-im-lambda-upperbound}. Hence $ n_t=0 $ for such $ t $, and \eqref{eq-relation-difference-lambda-psi} gives $ \lambda=\psi(z_t)\in\psi(\C^d) $.
    \end{proof}

    \begin{proof}[Proof of Proposition \ref{proposition-zero-drift-shape}]
        When $ \zeta = 0 $, we have $ \psi (z) = - \mathbf{z}^T \mathbf{C} \mathbf{z} / 2 $, so that $ \sigma(\cL_*) = \psi(\C^d) = (-\infty, 0] $ follows directly.

        Suppose now that $ \zeta \neq 0 $, and let $ \lambda = \psi (z) $ for some $ z \in \C^d $. Then
        \begin{equation} \label{eq-relation-lambda-z-C-z}
            \Re \lambda = - \frac{1}{2} \mathbf{z}^T \mathbf{C} \mathbf{z}, \quad \Im \lambda = \boldsymbol{\zeta}^T \mathbf{z}.
        \end{equation}
        By the Cauchy--Schwarz inequality, 
        \begin{equation*}
            (\Im \lambda)^2 = \left( (\mathbf{C}^{-1/2} \boldsymbol{\zeta} )^T (\mathbf{C}^{1/2} \mathbf{z}) \right)^2 \le ( \boldsymbol{\zeta}^T \mathbf{C}^{-1} \boldsymbol{\zeta} ) ( \mathbf{z}^T \mathbf{C} \mathbf{z} ). 
        \end{equation*}
        Using \eqref{eq-relation-lambda-z-C-z}, we obtain 
        \begin{equation} \label{eq-parabola-equivalent-form}
            (\Im \lambda)^2 \le - 2 (\Re \lambda) (\boldsymbol{\zeta}^T \mathbf{C}^{-1} \boldsymbol{\zeta}).
        \end{equation}
        Therefore, we have 
        \begin{equation*}
            \psi(\C^d) \subset \left\{ \lambda \in \C : \Re \lambda \le - \frac{(\Im \lambda)^2}{2 \boldsymbol{\zeta}^T \mathbf{C}^{-1} \boldsymbol{\zeta} } \right\}.
        \end{equation*}
        For the reverse inclusion, fix $ \lambda\in\C $ satisfying \eqref{eq-parabola-equivalent-form} and set
        \begin{equation*}
            \mathbf{z}_0 = \frac{\Im \lambda}{ \boldsymbol{\zeta}^T \mathbf{C}^{-1} \boldsymbol{\zeta} } \mathbf{C}^{-1} \boldsymbol{\zeta}.
        \end{equation*}
        It follows that 
        \begin{equation*}
            \boldsymbol{\zeta}^T \mathbf{z}_0 = \Im \lambda, \quad \mathbf{z}_0^T \mathbf{C} \mathbf{z}_0 =  \frac{(\Im \lambda)^2}{\boldsymbol{\zeta}^T \mathbf{C}^{-1} \boldsymbol{\zeta}}.
        \end{equation*}
        Choose $ \mathbf{w}_0\in\R^{2d}\setminus\{0\} $ with $ \boldsymbol{\zeta}^T\mathbf{w}_0=0 $ and normalize it so that $ \mathbf{w}_0^T\mathbf{C}\mathbf{w}_0=1 $. This is possible because $ 2d\ge2 $ and $ \mathbf{C}>0 $. Set
        \begin{equation*}
            r := \sqrt{ - 2 \Re \lambda - \frac{(\Im \lambda)^2}{ \boldsymbol{\zeta}^T \mathbf{C}^{-1} \boldsymbol{\zeta} } }, \quad \mathbf{w} := \mathbf{z}_0 + r \mathbf{w}_0.
        \end{equation*}
        Since $ \mathbf{z}_0^T\mathbf{C}\mathbf{w}_0=0 $, direct calculation gives
        \begin{equation*}
            \Re \psi(\iota^{-1}(\mathbf{w})) = \Re \lambda, \quad \Im \psi(\iota^{-1}(\mathbf{w})) = \Im \lambda.
        \end{equation*}
        Thus $ \lambda=\psi(\iota^{-1}(\mathbf{w})) $, proving
        \begin{equation*}
            \left\{ \lambda \in \C : \Re \lambda \le - \frac{(\Im \lambda)^2}{2 \boldsymbol{\zeta}^T \mathbf{C}^{-1} \boldsymbol{\zeta} } \right\} \subset \psi(\C^d).
        \end{equation*}
 
        Therefore the spectrum is the closed region to the left of the parabola
        \begin{equation*}
            \Re \lambda = - \frac{(\Im \lambda)^2}{2 \boldsymbol{\zeta}^T \mathbf{C}^{-1} \boldsymbol{\zeta} }.
        \end{equation*}
    \end{proof}

    \begin{proof}[Proof of Theorem \ref{theorem-zero-drift-spectral-types}]
        Suppose that $ \cL_* (Y) = \lambda Y $ for some nonzero $ Y \in \dom\cL_* $. Equation \eqref{eq-evolution-Wz-zero-drift-case} gives
        \begin{equation*}
            \me^{t \psi (z)} \chi_Y (z) = \me^{t \lambda} \chi_Y (z), \quad \forall z \in \C^d, \quad \forall t \ge 0.
        \end{equation*}
        Hence $ \chi_Y (z) \neq 0 $ implies $ \psi(z) = \lambda $. Because $ \chi_Y $ is continuous and nonzero somewhere, it is nonzero on a nonempty open set. This would force $ \psi $ to be constant there. However, \eqref{eq-definition-psi-z} gives
        \begin{equation*}
            \Re \psi (z) = - \frac{1}{2} \mathbf{z}^T \mathbf{C} \mathbf{z}
        \end{equation*}
        with $ \mathbf{C} > 0 $. Therefore, $ \sigma_{\mathrm{p}} (\cL_*) = \varnothing $.

        The proof of Theorem \ref{theorem-zero-drift-case-spectrum} shows that $ \psi(z)-\cL_* $ has nondense range for every $ z $. Since it is injective, the entire spectrum is residual.
    \end{proof}

    \subsection{The nonzero periodic generator}

    \begin{proof}[Proof of Theorem \ref{theorem-spectrum-predual-generator-periodic-critical-drift}]
        Set $ r=\tau $. By Lemma \ref{lemma-spectrum-predual-semigroup-at-tau}, $ \sigma(\cT_{*\tau})=\overline{\phi_\tau(\C^d)} $. Spectral inclusion \cite[Chapter IV, Theorem 3.6]{EngelNagel2000} and Lemma \ref{lemma-closure-of-phi-tau} show that, if $ \lambda\in\sigma(\cL_*) $, then
        \begin{equation*}
            \me^{\tau \lambda} = \phi_\tau (z) = \exp{-\frac{1}{2} \mathbf{z}^T \mathbf{S}_\tau \mathbf{z} + \mi \mathbf{b}_\tau^T \mathbf{z} }
        \end{equation*}
        for some $ z \in \C^d $, which implies that 
        \begin{equation*}
            \lambda = -\frac{1}{2 \tau} \mathbf{z}^T \mathbf{S}_\tau \mathbf{z} + \frac{\mi}{\tau} ( \mathbf{b}_\tau^T \mathbf{z} + 2 \pi n )
        \end{equation*}
        for some $ n\in\Z $. This proves one inclusion.

        For the converse, fix $ z\in\C^d $ and $ n\in\Z $, and set
        \begin{equation*}
            \mu := - \frac{1}{2 \tau} \mathbf{z}^T \mathbf{S}_\tau \mathbf{z} + \frac{\mi}{\tau} ( \mathbf{b}_\tau^T \mathbf{z} + 2 \pi n ).
        \end{equation*}
        Then
        \begin{equation*}
            \me^{\mu \tau} = \exp{ -\frac{1}{2} \mathbf{z}^T \mathbf{S}_\tau \mathbf{z} + \mi \mathbf{b}_\tau^T \mathbf{z} + 2 \pi \mi n } = \phi_\tau (z),
        \end{equation*}
        so \eqref{eq-action-of-GQMS-at-time-tau} becomes
        \begin{equation*}
            \cT_{\tau} (W(z)) = \phi_\tau (z) W(z) = \me^{\mu \tau} W(z).
        \end{equation*}
        Define the bounded operator
        \begin{equation} \label{eq-definition-eigenvector-X-mu-z}
            X_{\tau, z, n} := \int_0^\tau \me^{-\mu s} \cT_s (W(z)) \md s.
        \end{equation}
        The integral is understood in the $ \sigma $-weak sense: for every $ Y\in\cB_1(\mathsf{h}) $,
        \begin{equation*}
            \Tr (Y X_{\tau, z, n}) = \int_0^\tau \me^{-\mu s} \Tr( Y \cT_s (W(z)) ) \md s.
        \end{equation*}
        It satisfies
        \begin{equation*}
            \norm{X_{\tau, z, n}} \le \int_0^\tau \me^{-s \Re \mu} \md s < \infty.
        \end{equation*}
        The integrand is $ \tau $-periodic, because
        \begin{equation} \label{eq-X-integrand-tau-periodic}
            \me^{-\mu (s + \tau)} \cT_{s + \tau} (W(z)) = \me^{-\mu (s + \tau)} \cT_s \cT_\tau (W(z)) = \me^{-\mu (s + \tau)} \me^{\mu \tau} \cT_s (W(z)) = \me^{-\mu s} \cT_s (W(z)). 
        \end{equation}
        Normality of $ \cT_t $, the substitution $ r=t+s $, and \eqref{eq-X-integrand-tau-periodic} give
        \begin{equation*}
            \cT_t (X_{\tau, z, n}) = \int_0^\tau \me^{-\mu s} \cT_{t + s} (W(z)) \md s = \me^{\mu t} \int_t^{\tau + t} \me^{-\mu r} \cT_r (W(z)) \md r = \me^{\mu t} X_{\tau, z, n}.
        \end{equation*}
        Thus $ \cL ( X_{\tau, z, n} ) = \mu X_{\tau, z, n} $.
    
        The operator $ X_{\tau,z,n} $ may vanish for some $ z $. Lemmas \ref{lemma-density-D-tau} and \ref{lemma-nonzero-X-mu-z} show that it is nonzero on the dense set $ \cD_\tau $ defined in \eqref{eq-definitnion-D-tau-set}. Choose $ z_k\in\cD_\tau $ with $ z_k\to z $, keep $ n $ fixed, and set
        \begin{equation*}
            \mu_k = - \frac{1}{2 \tau} \mathbf{z}_k^T \mathbf{S}_\tau \mathbf{z}_k + \frac{\mi}{\tau} ( \mathbf{b}_\tau^T \mathbf{z}_k + 2 \pi n ).
        \end{equation*}
        The corresponding $ X_{\tau,z_k,n} $ are nonzero dual eigenoperators. As in the proof of Theorem \ref{theorem-whole-spectrum-strictly-unstable-drift}, the range of $ \mu_k-\cL_* $ is annihilated by a nonzero bounded functional. Consequently,
        \begin{equation*}
            \mu_k \in \sigma(\cL_*), \quad \mu_k \rightarrow - \frac{1}{2 \tau} \mathbf{z}^T \mathbf{S}_\tau \mathbf{z} + \frac{\mi}{\tau} ( \mathbf{b}_\tau^T \mathbf{z} + 2 \pi n ) = \mu.
        \end{equation*}
        Closedness of $ \sigma(\cL_*) $ gives $ \mu\in\sigma(\cL_*) $, completing the proof.
    \end{proof}

    The final two lemmas ensure that the period average \eqref{eq-definition-eigenvector-X-mu-z} gives a nonzero eigenoperator on a dense set of phase-space points. The restriction is needed: if $ z=0 $ and $ n\neq0 $, then
    \begin{equation*}
        \mu = \frac{2 \pi \mi n}{\tau} , \quad W(0) = \1,
    \end{equation*}
    so that 
    \begin{equation*}
        X_{\tau, 0, n} = \int_0^\tau \me^{-2 \pi \mi n s / \tau} \cT_s(\1) \md s =  \frac{\tau}{-2\pi\mi n}( \me^{- 2 \pi \mi n} - 1 ) \1 = 0.
    \end{equation*}
    To avoid shorter or constant orbits, define
    \begin{equation} \label{eq-definitnion-D-tau-set}
        \cD_\tau := \{ z \in \C^d : s \mapsto \me^{s Z} z \text{ is injective on } [0, \tau) \}.
    \end{equation}

    \begin{lemma} \label{lemma-density-D-tau}
        The set $ \cD_\tau $ is dense in $ \C^d $.
    \end{lemma}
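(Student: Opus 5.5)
The plan is to show that the complement of $ \cD_\tau $ is contained in a finite union of proper real subspaces of $ \R^{2d} $. Such a union has empty interior, so its complement is dense.

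First, I use the structure of a periodic matrix. Since $ \me^{\tau\mathbf{Z}}=\1 $, the matrix $ \mathbf{Z} $ is diagonalizable over $ \C $ and its eigenvalues lie in $ (2\pi\mi/\tau)\Z $. Decompose $ \R^{2d}=\bigoplus_{\omega\ge0}E_\omega $, where $ E_0=\ker\mathbf{Z} $ and, for $ \omega>0 $, $ E_\omega $ is the real part of the sum of the eigenspaces for $ \pm\mi\omega $. On $ E_\omega $, the flow is rotation with frequency $ \omega $.

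Next, I characterize points outside $ \cD_\tau $. If $ \me^{s_1\mathbf{Z}}\mathbf{z}=\me^{s_2\mathbf{Z}}\mathbf{z} $ with $ 0\le s_1<s_2<\tau $, then $ \me^{s\mathbf{Z}}\mathbf{z}=\mathbf{z} $ with $ s=s_2-s_1\in(0,\tau) $. So $ \mathbf{z}\notin\cD_\tau $ exactly when $ \mathbf{z} $ has a period $ s\in(0,\tau) $.

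Writing $ \mathbf{z}=\sum_\omega\mathbf{z}_\omega $, a period $ s $ means $ \omega s\in2\pi\Z $ for every $ \omega $ with $ \mathbf{z}_\omega\neq0 $. Hence the set of periods of $ \mathbf{z} $ is the closed subgroup generated by $ \tau_{\mathbf{z}}:=2\pi/\gcd\{\omega\tau/2\pi\cdot(2\pi/\tau)\} $. More concretely, put $ \omega=2\pi k_\omega/\tau $ with $ k_\omega\in\N $. Then the least period of $ \mathbf{z} $ is $ \tau/g(\mathbf{z}) $, where $ g(\mathbf{z}) $ is the gcd of the $ k_\omega $ over the support, with $ \gcd(\emptyset)=0 $. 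When $ g=0 $, every $ s $ is a period. So $ \mathbf{z}\in\cD_\tau $ if and only if $ g(\mathbf{z})=1 $.

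I then cover the bad set. Because $ \tau $ is the least period of the whole flow, the gcd of all occurring $ k_\omega $ is $ 1 $. Otherwise $ \tau/g $ would be a smaller common period, and it would be a genuine period since $ \me^{(\tau/g)\mathbf{Z}}=\1 $ on every $ E_\omega $. If $ g(\mathbf{z})\neq1 $, then either $ g(\mathbf{z})=0 $, or $ g(\mathbf{z}) $ has a prime divisor $ p $. In the first case $ \mathbf{z}\in E_0 $. In the second case $ \mathbf{z} $ lies in $ F_p:=\bigoplus_{p\mid k_\omega}E_\omega $, which contains $ E_0 $ because $ p\mid0 $. Only finitely many primes divide some nonzero $ k_\omega $, so only finitely many $ F_p $ are relevant. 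Each $ F_p $ is a proper subspace, since the overall gcd is $ 1 $ and so some $ k_\omega $ is not divisible by $ p $. Also $ E_0 $ is proper because $ \mathbf{Z}\neq0 $.

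Therefore the complement of $ \cD_\tau $ lies in a finite union of proper subspaces, which is closed and nowhere dense, and $ \cD_\tau $ is dense. The main point requiring care is the step identifying the least period with the gcd over the support. The key fact there is that the period group of $ \mathbf{z} $ is $ \bigcap_{\mathbf{z}_\omega\neq0}(2\pi/\omega)\Z $, and for $ \omega=2\pi k/\tau $ this intersection equals $ (\tau/g)\Z $. I would verify this by Bézout's identity.
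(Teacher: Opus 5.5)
Your argument is correct, but it takes a different route from the paper.

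The paper never decomposes $\mathbf{Z}$. It observes that the return times $\{t : \me^{tZ}z=z\}$ of each point form a closed subgroup of $\R$ containing $\tau$, so this subgroup is $\R$ or $(\tau/n)\Z$ for some integer $n\ge1$. Hence $\C^d\setminus\cD_\tau=\bigcup_{n\ge2}\ker(\me^{(\tau/n)\mathbf{Z}}-\1)$. Minimality of $\tau$ makes each kernel a proper subspace, and Baire's theorem shows that this countable union has empty interior.

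You instead use the normal form of a periodic matrix: it is diagonalizable, with eigenvalues in $(2\pi\mi/\tau)\Z$. You identify the period group of $\mathbf{z}$ as $(\tau/g(\mathbf{z}))\Z$ via B\'ezout. You then cover the bad set by $E_0$ and the finitely many $F_p$, which are proper because $\mathbf{Z}\neq0$ and the integers $k_\omega$ of the occurring frequencies have gcd $1$.

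The paper's version is shorter and needs only the least period and the classification of closed subgroups of $\R$. Yours requires spectral bookkeeping, but it gives a sharper structural statement without Baire: the complement of $\cD_\tau$ lies in a finite union of proper subspaces, so $\cD_\tau$ contains an open dense set. In your notation, $\ker(\me^{(\tau/n)\mathbf{Z}}-\1)=\bigoplus_{n\mid k_\omega}E_\omega$, so the paper's countable union is in fact also a finite union of distinct subspaces.

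One presentational slip: the displayed definition of $\tau_{\mathbf{z}}$ in your fourth paragraph is garbled, since the expression inside the gcd simplifies to $\omega$. The ``more concretely'' formulation that follows is the correct one, and it is what the rest of your argument uses.
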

    \begin{proof}
        Recall that $ \tau $ is the least positive period of $ \me^{t\mathbf{Z}} $. For $ z\in\C^d $, its return times form the closed additive subgroup
        \begin{equation*}
            \mathfrak{t}_z := \{ t \in \R : \me^{t Z} z = z \}.
        \end{equation*}
        Since $ \tau\in\mathfrak{t}_z $, the classification of closed subgroups of $ \R $ gives $ \mathfrak{t}_z=\R $ or $ \mathfrak{t}_z=(\tau/n)\Z $ for some integer $ n\ge1 $. The orbit is injective on $ [0,\tau) $ precisely when $ \mathfrak{t}_z=\tau\Z $. Consequently,
        \begin{equation*}
            \C^d \setminus \cD_\tau = \bigcup_{n=2}^\infty \ker( \me^{(\tau / n) \mathbf{Z}} - \1 ).
        \end{equation*}
        Indeed, a nonconstant orbit with a shorter period belongs to the corresponding kernel; a constant orbit belongs to every kernel. For each $ n \ge 2 $, minimality of $ \tau $ implies 
        \begin{equation*}
            \me^{(\tau / n) \mathbf{Z}} \neq \1.
        \end{equation*}
        Thus each kernel in the union is a proper closed linear subspace of $ \R^{2d} $, and therefore has empty interior. By Baire's theorem, their countable union has empty interior. Its complement is therefore dense, and we have $ \overline{\cD_\tau} = \C^{d} $.
    \end{proof} 

    \begin{lemma} \label{lemma-nonzero-X-mu-z}
        For every $ z \in \cD_\tau $ and $ n \in \Z $, the operator $ X_{\tau, z, n} $ defined in \eqref{eq-definition-eigenvector-X-mu-z} is nonzero.
    \end{lemma}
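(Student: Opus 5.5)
To show $X_{\tau,z,n}\neq0$, I will test it against coherent-type trace-class operators and reduce the claim to the nonvanishing of a one-dimensional Fourier-type integral. For $Y\in\cB_1(\mathsf{h})$, the Weyl action \eqref{eq-gaussian-weyl-action} gives
\begin{equation*}
    \Tr(YX_{\tau,z,n}) = \int_0^\tau \me^{-\mu s}\phi_s(z)\,\chi_Y(\me^{sZ}z)\,\md s .
\end{equation*}
I take $Y$ to be the Gaussian coherent-type operator $\ketbra{e(\alpha)}{e(\alpha)}\me^{-\abs{\alpha}^2}$, and more generally finite linear combinations of such operators. By \eqref{eq-coherent-characteristic-function}, $\chi_Y(w)=\me^{-\abs{w}^2/2}\me^{2\mi\Im\langle\alpha,w\rangle}$. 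Since $\me^{s\mathbf{Z}}$ is periodic with no Jordan blocks, it is similar to an orthogonal flow. Here I would not insist on exact norm preservation; I only need the weight to be smooth and nonvanishing. Set
\begin{equation*}
    h(s):=\me^{-\mu s}\phi_s(z)\me^{-\abs{\me^{sZ}z}^2/2},
\end{equation*}
which is continuous and nowhere zero on $[0,\tau]$. Then
\begin{equation*}
    \Tr(YX_{\tau,z,n})=\int_0^\tau h(s)\,\me^{2\mi\boldsymbol{\alpha}^T\mathbf{J}\me^{s\mathbf{Z}}\mathbf{z}}\,\md s .
\end{equation*}

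Suppose, for contradiction, that $X_{\tau,z,n}=0$. Then this integral vanishes for every $\alpha\in\C^d$, that is, for every real vector $\mathbf{v}=2\mathbf{J}^T\boldsymbol{\alpha}$ ranging over $\R^{2d}$. Let $\nu$ be the complex measure on $\R^{2d}$ obtained by pushing $h(s)\,\md s$ on $[0,\tau)$ forward along the curve $\gamma(s)=\me^{s\mathbf{Z}}\mathbf{z}$. The vanishing says exactly that the Fourier transform $\widehat{\nu}(\mathbf{v})$ is zero for all $\mathbf{v}$, up to sign conventions. Uniqueness of Fourier transforms of finite measures then gives $\nu=0$.

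The key step is to turn $\nu=0$ into a contradiction, and this is where $z\in\cD_\tau$ enters. Because $\gamma$ is injective on $[0,\tau)$ and continuous, its restriction to the compact interval $[0,\tau-\delta]$ is a homeomorphism onto its image. For any continuous $g$ on $[0,\tau)$ with compact support in $[0,\tau)$, I would therefore produce a bounded continuous $G$ on $\R^{2d}$ with $G\circ\gamma=g$ on the support. This uses Tietze extension on the compact image, and it is here that injectivity is needed. Then $\int g\,h\,\md s=\int G\,\md\nu=0$ for all such $g$, which forces $h=0$ almost everywhere on $[0,\tau)$. This contradicts the fact that $h$ is nowhere zero, so $X_{\tau,z,n}\neq0$.

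I expect the main obstacle to be the handling near the endpoint $s=\tau$, since $\gamma(\tau)=\gamma(0)$. Restricting to test functions supported in $[0,\tau-\delta]$ resolves this cleanly. It also remains to check that the measurability and Fubini steps behind the trace formula are justified; these follow from the $\sigma$-weak definition of $X_{\tau,z,n}$, exactly as in \eqref{eq-trace-of-Y-X-f}.
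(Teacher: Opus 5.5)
Your reduction is sound up to $\nu=0$. Testing against coherent states, reading the integral as the Fourier transform of the pushforward $\nu=\gamma_*(h\,\md s|_{[0,\tau)})$, and invoking uniqueness of Fourier transforms of finite complex measures all work. The step that fails as written is the transfer back from $\nu$ to $h$.

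Since $\int G\,\md\nu=\int_0^\tau G(\gamma(s))h(s)\,\md s$, the identity $\int g h\,\md s=\int G\,\md\nu$ needs $G\circ\gamma=g$ on all of $[0,\tau)$, not only on $\supp g$. A Tietze extension of $g\circ\gamma^{-1}$ from the arc $\gamma([0,\tau-\delta])$ leaves $G$ uncontrolled on $\gamma((\tau-\delta,\tau))$. Moreover, if $g(0)\neq0$, no continuous $G$ with $G\circ\gamma=g$ on $[0,\tau)$ exists at all: $\gamma(s)\to\gamma(0)$ as $s\uparrow\tau$, while $g$ vanishes near $\tau$. So supporting $g$ in $[0,\tau-\delta]$ does not remove the endpoint difficulty; that is exactly where it enters.

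The repair is short, and there are two ways to do it.
\begin{itemize}
\item Take $g\in\cC_c((0,\tau))$, which extends to a continuous function on $\T_\tau$. The map $\gamma$ induces a continuous bijection of the compact space $\T_\tau$ onto the orbit $K=\gamma([0,\tau])$, hence a homeomorphism. Then $g\circ\gamma^{-1}$ is continuous on all of $K$, and Tietze gives $G$ with $G\circ\gamma=g$ on $[0,\tau)$.
\item Dispense with test functions. For $0\le a<b<\tau$ the arc $\gamma([a,b])$ is compact, and injectivity on $[0,\tau)$ gives $\{s\in[0,\tau):\gamma(s)\in\gamma([a,b])\}=[a,b]$. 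Hence $0=\nu(\gamma([a,b]))=\int_a^b h\,\md s$, and $h\equiv0$ by continuity.
\end{itemize}

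With either fix, your argument is a genuine alternative to the paper's. The paper uses $\tau$-periodicity of the weight to place it in $\cC(\T_\tau)$. It then proves by Stone--Weierstrass that the span of $s\mapsto\exp{2\mi\Im\langle\alpha,\me^{sZ}z\rangle}$ is dense there, with point separation being exactly $z\in\cD_\tau$, and pairs with $\overline{f}$. Working on the compact circle handles the endpoint identification automatically. Your route replaces Stone--Weierstrass by Fourier uniqueness on $\R^{2d}$ and does not need periodicity of $h$. In exchange, you must state the injectivity argument explicitly on the full orbit.
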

    \begin{proof}
        Suppose that $ X_{\tau,z,n}=0 $ for some $ z\in\cD_\tau $. Evaluation in each normalized coherent state from \eqref{eq-coherent-characteristic-function} gives
        \begin{equation} \label{eq-zero-for-all-alpha-coherent-state}
            0 = \me^{-\abs{\alpha}^2} \Tr ( \ketbra{e(\alpha)}{e(\alpha)} X_{\tau, z, n} ) = \int_0^\tau \me^{-\mu s} \phi_s (z) \exp{ - \frac{1}{2} \abs{\me^{s Z} z}^2 + 2 \mi \Im \langle \alpha, \me^{s Z} z \rangle } \md s.
        \end{equation}
        For brevity, set
        \begin{equation*}
            f (s) := \me^{-\mu s} \phi_s(z) \exp{-\frac{1}{2} \abs{\me^{s Z}z}^2}.
        \end{equation*}
        The function $ f $ is continuous and nowhere zero. It is also $ \tau $-periodic, since $ \phi_{s+\tau}(z)=\phi_s(z)\phi_\tau(z) $ and $ \me^{-\mu\tau}\phi_\tau(z)=1 $. Indeed,
        \begin{equation*}
            f(s + \tau) = \me^{-\mu (s + \tau)} \phi_{s+\tau} (z) \exp{-\frac{1}{2} \abs{\me^{(s + \tau) Z} z}^2} = \me^{-\mu (s + \tau)} \phi_\tau (z) \phi_s (z) \exp{-\frac{1}{2} \abs{\me^{s Z} z}^2} = f(s).
        \end{equation*}
        Thus $ f(s) $ can be regarded as a continuous function on the circle
        \begin{equation*}
            \T_\tau := \R / \tau \Z,
        \end{equation*}
        which is compact in the quotient topology. Consider the functions
        \begin{equation*}
            s \mapsto \exp{ 2 \mi \Im \langle \alpha, \me^{s Z} z \rangle }, \quad \alpha \in \C^d. 
        \end{equation*}
        Their linear span is a self-adjoint subalgebra of $ \cC(\T_\tau) $. It contains the constants by taking $ \alpha=0 $; multiplication corresponds to addition of the parameters, and complex conjugation to changing their signs. It also separates points. Indeed, for distinct $ s,t\in[0,\tau) $, the assumption $ z\in\cD_\tau $ gives $ \me^{sZ}z\neq\me^{tZ}z $. Nondegeneracy of the symplectic form allows a choice of $ \alpha\in\C^d $ such that
        \begin{equation*}
            \exp{ 2 \mi \Im \langle \alpha, \me^{s Z} z \rangle } \neq \exp{ 2 \mi \Im \langle \alpha, \me^{t Z} z \rangle }.
        \end{equation*}
        Therefore, by the complex Stone--Weierstrass theorem, the linear span of these functions is dense in $ \cC(\mathbb T_\tau) $; see \cite[Theorem 7.33]{Rud76}.

        Equation \eqref{eq-zero-for-all-alpha-coherent-state} implies
        \begin{equation*}
            \int_0^\tau f(s) g(s)\,\md s = 0
        \end{equation*}
        for every function $ g $ in this dense subalgebra. Since integration against $ f(s)\, \md s $ is continuous with respect to the uniform norm, the same identity holds for every $ g \in \cC(\T_\tau) $. In particular, taking $ g(s) = \overline{f(s)} $, we have 
        \begin{equation*}
            0 = \int_0^\tau f(s) \overline{f(s)} \md s = \int_0^\tau \abs{f(s)}^2 \md s.
        \end{equation*}
        This is impossible because $ f $ is nowhere zero. Hence $ X_{\tau,z,n}\neq0 $.
    \end{proof}

    \begin{samepage}
    \section*{Acknowledgements}

    ChatGPT assisted with manuscript preparation. The authors take full responsibility for the content.
    \end{samepage}

    \printbibliography

\end{document}